\newtheorem{Proposition}{Proposition}
  \newtheorem{Remark}[Proposition]{Remark}
  \newtheorem{Corollary}[Proposition]{Corollary}
  \newtheorem{Lemma}[Proposition]{Lemma}
  \newtheorem{Theorem}[Proposition]{Theorem}
 \newtheorem{Definition}[Proposition]{Definition}
\def\CC{\mathbb{C}}
 \def\RR{\mathbb{R}}
 \def\NN{\mathbb{N}}
\def\Re{\mathrm{Re\,}}
\def\hb{\hbar}
\def\be{\begin{equation}}
\def\bel{\begin{equation}\label}
\def\ee{\end{equation}}
    \def\la{\langle}
\def\ra{\rangle}
\def\lam{\lambda}
\def\bd{\begin{Definition}}
\def\ed{\end{Definition}}
\def\bp{\begin{Proposition}}
\def\bpl{\begin{Proposition}\label}
\def\ep{\end{Proposition}}
\def\bl{\begin{Lemma}}
\def\el{\end{Lemma}}
\def\bt{\begin{Theorem}}
\def\et{\end{Theorem}}
\def\lam{\lambda}
\def\Bx{\ $\Box$}
 \def\la{\langle}
\def\ra{\rangle}
\def\calS{\mathcal{S}}
 \def\nv{ Y }
\def\al{\alpha}
\def\a{\alpha}
\def\sa{\sqrt{\alpha}}
\def\om{\omega}
\def\Le{\leqslant}
\def\Ge{\geqslant}
\def\R{\mathbb{R}}
\begin{document}

\title[Weber equation as a normal form with applications]{The Weber equation as a normal form with applications to top of the barrier scattering}
\author{Rodica D. Costin, Hyejin Park, Wilhelm Schlag}
\address{Department of Mathematics\\The Ohio State University\\Columbus, OH 43210}
  \email{costin.10@osu.edu}
\address{Department of Mathematics\\The Ohio State University\\Columbus, OH 43210}
\email{ piaohuizhen@gmail.com }
 \address{Department of Mathematics\\5734 South University Avenue\\The University of Chicago\\ Chicago IL 60637
} 
\email{schlag@math.uchicago.edu}

\thanks{The third author was partially supported by the NSF, DMS1160817. The authors would like to thank Ovidiu Costin for very helpful discussions and suggestions. RDC is grateful for the hospitality of the University of Chicago, and WS is grateful for the hospitality of the Ohio State University while collaborating on this project.}

\maketitle

\begin{abstract}

In the paper we revisit the basic problem of tunneling near a nondegenerate global maximum of a potential on the line.
We reduce the semiclassical Schr\"odinger equation to a Weber normal form by means of the Liouville-Green transform.
We show that the diffeomorphism which effects this stretching of the independent variable lies in the same regularity class as the potential
(analytic or infinitely differentiable) with respect to both variables, i.e., space and energy.
We then apply the Weber normal form to the scattering problem for energies near the potential maximum. In particular we obtain
a representation of the scattering matrix which is accurate up to multiplicative factors of the form 1 + o(1). 

\end{abstract}

\section{Introduction}

This paper deals with fine properties of the resolvent and the spectral measure of 
  Schr\"odinger equations
\bel{eqf}
-\hb^2\psi''(\xi)+V(\xi)\psi(\xi)=E\psi(\xi)
\ee
on the line where the potential satisfies the following properties: 
\begin{itemize}
\item  $V\in L^1(\RR)$
\item   $V$ is smooth, $C^\infty$ or analytic (for short $V\in C^v(\RR)$ with $v=\infty$ or $v=\omega$),
\item   $V(\xi)$ has a unique absolute maximum, say at $\xi=0$, where $V(\xi)=1-\xi^2+O(\xi^3)$.
\end{itemize} 
 We consider energies close to the {\em top of the potential barrier} $\max V=1$, and we wish to
 obtain accurate representations of the resolvent near this top energy uniformly in small $\hbar$. 
 We cannot do justice to the vast literature devoted to the equation~\eqref{eqf} and its higher-dimensional
 analogues.  For example, see \cite{ ABR, Bleher, BFSZ_REV, BFSZ, BFSZ2, BCD1, BCD2, CdVP1, CdVP2, Dyatlov, GerGri,  HS, HS1, Marz, Ramond, Sjo} 
and references cited there. For the most part, these  papers   deal with the asymptotic law of {\em resonances} in the limit $\hbar\to0$ 
in this setting (with~\cite{Dyatlov} being devoted to resonances generated by Kerr-deSitter black holes), as given by the {\em Bohr-Sommerfeld} 
 quantization condition.  This reduces to studying the asymptotic behavior as $\hbar \to0$ of solutions to the equation $Pu=Eu$ (where $P$ is the left-hand
 side of~\eqref{eqf}, for example) with the spectral parameter $E$ being $O(\hbar)$ close to the potential maximum. The underlying mechanism
 is a tunneling effect near the potential barrier. 
 
 Technically speaking, the methods employed vary, but  involve the analysis of the classical Hamiltonian flow
 near a hyperbolic fixed point, such as $(0,0)$ for the classical symbol $p_0(x,\xi)=\xi^2 + V(x)$, microlocal analysis of the resolvent operator, and complex WKB techniques. 
The latter requires analytic potentials. 
The interest in complex resonances resides inter alia with the fact that they enter into a description of the Schr\"odinger time evolution for long times, but
not infinite times (the finite threshold being the so-called ``Ehrefest time"). 

Here our focus is precisely on dispersive estimates for the Schr\"odinger evolution for {\em all times}, which requires very accurate control of the spectral measure associated with~\eqref{eqf}. 
In the context of the wave equation on a Schwarzschild black hole such dispersive estimates were obtained in~\cite{DSS}.  By means of an angular momentum decomposition~\cite{DSS} reduces
matters to an equation of the form~\eqref{eqf} for each fixed angular momentum. Two issues arise by doing so:  (i) the infinite time control of the evolution for fixed angular momentum (ii) the summation
problem, i.e., being able to sum up the resulting bounds over all angular momenta. 

As for (i), the main energy is as usual $0$, and precise control of the spectral measure is needed both in terms of small energies and the semi-classical parameter $\hbar=\ell^{-1}$ where $\ell$
is the angular momentum. Reference~\cite{CDST} develops this aspect of the theory. 

As for (ii), the summation problem hinges crucially on the fact that the potential $V$ has a nondegenerate maximum. Indeed, if, say $V$ had a trapping well (a local minimum at $x=0$), then
the constants in the estimates obtain for (i) would depend exponentially on some power of $\ell$ and therefore summation, if possible, requires a different approach.  However, the presence of a global nondegenerate 
maximum guarantees that the losses are only
in terms of some fixed power of~$\ell$ and therefore the summation can be carried out.  In~\cite{DSS} the scattering theory near the top of the potential barrier is based on Mourre theory and the propagation 
estimates of~\cite{HSS}. The idea here is that while the maximum energy corresponds to a classically trapping point $x=0, \xi=0$ in phase space, due to the uncertainty principle the basic
Mourre commutator remains positive and so~\cite{HSS} still applies.    These are classical tunneling ideas, see \cite{BCD1, BCD2}. 

At the time \cite{DSS} was being written, a representation of the spectral measure on the level of accuracy as obtained in \cite{CDST} for zero energy, had not yet been obtained. 
And therefore,  Mourre theory near the top of the barrier was used as a way to circumvent this difficulty. 
In this paper we close this gap and give a precise expansion of the resolvent near the potential barrier in the spirit of~\cite{CDST}. 
This allows for a more economical end result in~\cite{DSS}, but we do not write this out here since it only changes the number of angular derivates in the main dispersive estimate. 

In this paper we employ the   classical method of a {\em stretching of the independent variable}, known as Liouville-Green transform. 
This allows us as in \cite{Bleher} to reduce our equation to  the Weber equation which then becomes the leading normal form. 

Our main result concerns the standard {\em scattering matrix} $$\calS(E,\hbar) = \left(\begin{matrix} \calS_{11} & \calS_{12} \\ \calS_{21} & \calS_{22} \end{matrix}\right)$$ 
This matrix relates the incoming and outgoing Jost solutions at $+\infty$ and $-\infty$, respectively. See for example~\cite{Ramond} for the exact 
definition.  Due to the relations 
\[
\calS_{11}=\calS_{22},\qquad \calS_{12} = -\bar\calS_{21} \frac{\calS_{11}}{\bar\calS_{11}}
\]
it suffices in the following theorem to state results for $\calS_{11}, \calS_{21}$.

\bt\label{T1} Consider the Schr\"odinger equations
\eqref{eqf} with the potential $V\in L^1(\RR)$ smooth: $V\in C^v(\RR)$ with $v=\infty$ or $\omega$. 

Assume that $V(\xi)$ has a unique absolute maximum at $\xi=0$ with $V(0)=1, V'(0)=0,\ V''(0)=-2$.

There exist $\delta>0$ 
and $\beta=\beta(E){=O(1-E)}$ of class $C^v$ for $|1-E|<\delta$ so that the following quantities are the dominant behavior of the scattering coefficients in a sense made precise below.

Denote
\bel{valsAW}
A=e^{\pi \beta/(2\hb)},\ \ 
\ \theta=\frac{\beta}{2\hb}\left[ 1+\ln(2\hb/|\beta|)\right]+\arg\Gamma\left(\frac 12+\frac{i\beta}{2\hb}\right)
\ee

For $1-\delta<E\Le 1$ define
\bel{S11W}
{\mathcal{S}_{W,11}}=e^{\frac i{\hb}(I_+(E)+I_-(E))}\, e^{i \theta}\ \frac{1}{\sqrt{1+A^2}}
\ee

\bel{S21W}
{\mathcal{S}_{W,21}}={e^{\frac i{\hb}2I_-(E)}\, e^{i \theta}\ \frac{-iA}{\sqrt{1+A^2}}}
\ee

where $a<0<b$ are the two solutions of $E-V(\xi)=0$, and
\bel{defIp}
I_+(E):=\int_b^{+\infty}\left(\sqrt{E-V(\xi)}-\sqrt{E}\right)d\xi-b\sqrt{E}
\ee
\bel{defIm}
I_-(E):=\int_{-\infty}^a\left(\sqrt{E-V(\xi)}-\sqrt{E}\right)d\xi+a\sqrt{E}
\ee
\bel{defSdeE}
S(E)=\int_a^b\sqrt{V(\xi)-E}\,\, d\xi 
\ee

For $1<E<1+\delta$ define

\bel{S11W*}
\mathcal{S}_{W,11}=e^{ \frac i{\hb}\int_{-\infty}^{+\infty}\left(\sqrt{E-V(\xi)}-\sqrt{E}\right)d\xi}\, e^{i\theta}\ 
\frac{1}{\sqrt{1+A^2}}
\ee
\bel{S21nW}
\mathcal{S}_{W,21}=e^{ \frac{2i}{\hb}\int_{-\infty}^{0}\left(\sqrt{E-V(\xi)}-\sqrt{E}\right)d\xi}\  e^{\frac{i}{\hb}2\gamma^{-1}\phi_\omega}\  e^{i\theta}\ \frac{-iA}{\sqrt{1+A^2}}\ee
with $\gamma$ depending $C^v$ of $\a=1-E$, $\gamma=1+O(\a)$ and $\phi_{\omega}$ has an explicit expression in terms of the Taylor coefficients of the potential $V$ at $\xi=0$.

\

{\bf 1.} If $|1-E|/\hb\lesssim1$ then 
\bel{S11Tn0}
\mathcal{S}_{11}=\mathcal{S}_{W,11} \,(1+\hb\ln\hb\ e_{11}),\ \ \ \ \ \mathcal{S}_{21}=\mathcal{S}_{W,21}\,(1+ \hb\ln\hb \ e_{21})
\ee
where the error terms have the symbol-like behavior: if $\a=1-E$
\bel{symberrT}
|\partial_\a^ke_{ij}|<C_k\hb^{-k} \ \text{for all }k\in\NN
\ee

\

{\bf 2.} If $1-\delta<E<1$ and $h_1:={\hb}/(1-E)\ll 1$ then  
\bel{S11T}
\mathcal{S}_{11}=\mathcal{S}_{W,11} \,(1+h_1e_{11})=e^{\frac i{\hb}(I_+(E)+I_-(E))}\, e^{-S(E)/\hb} \,(1+h_1e'_{11})
\ee
\bel{S21T}
\mathcal{S}_{21}=\mathcal{S}_{W,21}\,(1+h_1e_{21})= -i e^{\frac i{\hb}2I_-(E)} \,(1+h_1e'_{21})
\ee
where the error terms $e_{11},\, e_{21}$ have the symbol-like behavior \eqref{symberrT}. 

\

{\bf 3.} If $1<E<1+\delta$ and $h_3:=\hb/(E-1)\ll 1$ then 
\bel{S11Tn}
\mathcal{S}_{11}=\mathcal{S}_{W,11} \,(1+h_3e_{11})=e^{ \frac i{\hb}\int_{-\infty}^{+\infty}\left(\sqrt{E-V(\xi)}-\sqrt{E}\right)d\xi}\,\left(1+h_3e'_{11}\right)
\ee
and
\bel{S21Tn}
\mathcal{S}_{21}=\mathcal{S}_{W,21}\,(1+h_3e_{21})=-ie^{ \frac{2i}{\hb}\int_{-\infty}^{0}\left(\sqrt{E-V(\xi)}-\sqrt{E}\right)d\xi}\  e^{-\frac{i}{\hb}2\gamma^{-1}\phi_\omega}\  \,\left(1+h_3e'_{21}\right)
\ee

The error terms $e_{11},\, e_{21}$ have the symbol-like behavior \eqref{symberrT}.

\et
%%%%%%%%%%%%%%%%%%%%%%%%%%%%%%%%%%%%%
\begin{figure}[!htb]
\centering
\includegraphics[scale=.3]{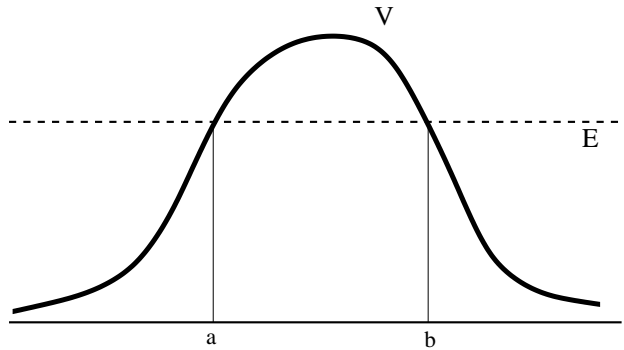}
\caption{$E<1$ with two turning points}
%\label{fig:digraph}
\end{figure}
%%%%%%%%%%%%%%%%%%%%%%%%%%%%%%%%%%%%%%%%%%%%

As already mentioned, 
this theorem has many similarities with Theorem~1 of \cite{Ramond}, and the leading asymptotic behavior when $\hb/|E-1|\ll1$ reduce to the ones there. But there are also some crucial differences relating
to the way in which the error is represented. Ramond's theorem is an asymptotic result which allows for {\em additive errors} of the
form $O( e^{-\frac{\epsilon}{\hbar}})$. Such a representation of the resolvent is not amenable to the analysis of the long-term dispersive
decay of the wave or Schr\"odinger evolutions, as already mentioned above. The emphasis in our work is to represent all needed
quantities such as Jost solutions, scattering and connection coefficients, and the scattering matrix in the form 
\begin{equation}\label{main and error}
\text{main term}\times (1+\text{error})
\end{equation}
where the error is much smaller than~$1$ in size. This in itself is also not sufficient as the underlying oscillatory integrals which
arise in the time-dependent problem, see~\cite{DSS}, require smoothness of the error in the energy parameter  with {\em symbol-type} 
behavior of the derivatives (see the theorem and the subsequent sections for the precise meaning of this). 

Another distinct feature here is that these representations and error bounds hold in a neighborhood $|E-V(0)|+ \hbar \ll1$, and do
not in and of themselves constitute asymptotic results as they hold uniformly in that region. 

Finally, and in contrast to much (but not all) of the previous work, we do not restrict ourselves to analytic potentials but also allow
for the $C^{\infty}$ class. This leads to considerable technical effort with regard to the Liouville-Green transform which reduces
our problem to the Weber equation as a normal form.  This reduction is carried out in the following two sections. 
After that, we derive expressions for the incoming and outgoing Jost solutions of the form~\eqref{main and error}. 
This hinges on a careful perturbative analysis around canonical leading terms which are precisely given by the Airy and Weber equations. 
For the fine properties of the solutions to this perturbative analysis we analyze Volterra iterations as in~\cite{CDST}. 
The appendix of \cite{CDST} contains lemmas which precisely state the type of properties which we need, especially 
with regard to taking derivatives in the energy.  

Once a fundamental system is obtained, such as the outgoing and incoming Jost solutions,
we solve the connection problem and derive the scattering matrix. Of course, explicit representations of the resolvent are also
immediate at that point. 

There are three appendices. Appendix A summarizes properties of ultra-spherical polynomials since they come up in the $C^{\infty}$ analysis. 
Appendix B discusses the Weber equation, its standard fundamental system (parabolic cylinder functions), and the monodromy. 
Finally, Appendix C recalls the main perturbative results from~\cite{CDST} which play an important role here as well.

\section{Liouville transformation}

The modified parabolic cylinder functions will be called here Weber functions for short, and we call the Weber equation the differential equations that they satisfy (see \S\ref{WeberFunc}).

The following proposition provides the key normal form reduction to the Weber equation. 
It is very closely related to that of \cite{Bleher}, but with the main difference that we also need to establish regularity of the change of
variables in the phase space variables $(x,E)$ rather than in $x$ alone. Technically, this is considerably harder in the $C^\infty$ class and 
takes up most of the work in this section. 

\bp\label{Propo1}
For $1-\delta_1<E<1+\delta_1$ there exists an increasing $C^v$ function $\tilde{E}=\tilde{E}(E)$ with $\tilde{E}(1)=1$, and a function $\xi=\xi(y,E)$ defined for $|y|<\delta_2$, one to one, and of class $C^v$  in $(y,E)$ so that
\bel{eqxi}
\left[V(\xi)-E\right]\left(\frac{d\xi}{dy}\right)^2=1-y^2-\tilde{E}
\ee
Furthermore, $\xi(y,E)$ can be extend{ed} to an increasing function of $y$ of class $C^v$ on $\RR$ and
\begin{itemize}
\item for large $|y|$ we have 
\bel{asyxi}
\xi(y)= \pm \frac{1}{2\sqrt{E}}y^2\mp {\frac{1-\tilde E}{2\sqrt{E}}}\ln|y|\ {\pm}{C_\pm}+{o(1)}\qquad  y\to\pm\infty 
\ee
\item if $|V(\xi)|= V_1/\xi^{r+1}(1+o(1))$ for some $r>0$ as $\xi\to \infty$, then the error $o(1)$ in \eqref{asyxi} is $C_1/y^{2r}(1+o(1))$ with $C_1=V_1(2\sqrt{E})^r/(2rE)$.
\item 
if $V(\xi)$ behaves like a symbol, then $\xi(y,E)$ behaves like a symbol in the $y$ variable uniformly in $E$.
\item 
in the analytic case $\tilde{E}$ is unique with the properties described; in the $C^\infty$ case  $\tilde{E}$ is unique only for $E\Le 1$.
\end{itemize}
\ep

Throughout, we say that a  function $f\in C^\infty(\R)$ {\em behaves like a symbol} if each derivative gains one power of $y$ in terms of decay. 
We remark that the previous proposition also carries over the the case of finite regularity, but for the sake of simplicity we work in the infinitely differentiable regime.   

Strictly  speaking, equation \eqref{eqxi} is relevant only to small $y$. Technically speaking, however, it appears advantageous to
view it globally since this avoids partitioning the line in order to localize~\eqref{eqxi}.  

The proof is carried out in the next section.  We shall often write $\xi(y)$ instead of $\xi(y,E)$. 
Denoting $\beta=1-\tilde{E}$ and applying the  Liouville-Green transformation 
$$\psi(\xi(y))=\sqrt{\xi'(y)}{\psi_2}(y),$$ as well as \eqref{eqxi}, equation \eqref{eqf} becomes 
\bel{eqpsi2}
\frac{d^2{\psi_2}}{dy^2}(y)=\hb^{-2}(\beta-y^2){\psi_2(y)}+f(y){\psi_2(y)}
\ee
where {$f(y)=f(y;\beta)$} is the Schwarzian derivative
\begin{equation}\label{eq:S}
f=-\frac12 S[\xi]=\frac 34\left(\frac{\xi''}{\xi'}\right)^2-\frac 12\frac{\xi'''}{\xi'}
\end{equation}

{\bf Remark.} 
 It is easy to see that
\bel{estimf}
f(y)\sim \frac34 y^{-2}\ \ \ (y\to\pm\infty)
\ee
{and, if $\xi(y)$ behaves like a symbol, then so does $f(y)$.}

\section{Proof of Proposition\,\ref{Propo1}}\label{xiofy}

Sections \S\ref{smaly}-\S\ref{Cinf} establish the existence of the solution for small $y$. Its continuation to the whole real line is showed in \S\ref{xicont}. Its asymptotic behavior is established in \S\ref{par:asyxi}, and \S\ref{symbxi} shows symbol behavior.

\subsection{Existence of \texorpdfstring{$\xi(y)\ \text{for small }y$}{Lg} }\label{smaly}
 
If $\xi$ is small enough clearly there exists an increasing function $\xi(x)$ with the same regularity as $V$, $\xi(0)=0,\, \xi'(0)=1$ so that $V(\xi(x))=1-x^2$.
{Equation \eqref{eqxi}} becomes
\bel{eqyx}
(\beta-y^2)\left(\frac{dy}{dx}\right)^2=(\alpha-x^2)\omega(x)^2
\ee
where 
\bel{defomega}
\alpha=1-E,\qquad  \omega(x)=\frac{d\xi}{dx}
\ee

In the analytic case we show that:

\bp\label{prop:2}
Let $\omega$ be a function analytic at $0$, with $\omega(0)=1$. 
\begin{itemize}
\item There exist $\delta,\alpha_0>0$ and a unique function $\beta=\beta(\alpha)=\alpha+O(\alpha^2)$ analytic at $\a=0$, for which  equation (\ref{eqyx}) has a solution $y=y(x;\alpha)$ which is holomorphic in the polydisk $|x|<\delta,\, |\a|<\a_0$. Moreover, $x\mapsto y(x;\al)$ is a conformal map. 

%\b{At $\alpha=0$ the solution $y$ in of class $C^v$ in the variables $(x,\sqrt{\alpha})$.}

\item Further requiring that this solution be close to the identity makes it unique and $y$ has the form $y=x+(\alpha-x^2)w(x;\alpha)$ with $w$ also holomorphic in the polydisk.
\end{itemize}
\ep

Similar results hold in the $C^\infty$ case:

\bp\label{Lemma1}
Let $\omega$ be a function of class $C^\infty$ near $0$, with $\omega(0)=1$.
\begin{itemize}
\item There exist $\delta,\alpha_0>0$ and a unique function $\beta=\beta(\alpha)=\alpha+O(\alpha^2)$, of class $C^\infty([0,\alpha_0])$, for which  equation (\ref{eqyx}) has a solution $y=y(x;\alpha)$ which is  $C^\infty([-\delta,\delta]\times [0,\alpha_0])$. Moreover, $x\mapsto y(x;\al)$ is a diffeomorphism. 
\item 
Further requiring that this solution be close to the identity makes it unique and $y$ has the form $y=x+(\alpha-x^2)w$ with $w$ having the same regularity as $y$.
\item 
Continuing $\beta(\alpha)$ within the class $C^\infty([-\alpha_0,\alpha_0])$ equation (\ref{eqyx}) has a solution $y=y(x;\alpha)$ which is $C^\infty$ on the rectangle $[-\delta,\delta]\times [-\alpha_0,\alpha_0]$ and remains a diffeomorphism in~$x$. 
\end{itemize}
\ep

The proofs of Propositions\,\ref{prop:2} and\,\ref{Lemma1} are developed in \S\ref{findbeta}-\S\ref{Cinf}. 
In addition, some technical material is relegated to Appendix~A.

{\subsection{The existence of $\beta=\beta(\alpha)$}\label{findbeta} 
We now derive a formula for $\beta(\a)$  in both the analytic and the $C^\infty$ cases.

  Let $ \alpha/\beta=\gamma, y=\sqrt{\beta/\alpha}\,\nv$. Equation \eqref{eqyx} becomes
 \begin{equation} \label{eq:eqov}
  (\alpha-\nv^2(x)){\nv '}^2(x)=\gamma^2\omega(x)^2(\alpha-x^2)
\end{equation}

It is easy to see that there are solutions of \eqref{eq:eqov} which are of class $C^v$ at $x=\sa$ and that they must satisfy $Y(\sa)=\pm\sa$. Similarly, there are solutions of \eqref{eq:eqov} which are of class $C^v$ at $x=-\sa$ and they must satisfy $Y(-\sa)=\pm\sa$. For generic $\gamma$ there are no solutions which are of class $C^v$ at both $x=\pm\sa$ (if $\a\ne 0$), but we will show that there exists a unique $\gamma$ (therefore, $\beta$) for which such a solution exists.

Consider $\a,x$ real with $\a>0$ and $|x|<\sa$. In this case  \eqref{eq:eqov} holds if
 \begin{equation} \label{eq:eqoY}
  \sqrt{\alpha-\nv^2(x)}\, {\nv '}(x)=\gamma\omega(x)\sqrt{\alpha-x^2}
\end{equation}
If $Y(x)$ satisfies \eqref{eq:eqov}, then so does $-Y(x)$; the formulation \eqref{eq:eqoY} chooses an increasing solution.

The solution $Y=Y(x)$ of \eqref{eq:eqoY} such that $Y(-\sa)=-\sa$ satisfies
\begin{equation}\label{eqY}
\int_{-\sa}^Y\sqrt{\a-s^2}\, ds=\int_{-\sa}^x\gamma \omega( s)\sqrt{\a-s^2}\, ds
\end{equation}
For this solution to be of class $C^v$ also at $x=\sa$ we must have $Y(\sa)=\sa$ and therefore, for $Y$ to be analytic at both $\sa$ and $-\sa$ we must have
\begin{equation}\label{eqfd}
\int_{-\sa}^{\sa}\sqrt{\a-s^2}\, ds=\int_{-\sa}^{\sa}\gamma \omega(s)\sqrt{\a-s^2}\, ds
\end{equation}
 This determines $\beta(=\alpha\gamma^{-1})$ as
 \bel{defbeta}
 \beta=\a\, \frac{\int_{-\sa}^{\sa} \omega( s) \sqrt{\a-s^2}\, ds}{ \int_{-\sa}^{\sa} \sqrt{\a-s^2}\, ds}=\frac {2}\pi \int_{-\sa}^{\sa} \omega( s) \sqrt{\a-s^2}\, ds
 \ee
 
Note that
\begin{equation}\label{simpbeta}
  \beta(\alpha)=\frac {2}\pi\int_{-\sa}^{\sa} \omega_{\rm even}(s)\sqrt{\alpha-s^2}\,ds=
\frac {4}\pi \int_{0}^{\sa}\omega_{\rm even}(s)\sqrt{\a-s^2} \, ds
\end{equation}
where $\omega_{\rm even}(x)=\frac 12 \omega(x)+\frac 12 \omega(-x)$.

\bl\label{regbeta}
If $\omega$ is analytic in a disk $|x|<\delta_1$ then $\beta(\a)$ defined by \eqref{defbeta} extends analytically to a disk $|\a|<\a_1$.

If $\omega$ is $C^\infty$ on an interval $|x|<\delta_1$ then $\beta(\a)$ defined by \eqref{defbeta} extends $C^\infty$ to an interval $\a\in[0,\a_1)$.
\el

\begin{proof}
Using Taylor polynomials of $\omega_{\rm even}$ at $0$ in \eqref{simpbeta} it is straightforward to show that $\beta$ extends of class $C^v$ at $\a=0$, and it satisfies $\beta(\alpha)=\alpha+O(\alpha^2)$ (since $\omega(0)=1$), and
thus $\gamma(\alpha)=1+O(\alpha)$ is also of class $C^v$ on $[0,\alpha_1)$ for some $\a_1>0$.
\end{proof}

The Taylor coefficients of ${\beta}/{\alpha}$ are obtained from the Taylor coefficients of $\omega(x)$. Indeed, 
if $$\omega(x)=1+\sum_{k=1}^{2n}\omega_kx^k+R_{2n}(x)$$ and $${\beta}/{\alpha}:=\gamma^{-1}=1+\sum_{k=1}^{n}\gamma_{2k}\a^{k}+S_{2n}(x),$$ then 
\bel{gamak}
\gamma_{2k}=\omega_{2k}\, \frac{2}{\pi}\,\int_{-1}^1t^{2k}\sqrt{1-t^2}\, dt\ \ \ \text{for all }k\ge0
\ee

{\bf Remark.} 
The function $\beta$ is nothing other than the classical action $S(E)$  (up to a normalizing factor) between the turning points for $1-E>0$ and small. Indeed, we have 
%It is interesting to note the following expression of $\beta$ in terms of $V(\xi)$: using \eqref{eqyx}, \eqref{defomega} we obtain
\bel{SdeE}
\frac 12 \pi\beta=\int_{-\sqrt{\beta}}^{\sqrt{\beta}}\, \sqrt{\beta-y^2}\, dy=\int_a^b\sqrt{V(\xi)-E}\,\, d\xi =S(E)
\ee
where $b=\xi(\sqrt{\beta}), \ a=\xi(-\sqrt{\beta})$.

\medskip

For $\alpha\Le 0$ we define $\beta(\alpha)$ to be a $C^v$ continuation of the function previously defined for $\a>0$ (unique in the analytic case).

\subsection{Recasting the differential equation in integral form}
In both the analytic and the $C^\infty$ cases we rewrite the differential equation as follows.

Consider first $\a,x$ real with $\a>0$ and $|x|<\sa$
and the form \eqref{eq:eqoY} of equation \eqref{eq:eqov}.

 Denote
\begin{equation}
  \label{defphi}
  \phi(x)=\int_{-\sa}^x\sqrt{\a-s^2}\,ds;\ \  h(x,\alpha) =\int_{-\sa}^x[ \omega(s)-\gamma^{-1}]\sqrt{\a-s^2}ds
\end{equation}

 A formal series expansion suggests a possible solution of \eqref{eq:eqoY}  of the form 
    $Y=x+O(\alpha,x^2)$. It is then natural to substitute $Y=x+v(x)$ in \eqref{eqY}. We then have
    \begin{equation}
      \label{eq:eqphi2}
      \phi(x+v)-\phi(x)=\gamma h(x;\a)
    \end{equation}
    
 Using the identity (Taylor polynomial with integral remainder):
 $$\phi(x+v)-\phi(x)=v\phi'(x)+\int_x^{x+v}(x+v-t)\phi''(t)\, dt$$
 equation \eqref{eq:eqphi2} becomes
 
 \begin{equation}
  \label{eq:eq2}
  v=\frac{h(x;\alpha)}{\sqrt{\a-x^2}}+\frac{1}{\sqrt{\a-x^2}}\int_x^{x+v}\frac{t(x+v-t)}{\sqrt{\a-t^2}}\, dt
\end{equation}
Further substituting $v(x;\a)=(\a-x^2)w(x;\a)$ and changing the variable of integration to $t=x+(\a-x^2)w\sigma$, equation 
\eqref{eq:eq2} becomes
\begin{equation}
  \label{eq:eq4}
  w=\gamma \, u(x;\alpha)+w^2\int_0^1\frac{(1-\sigma)[x+(\a-x^2)w\sigma]}{\sqrt{1-2xw\sigma +(x^2-\a)w^2\sigma^2 }}\, d\sigma :=\mathcal{N}(w)
\end{equation}
where $ u(x;\alpha)=(\a-x^2)^{-3/2}{h(x;\alpha)}$, that is, 
\begin{equation}\label{defu}
 u(x;\alpha)={(\a-x^2)^{-3/2}}\, \int_{-\sa}^x[ \omega(s)-\gamma^{-1}]\sqrt{\a-s^2}\,ds
 \end{equation}

Note that, in view of \eqref{eqfd} we also have
\bel{otheru}
 u(x;\alpha)={(\a-x^2)^{-3/2}}\, \int_{\sa}^x[ \omega(s)-\gamma^{-1}]\sqrt{\a-s^2}\,ds
 \end{equation}

 A crucial ingredient in the proof is that $ u(x;\alpha)$ extends of class $C^v$ in a neighborhood of $(0,0)$. The proof of Propositions\,\ref{prop:2} and\,\ref{Lemma1} will be completed by showing that the operator $\mathcal{N}$ in \eqref{eq:eq4} is contractive in appropriate functional spaces. Starting at this point, the analytic case and the $C^\infty$ case will be treated separately, in \S\ref{Analy} and \S\ref{Cinf},  respectively.

\subsection{Completing the proof of Proposition\,\ref{prop:2}}\label{Analy}

\begin{Lemma}\label{R1} Assume $\omega$ is analytic at $0$. Then the function $u(x;\alpha)$ defined by \eqref{defu} extends biholomorphically in a polydisk $|x|<\delta_1$, $|\alpha|<\a_1$.
 \end{Lemma}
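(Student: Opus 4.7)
The obstructions to holomorphicity of $u(x;\alpha)$ in \eqref{defu} come from three sources: the branch points of $\sqrt{\alpha - s^2}$ at $s = \pm\sqrt{\alpha}$, the non-analytic lower limit $-\sqrt{\alpha}$, and the factor $(\alpha - x^2)^{-3/2}$ in front. My plan is to expose the cancellation of all three via an expansion of the integrand in Chebyshev polynomials of the second kind, producing an explicit representation of $u$ in which analyticity is manifest.

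First I would rescale $s = \sqrt{\alpha}\,t$, set $\tau = x/\sqrt{\alpha}$, and expand
\[
\omega(\sqrt{\alpha}\,t) - \gamma^{-1}(\alpha) = \sum_{n\ge 1} b_n(\alpha)\, U_n(t), \qquad b_n(\alpha) = \frac{2}{\pi}\int_{-1}^{1}\omega(\sqrt{\alpha}\,t)\,U_n(t)\sqrt{1-t^2}\,dt,
\]
the $n=0$ coefficient being absorbed into $\gamma^{-1}(\alpha)$ precisely by the defining relation \eqref{defbeta}. The key computation is the antiderivative identity
\[
\int_{-1}^{\tau} U_n(t)\sqrt{1-t^2}\,dt = (1-\tau^2)^{3/2}\, W_n(\tau),\qquad n\ge 1,
\]
with $W_n$ a polynomial of degree $n-1$, which I would verify by writing $U_n(\cos\theta)\sin\theta = \sin((n+1)\theta)$, computing the naive antiderivative $V_n(\tau) = -\tfrac{1}{2}[U_{n-1}(\tau)/n - U_{n+1}(\tau)/(n+2)]\sqrt{1-\tau^2}$, and checking $V_n/\sqrt{1-\tau^2}$ vanishes at $\tau=\pm 1$ (so that $V_n$ carries an extra factor of $(1-\tau^2)$ beyond the naive $(1-\tau^2)^{1/2}$). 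Plugging back in, the $(1-\tau^2)^{3/2}$ cancels with $(\alpha - x^2)^{3/2} = \alpha^{3/2}(1-\tau^2)^{3/2}$, leaving
\[
u(x;\alpha) = \alpha^{-1/2}\sum_{n\ge 1} b_n(\alpha)\, W_n(x/\sqrt{\alpha}).
\]

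Next I would check joint analyticity of each summand and then convergence. Expanding $\omega(\sqrt{\alpha}\,t) = \sum_m \omega_m \alpha^{m/2}t^m$ and using the orthogonality $\int_{-1}^1 t^m U_n(t)\sqrt{1-t^2}\,dt = 0$ for $m < n$, I obtain $b_n(\alpha) = \alpha^{n/2}\tilde b_n(\alpha)$ with $\tilde b_n$ analytic in $\alpha$. The parity of $W_n$ --- odd for even $n$, even for odd $n$, degree $\le n-1$ --- pairs exactly with that of $b_n$ so that the half-integer powers of $\alpha$ cancel, making each $n$-summand a polynomial in $x$ of degree $\le n-1$ with analytic-in-$\alpha$ coefficients. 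For convergence, the analyticity of $\omega$ on $|x|<\delta_1$ gives $|\omega_m|\le C\rho^{-m}$ and hence, by standard Bernstein-ellipse estimates, $|\tilde b_n(\alpha)|\le C'\rho_1^{-n}$ on $|\alpha|\le\alpha_1$; combined with explicit polynomial-in-$n$ times exponential bounds on the coefficients of $W_n$ read off from the formula for $V_n$, this yields absolute convergence of the series and the desired Taylor expansion of $u$ in a polydisk.

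The hardest part will be reconciling the exponential growth $|w_{n,j}|\le C\cdot 2^n$ of the Chebyshev coefficients of $W_n$ with the geometric decay of $b_n$; I would handle this by directly estimating the Taylor coefficients $a_{jk}$ of $u$, which via the explicit formula reduce to a finite sum over $n$ of products $|w_{n,j}||\tilde b_{n,k}|$, and by choosing $\delta_1,\alpha_1$ small enough that the effective Chebyshev decay rate beats the factor of $2$. Since $\omega$ is analytic on a disk of some radius $R_0$, this restricts one to a polydisk with $\delta_1\lesssim R_0/2$, $\alpha_1\lesssim R_0^2/4$, which is harmless for the subsequent contraction-mapping step.
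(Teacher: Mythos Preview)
Your approach is correct but differs substantially from the paper's analytic proof. The paper argues as follows: since $u$ satisfies the linear ODE $(\alpha-x^2)u'-3xu=\omega(x)-\gamma^{-1}$, it is automatically analytic at every regular point, i.e., off the variety $\alpha=x^2$. Analyticity across $x=\pm\sqrt{\alpha}$ (for $\alpha\ne 0$) is obtained by observing that near these points $u$ has the form $\zeta^{-3/2}\int_0^\zeta f(t)t^{1/2}\,dt$ with $f$ analytic, which is analytic at $\zeta=0$; the alternate representation \eqref{otheru} handles $x=+\sqrt{\alpha}$. Analyticity in $\alpha$ across the same variety is checked by a rescaling $s=at$, and points $(x,0)$ with $x\ne 0$ are again regular for the ODE. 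Finally, analyticity at the origin $(0,0)$ follows from Hartogs' extension theorem, since $u$ is already analytic on a punctured polydisk.

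What you do instead is the Gegenbauer--Chebyshev expansion that the paper reserves for the $C^\infty$ case (\S\ref{Pfan00}), but pushed to full convergence rather than Taylor-polynomial-plus-remainder. Your $W_n$ is exactly $-\tfrac{2}{n(n+2)}C^{(2)}_{n-1}$ by \eqref{cintU}, and your parity observation is the same mechanism the paper exploits in \eqref{exanegP}--\eqref{exanegJ}. The gain of the paper's route in the analytic setting is brevity and conceptual cleanliness: no convergence estimate is needed, and Hartogs does the hard work at the origin. The gain of your route is an explicit series for $u$ and a uniform treatment that does not single out the origin; it also makes clear why the same machinery adapts to $C^\infty$. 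Your convergence sketch is essentially right---with $|\tilde b_n(\alpha)|\lesssim (n{+}1)R_0^{-n}$ from the Taylor bound on $\omega$ and $|w_{n,j}|\lesssim 2^n$ from the leading coefficient of $C^{(2)}_{n-1}$, one needs $\delta_1,\sqrt{\alpha_1}<R_0/2$, exactly as you anticipate---though the details you defer would need to be written out carefully.
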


\begin{proof}
Let $\delta_1,\a_1$ be small enough so that $\omega(x)$ and $\gamma^{-1}(\a)$ be analytic in the polydisk.
%Expanding $\omega(x)=1+\sum_{n\geq 1}\omega_{2n}x^{2n}+\omega_{odd}(x)$ and $\gamma^{-1}=1+\sum_{n\geq 1}\omega_{2n}\a^{n}$ we have

Note that the function $u$, initially defined for $\a>0$ and $-\sa<x<\sa$, satisfies the linear non-homogeneous equation
\begin{equation}
  \label{lineq}
  (\a-x^2)u'-3xu=\omega(x)-\gamma^{-1}
  \end{equation}
with coefficients depending analytically on the variable $x$ and the parameter $\a$. Therefore its solutions are analytic at all the regular points of the equation, namely at all $(x,\a)$ with $\a-x^2\ne 0$. Therefore \eqref{defu} extends homomorphically in the polydisk, outside the variety $\a-x^2=0$.

Note that functions of the type 
\begin{equation}\label{funtype}
F(\zeta)=\zeta^{-3/2}\int_0^\zeta f(t)\,t^{1/2}d\zeta 
\end{equation}
are analytic at $\zeta=0$ if $f(\zeta)$ is, as it is easily seen by a Taylor expansion of $f$ at $\zeta=0$. Therefore $u(x,\a)$ is analytic in $x$ at $x=-\sa$ (for $\a\ne 0$). Using \eqref{otheru} it follows that $u(x,\a)$ is analytic in $x$ also at $x=\sa$.

To show analyticity in $a=\sa$ at points with $\a=x^2\ne0$ we first clarify the analytic continuation of the formula \eqref{defu} to the complex domain. For $\sa>0$, $a=\sa$, $|x|<a$ changing the variable of integration to $s=at$  we obtain
\begin{multline}
u(x,\alpha):=\tilde{u}(x;a)\\
=\left(1-\frac xa\right)^{-3/2}\left(1+\frac xa\right)^{-3/2}\int_{-1}^{x/a}\frac{\omega(ta)-\gamma^{-1}(a^2)}{a}\sqrt{1-t}\,\sqrt{1+t}\, dt \\
:=\left(1-z\right)^{-3/2}\left(1+z\right)^{-3/2}\int_{-1}^{z}\frac{\omega(ta)-\gamma^{-1}(a^2)}{a}\sqrt{1-t}\sqrt{1+t}\, dt \\
:=\tilde{\tilde{u}}(z,a)\ \ \text{where }z=\frac xa
\end{multline}
which we can continue to complex $(z,a)$. This function is manifestly analytic in $a$ (recall that $\omega(0)=1=\gamma(0)$) and it is analytic in $z$, including at $z=-1$ (it has the form \eqref{funtype} for $\zeta=z+1$) and, using \eqref{otheru}, also at $z=1$. Since we showed analyticity in $x$, it follows that $\tilde{u}(x;a)$ is analytic in $a$ for $a\ne 0$.

The function  $\tilde{u}(x;a)$ is even in $a$ (by \eqref{eqfd}), and therefore $u(x,\alpha)$ is analytic in $\alpha=a^2$ for $\a\ne 0$. 

Points $(x,0)$ with $x\ne 0$ are also regular points of the equation \eqref{lineq}, therefore solutions are analytic at these points.

It only remains to show analyticity of $u(x,\a)$ at $(0,0)$. This holds
by Hartog's extension theorem, since $u(x;\alpha)$ is analytic in a punctured polydisk.
\end{proof}

Proposition~\ref{prop:2} now follows from the following result.

\begin{Lemma}\label{L1}
Consider the Banach space $\mathcal{B}$ of functions analytic in the polydisk  $|x|<\delta,|\a|<\a_0$, continuous up to the boundary, with the sup norm.

There exist $\delta,\a_0$ small enough and $R>0$ so that the operator $\mathcal{N}$ defined by \eqref{eq:eq4} leaves invariant the ball of radius $R$, $\mathcal{B}_R\subset \mathcal{B}$ and it is a contraction there.

As a consequence \eqref{eq:eq4} has a unique solution in  $\mathcal{B}_R$.
 \end{Lemma}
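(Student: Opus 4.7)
\medskip

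\noindent\textbf{Proof proposal.} The plan is to invoke the Banach fixed point theorem for $\mathcal{N}$ on $\mathcal{B}_R$. First I would record the standing bounds. By Lemma~\ref{R1}, there exist $\delta_1,\alpha_1>0$ and $M_0<\infty$ such that $|u(x;\alpha)|\le M_0$ on the polydisk $\{|x|\le\delta_1,|\alpha|\le\alpha_1\}$; by Lemma~\ref{regbeta}, $\gamma^{-1}=1+O(\alpha)$ is bounded there as well, so that $|\gamma u|\le M$ for some $M$ independent of $\delta\le\delta_1,\alpha_0\le\alpha_1$. Set $R:=4M$. The key geometric observation is that for $|x|\le\delta$, $|\alpha|\le\alpha_0$, $|w|\le 2R$, $\sigma\in[0,1]$,
\[
\bigl|\,2xw\sigma-(x^2-\alpha)w^2\sigma^2\,\bigr|\ \le\ 2\delta\cdot 2R+(\delta^2+\alpha_0)(2R)^2\ \le\ \tfrac12,
\]
once $\delta$ and $\alpha_0$ are sufficiently small. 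Consequently the quantity $Q:=1-2xw\sigma+(x^2-\alpha)w^2\sigma^2$ lies in the disk $\{|Q-1|\le 1/2\}$, so $\sqrt{Q}$ is holomorphic in $(x,\alpha,w)$ with $|\sqrt{Q}|\ge 1/\sqrt{2}$. This shows that the integrand in \eqref{eq:eq4} is jointly holomorphic in $(x,\alpha,w)$, uniformly in $\sigma\in[0,1]$; integration in $\sigma$ preserves holomorphicity and continuity up to the closed polydisk, so $\mathcal{N}$ maps $\mathcal{B}$ into $\mathcal{B}$.

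Next I would verify invariance of the ball $\mathcal{B}_R$. Write $\mathcal{N}(w)=\gamma u+w^2 K(x,\alpha,w)$, where $K$ denotes the integral in \eqref{eq:eq4}. Using the bound on $|\sqrt{Q}|$ and the estimate $|x+(\alpha-x^2)w\sigma|\le\delta+(\alpha_0+\delta^2)R$, one gets $|K|\le C_0(\delta+\alpha_0)$ on $\mathcal{B}_R$ for a constant $C_0$ depending on $R$. Therefore
\[
\|\mathcal{N}(w)\|_\infty\ \le\ M+R^2C_0(\delta+\alpha_0)\ \le\ \tfrac{R}{4}+\tfrac{R}{4}\ \le\ R,
\]
after shrinking $\delta,\alpha_0$ so that $R^2 C_0(\delta+\alpha_0)\le R/4$ (note the choice $R=4M$ gives $M\le R/4$). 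Thus $\mathcal{N}(\mathcal{B}_R)\subset\mathcal{B}_R$.

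For the contraction estimate, decompose
\[
\mathcal{N}(w_1)-\mathcal{N}(w_2)=(w_1^2-w_2^2)K(\cdot,w_1)+w_2^2\bigl[K(\cdot,w_1)-K(\cdot,w_2)\bigr].
\]
Since $K$ is holomorphic in $w$ on $|w|\le 2R$ with $\sup|K|\le C_0(\delta+\alpha_0)$, Cauchy's inequality gives $|\partial_w K|\le C_0(\delta+\alpha_0)/R$ on $|w|\le R$, whence by the mean value theorem $|K(\cdot,w_1)-K(\cdot,w_2)|\le C_0(\delta+\alpha_0)R^{-1}\|w_1-w_2\|$. Combining with $|w_1^2-w_2^2|\le 2R\|w_1-w_2\|$ yields
\[
\|\mathcal{N}(w_1)-\mathcal{N}(w_2)\|_\infty\ \le\ \bigl(2R\cdot C_0(\delta+\alpha_0)+R\cdot C_0(\delta+\alpha_0)\bigr)\|w_1-w_2\|_\infty,
\]
which is $\le\tfrac12\|w_1-w_2\|_\infty$ after a further shrinking of $\delta,\alpha_0$. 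The Banach fixed point theorem then produces the unique fixed point $w\in\mathcal{B}_R$ of $\mathcal{N}$, which is the unique solution of \eqref{eq:eq4} in $\mathcal{B}_R$.

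The only real subtlety is the one flagged above: arranging that $Q$ avoid the branch cut of the square root uniformly in $(x,\alpha,w,\sigma)$ on the closed polydisk times $|w|\le 2R$ times $[0,1]$. Once this is secured, the nonlinearity $\mathcal{N}$ is manifestly holomorphic with small Lipschitz constant, and the rest is a routine bookkeeping of constants. The choice of parameters proceeds in the order: fix $R$ from $M$, then shrink $\delta$ and $\alpha_0$ to obtain in succession the analyticity of $\sqrt{Q}$, the invariance inequality, and the contraction inequality.
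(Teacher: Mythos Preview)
Your proof is correct and follows essentially the same contraction-mapping route as the paper: bound $\gamma u$ by $M$, keep the radicand $Q=1-2xw\sigma+(x^2-\alpha)w^2\sigma^2$ bounded away from zero by shrinking $\delta,\alpha_0$, then verify invariance and contraction on a ball of radius comparable to $M$. The only tactical difference is that the paper obtains the contraction by writing out $\partial_w\mathcal{N}$ explicitly and bounding it below $1$ pointwise (their condition \eqref{con3}), whereas you get the Lipschitz bound via Cauchy's inequality on the enlarged ball $|w|\le 2R$ together with the decomposition $(w_1^2-w_2^2)K+w_2^2\Delta K$; just make sure your bound $|K|\le C_0(\delta+\alpha_0)$ is stated on $|w|\le 2R$ (not only on $\mathcal{B}_R$) before invoking Cauchy, which your setup for $Q$ already supports.
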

 
\begin{proof}
  Let $$M=\sup\{|\gamma u(x;\a)|\, |\, |x|\leq \delta_1,|\a|\leq\a_1\}$$ (after possibly lowering $\delta_1, \a_1$ so that $u$ is continuous up to the boundary of the polydisk). We will look for $\delta\leq \delta_1,\ \a_0\leq \a_1$, $R>0$ with the properties stated in the lemma.
Let $w\in\mathcal{B}_R$, $|x|\leq\delta$, $|\a|\leq\a_0$.
 
To ensure that $\mathcal{N}$ is well defined we require that $R,\delta,\a_0$ satisfy
 \begin{equation}\label{con1}
 2\delta R+(\delta^2+\a_0)R^2<1-c_0^2\ \ \text{for some }c_0\in(0,1)
 \end{equation}
which implies that $|1-2xw\sigma +(x^2-\a)w^2\sigma^2 |>c_0^2>0$.
 
 Using the estimate $|x+(\a-x^2)w\sigma|\leq \delta +(\delta^2+\a_0)R$ we see that $\mathcal{N}$ leaves the ball $\mathcal{B}_R$ invariant if
 \begin{equation}\label{con2}
 R^2\frac{\delta +(\delta^2+\a_0)R}{c_0}+M\leq R
 \end{equation}
 The contractivity of $\mathcal{N}$ follows if we show that $|\partial\mathcal{N}/\partial w|<1$ for all $|x|\leq\delta,|\a|\leq\a_0, |w|\leq R$, which holds if
 \begin{equation}\label{con3}
 2R\frac{\delta +(\delta^2+\a_0)R}{c_0}+R^2\frac{(\delta^2+\a_0)}{c_0}+R^2\frac{[\delta +(\delta^2+\a_0)R]^2}{c_0^3}<1
 \end{equation}
 Clearly the conditions \eqref{con1}, \eqref{con2}, \eqref{con3} hold  if $\delta,\a_0$ are small enough. For example, let $c_0=1/2$, $R=4M/3$, $\delta=1/(32 R)$, $\a_0=1/(16R^2)$. 
 \end{proof}
 
 \subsubsection{The analytic continuation of $u(x,\a)$ to $\a<0$}\label{ancase}
 This section motivates the choice, in the $C^\infty$ case, of the definition of $u(x,\a)$ for $\a<0$ in \eqref{defumR}.

{Consider $x$ in a small disk centered at $0$ where $\omega$ is analytic. } We re-write \eqref{eq:eqoY} as 
\bel{inteq}
{\int_x^{\nv}{\sqrt{s^2-\a}}\,ds=\int_{\sqrt{\alpha}}^x\, [\gamma\omega(s)-1]\sqrt{s^2-\a}} \, ds \ \ \ \ 
\ee

{Upon analytic continuation of  \eqref{inteq} in $\a$, going counterclockwise along half a circle of radius $|\a|$ to $\a<0$ we have $\sa=i\sqrt{-\a}$ where $\sqrt{-\a}>0$ and we write the right-hand side of \eqref{inteq} as
$$\int_{i\sqrt{-\a}}^x\, [\gamma\omega(s)-1]\sqrt{s^2-\a} \, ds=\phi_\om(\a)+\int_0^x\, [\gamma\omega(s)-1]\sqrt{s^2-\a} \, ds$$
where
\begin{multline}\label{defcphi}
%\bel{defcphi}
\phi_\om(\a)=\int_{i\sqrt{-\a}}^0\, [\gamma\omega(s)-1]\sqrt{s^2-\a} \, ds=i\gamma\int_{\sqrt{-\a}}^0\, [\omega(i\tau)-\gamma^{-1}]\sqrt{(-\a)-\tau^2} \, d\tau\\
=i\a \gamma \int_0^1[\omega(i t\sqrt{-\a})-\gamma^{-1}] \, \sqrt{1-t^2} \, dt 
\end{multline}
Expanding $\omega$ and $\gamma^{-1}$ in Taylor series and using the explicit form of the Taylor coefficients of $\gamma^{-1}$ in \eqref{gamak} the last integral simplifies to
\bel{defcphi2}
\phi_\om(\a)=i\a \gamma \int_0^1\omega_{\text{odd}}(i t\sqrt{-\a}) \, \sqrt{1-t^2} \, dt \\
:=\a \sqrt{-\a}\phi_1(\a)
\ee
where $\phi_1(\a)$ is real-valued for $\a$ real, and analytic.

Note that clockwise continuation of  \eqref{inteq} to $\a<0$ gives the same value: the integral becomes
$$\int_{-i\sqrt{-\a}}^0\, [\gamma\omega(s)-1]\sqrt{s^2-\a} \, ds=-i\a \gamma \int_0^1\omega_{\text{odd}}(-i t\sqrt{-\a}) \, \sqrt{1-t^2} \, dt=\phi_\om(\a)$$
%This shows that \eqref{inteq} is single-valued in $\a$, and noting that both sides have limit for $\a\to 0$, therefore both sides are analytic in $\a$.

\subsection{Completing the proof of Proposition\,\ref{Lemma1}}\label{Cinf}

We  need to extend $\beta(\a)$ and $u(x;\a)$ for $\a<0$.
We first define  $\beta(\a)$ as any $C^\infty$ continuation from $[0,\a_1)$ to $(-\a_1,\a_1)$.  Next, we extend $\phi_{\om}$. 

\begin{Lemma}  \label{oddofodd}
Assume that $\omega$ is $C^{\infty}$. Then the function $\phi_{\om}$ from \eqref{defcphi2} admits a $C^{\infty}$ extension to negative~$\alpha$. 
\end{Lemma}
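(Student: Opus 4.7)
The plan is to find a real-variable reformulation of \eqref{defcphi2}, since in the $C^\infty$ case one cannot evaluate $\omega$ at the complex argument $it\sqrt{-\alpha}$. The main engine is the classical theorem of Whitney and Malgrange: every even $C^\infty$ function of a real variable $y$, defined on a neighborhood of $0$, can be written as $h(y^2)$ for some $C^\infty$ function $h$ on a neighborhood of $0 \in \R$.

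First, I write $\omega_{\text{odd}}(y) := \tfrac12(\omega(y) - \omega(-y))$, which is odd and $C^\infty$ near $0$, so that $q(y) := \omega_{\text{odd}}(y)/y$, extended by $\omega'(0)$ at $y=0$, is an even $C^\infty$ function in a neighborhood of $0$. Applying Whitney--Malgrange produces a $C^\infty$ function $h$ defined near $0 \in \R$ with $q(y) = h(y^2)$. Next I use the identity $(it\sqrt{-\alpha})^2 = t^2\alpha$ to rewrite the integrand of \eqref{defcphi2}: in the analytic case one has the algebraic chain
$$\omega_{\text{odd}}(it\sqrt{-\alpha}) = (it\sqrt{-\alpha})\,q(it\sqrt{-\alpha}) = (it\sqrt{-\alpha})\,h(t^2\alpha),$$
whose right-hand side now uses only real arguments of $h$. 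I \emph{take this as the definition} in the $C^\infty$ setting.

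Substituting into \eqref{defcphi2} yields
$$\phi_\omega(\alpha) \;=\; -\gamma\,\alpha\sqrt{-\alpha}\int_0^1 t\, h(t^2\alpha)\sqrt{1-t^2}\,dt \;=\; \alpha\sqrt{-\alpha}\,\phi_1(\alpha),$$
with $\phi_1(\alpha) := -\gamma(\alpha)\int_0^1 t\, h(t^2\alpha)\sqrt{1-t^2}\,dt$. Since $h$ is $C^\infty$ near $0$ and $t^2\alpha$ stays in its domain for $t \in [0,1]$ and $|\alpha|$ small, repeated differentiation under the integral (using $\partial_\alpha^k h(t^2\alpha) = t^{2k}h^{(k)}(t^2\alpha)$), combined with the $C^\infty$ regularity of $\gamma$ from Lemma~\ref{regbeta}, shows that $\phi_1$ is $C^\infty$ in a full neighborhood of $\alpha = 0$. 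Hence $\phi_\omega(\alpha) = \alpha\sqrt{-\alpha}\phi_1(\alpha)$ is a $C^\infty$ function on $\alpha < 0$, which is the extension sought.

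The main obstacle is the Whitney--Malgrange factorization $q(y) = h(y^2)$: without it there is no clean way to extract the $\sqrt{-\alpha}$ factor while preserving $C^\infty$ dependence in $\alpha$. Once that factorization is in hand, the rest is a change of variable followed by standard differentiation under the integral sign. Consistency with the analytic case is automatic, since there $h$ is simply the convergent even Taylor series of $\omega_{\text{odd}}(y)/y$, and the two constructions of $\phi_\omega$ coincide.
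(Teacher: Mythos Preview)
Your proof is correct and follows essentially the same route as the paper. The paper writes $\omega_{\text{odd}}(x)=xg_\omega(x^2)$ with $g_\omega\in C^\infty([0,\delta^2])$, then takes an arbitrary $C^\infty$ extension $g_\omega^c$ to $[-\delta^2,\delta^2]$ and sets $\omega_{\text{odd}}(ix):=ix\,g_\omega^c(-x^2)$; your $h$ is exactly their $g_\omega^c$, obtained in one step via Whitney's even-function theorem rather than factoring on $[0,\delta^2]$ first and then extending. One small remark: Lemma~\ref{regbeta} only gives $\gamma\in C^\infty([0,\alpha_1))$; the $C^\infty$ extension of $\gamma$ (equivalently of $\beta$) to negative $\alpha$ is a separate choice made just before this lemma in the paper, so your appeal to ``$C^\infty$ regularity of $\gamma$'' should really cite that extension.
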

\begin{proof}
We have 
\bel{notomodd}
\omega_{\text{odd}}(x)=x\tilde{\omega}_{\text{even}}(x)=xg_\omega(x^2)\ \ \text{where } g_\omega\in C^\infty([0,{\delta^{2}}])
\ee
Take $g_\omega^c$ any $C^\infty([{-\delta^{2},\delta^{2}}])$ continuation of $g_\omega$ and define $\omega_{\text{odd}}(ix)=ixg_\omega^c(-x^2)$ which is in $C^\infty([-\delta,\delta])$. Note that $i\omega_{\text{odd}}(ix)$ is real-valued, and its Taylor polynomial approximations at $x=0$ are obtained by replacing $x$ by $ix$ in the Taylor polynomial approximations at $x=0$ of $\omega_{\text{odd}}(x)$, multiplied by $i$.
\end{proof}

We extend $u(x,\a)$ by means of the formulas we obtained in the analytic case:
\bel{defumR}
u(x;\alpha)=\left\{  \begin{array}{lll} 
\ (\alpha-x^2)^{-3/2}\, \int_{\pm\sqrt{\alpha}}^x\, ds\, [\omega(s)-\gamma^{-1}]\sqrt{\alpha-s^2} &\text{for }& -\sqrt{\alpha}<x<\sqrt{\alpha}\\ \\
-(x^2-\alpha)^{-3/2}\, \int_{\sqrt{\alpha}}^x\, ds\, [\omega(s)-\gamma^{-1}]\sqrt{s^2-\alpha} &\text{for }&{ \delta>}x>\sqrt{\alpha}\\ \\
-(x^2-\alpha)^{-3/2}\, \int_{-\sqrt{\alpha}}^x\, ds\, [\omega(s)-\gamma^{-1}]\sqrt{s^2-\alpha} &\text{for }&{-\delta<} x<-\sqrt{\alpha}\\ \\
-(x^2-\alpha)^{-3/2}\,\left\{ \gamma^{-1} \phi_\om(\a)+\int_{0}^x\, ds\, [\omega(s)-\gamma^{-1}]\sqrt{s^2-\alpha} \right\}&\text{for }& \alpha\Le 0
\end{array}\right.
\ee

Note that we have
 \bel{valsu1}
u(0,0)=\frac{- \tilde{\omega}(0)}{3},\ \ u(\pm a,\a)=\frac{\mp 1}{3a}\left[\omega(\pm a)-\gamma^{-1}\right]\ \ \text{if }\a>0,\,a=\sa
\ee
where $\omega(x)=1+x\tilde{\omega}(x)$.

\begin{Lemma}\label{R2} The function $u(x;\alpha)$ defined by \eqref{defumR} is $C^\infty$ in the neighborhood of $(0,0)$ given by $|x|<\delta$, $|\alpha|<\a_0$.
 \end{Lemma}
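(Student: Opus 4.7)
The plan is to reduce the $C^{\infty}$ claim to the analytic setting of Lemma~\ref{R1} via Taylor polynomial approximation in $\omega$. Fix $N\in\NN$ and write $\omega(x)=P_N(x)+r_N(x)$ with $P_N$ the Taylor polynomial of degree $N$ at $x=0$ and $r_N(x)=O(x^{N+1})$. From \eqref{defbeta}, \eqref{defcphi2}, and the four formulas in \eqref{defumR} one sees that each of $\gamma^{-1}(\alpha)$, $\phi_{\om}(\alpha)$, and $u(x,\alpha)$ depends linearly on $\omega$, so there is a clean splitting
\[
u=u_{P_N}+u_{r_N}
\]
where $u_{P_N}$ and $u_{r_N}$ are produced by \eqref{defumR} using $P_N$ and $r_N$ in place of $\omega$, respectively. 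I would then prove that $u_{P_N}$ is analytic at $(0,0)$ and that $u_{r_N}$ is of class $C^{N-c}$ near $(0,0)$ for a fixed constant $c$; since $N$ is arbitrary, this yields $u\in C^{\infty}$.

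\textbf{Polynomial part.} Because $P_N$ is polynomial (hence analytic), Lemma~\ref{R1} applied to $P_N$ produces a function $\widetilde u$ holomorphic in a polydisk around $(0,0)$. By the discussion in \S\ref{ancase}, the analytic continuation of $\widetilde u$ across $\alpha=0$ is represented by the fourth branch of \eqref{defumR} provided $\phi_{\om}$ is taken to be \eqref{defcphi2}. For a polynomial $P_N$ the $C^{\infty}$-continuation prescription in Lemma~\ref{oddofodd} agrees with \eqref{defcphi2}, since both are uniquely determined by the odd Taylor coefficients of $P_N$. Matching across the turning-point walls $x=\pm\sqrt{\alpha}$ is likewise immediate from \eqref{eqfd} together with \eqref{defu}--\eqref{otheru}. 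Hence $u_{P_N}=\widetilde u$ on every branch, and $u_{P_N}$ is analytic near $(0,0)$.

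\textbf{Remainder part.} For $u_{r_N}$ I would exploit $|r_N(x)|\lesssim |x|^{N+1}$ together with the explicit integrals in \eqref{defumR}. In each branch a rescaling of the integration variable (for instance $s=x+\sigma\sqrt{\alpha-x^2}$ in the first branch and $s=(x^{2}+|\alpha|)^{1/2}t$ in the fourth) absorbs the singular prefactor $(\alpha-x^{2})^{\pm 3/2}$ into an integral over a fixed compact interval whose integrand is of size $O\bigl((|x|+\sqrt{|\alpha|})^{N+1}\bigr)$. Differentiation under the integral sign costs at worst one power of $(|x|+\sqrt{|\alpha|})^{-1}$ per derivative, and iterating this estimate shows that $u_{r_N}\in C^{N-c}$ in a neighborhood of $(0,0)$. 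As $N$ is arbitrary, combining the two parts gives $u\in C^{\infty}$.

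\textbf{Main obstacle.} The delicate step is ensuring that in the $C^{\infty}$ setting the ambiguities in the continuations of $\beta(\alpha)$, $\gamma^{-1}(\alpha)$ and $\phi_{\om}(\alpha)$ to $\alpha<0$---each determined only up to a $C^{\infty}$-flat perturbation---are mutually consistent, so that the four branches of \eqref{defumR} patch together smoothly across $\alpha=0$. The choice of $\phi_{\om}$ in Lemma~\ref{oddofodd} is tailor-made for this: it absorbs exactly the non-analytic $\alpha\sqrt{-\alpha}\,\phi_1(\alpha)$ piece that would otherwise obstruct smoothness, so the polynomial approximation argument goes through, while the flat ambiguity in any two choices is harmless as it falls into the $C^{N-c}$ remainder estimate above.
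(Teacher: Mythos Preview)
Your approach is correct and shares the paper's overall strategy: Taylor-expand $\omega$, treat the polynomial part and the remainder separately, and conclude $C^\infty$ by letting the expansion order $N\to\infty$. The one substantive difference is in how the polynomial piece is handled. You observe that $P_N$ is analytic, so Lemma~\ref{R1} already gives $u_{P_N}$ holomorphic near $(0,0)$, and the discussion in \S\ref{ancase} identifies this analytic function with the four-branch formula~\eqref{defumR}. The paper instead works this out explicitly via Gegenbauer polynomials: it expands the Taylor polynomial of $\omega-\gamma^{-1}$ in Chebyshev polynomials $U_n(x/\sqrt{\alpha})$ and uses the identity \eqref{cintUx}, namely $\mathcal{J}_\alpha\bigl[U_n(\cdot/\sqrt{\alpha})\bigr]=\text{const}\cdot C^{(2)}_{n-1}(x/\sqrt{\alpha})/\sqrt{\alpha}$, to see that $\mathcal{J}_\alpha$ sends the polynomial part to an explicit polynomial in $(x,\alpha)$. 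Your route is shorter; the paper's yields a concrete Taylor expansion of $u$ at $(0,0)$. The remainder estimates are the heart of the matter in both accounts: your one-line sketch (rescale the integration variable to a fixed interval, then differentiate losing one power of $|x|+\sqrt{|\alpha|}$ per derivative) is exactly what \S\ref{estrem} carries out by hand, split into several cases according to the sign of $\alpha$ and the position of $x$ relative to $\pm\sqrt{|\alpha|}$. Your ``Main obstacle'' paragraph correctly identifies and disposes of the only genuine subtlety, namely that the non-canonical $C^\infty$ continuations of $\gamma^{-1}$ and $\phi_\omega$ to $\alpha<0$ differ from the analytic ones for $P_N$ only by functions flat to order $\sim N$ at $\alpha=0$, which are absorbed in the remainder.
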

 
 \begin{proof}
All the steps in the proof of Lemma\,\ref{R1} hold in the $C^\infty$ case as well, except for the regularity at $(0,0)$ which is proved in \S\ref{Pfan00}. For the moment let us assume Lemma\,\ref{R2} holds and  complete the proof of Proposition\,\ref{Lemma1}.
\end{proof}

\begin{Lemma}\label{L1C}
Consider the Banach space $\mathcal{B}$ of functions continuous in the rectangle $[-\delta,\delta]\times[-\a_0,\a_0]$ with the sup norm.

There exist $\delta,\a_0$ small enough and $R>0$ so that the operator $\mathcal{N}$ defined by \eqref{eq:eq4} leaves invariant the ball of radius $R$, $\mathcal{B}_R\subset \mathcal{B}$ and satisfies $|\partial_w\mathcal{N}(w)|<1$, therefore it is a contraction.

As a consequence \eqref{eq:eq4} has a unique solution in  $\mathcal{B}_R$.

 \end{Lemma}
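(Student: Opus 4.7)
The plan is to mimic the analytic-case proof of Lemma~\ref{L1} almost verbatim. The three conditions \eqref{con1}, \eqref{con2}, \eqref{con3} that drove the fixed-point argument there are \emph{pointwise} in $(x,\alpha,w)$, so they transfer without change from the space of holomorphic functions on a polydisk to the space of continuous functions on the real rectangle. The only external input needed is that $\gamma u(x;\alpha)$, defined piecewise by \eqref{defumR}, be bounded on a closed neighborhood of $(0,0)$; this is exactly the content of Lemma~\ref{R2}.

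Concretely, I would set
\[
M = \sup\bigl\{|\gamma u(x;\alpha)| : |x|\leq \delta_1,\ |\alpha|\leq \alpha_1\bigr\},
\]
with $\delta_1,\alpha_1$ chosen small enough that the closed rectangle sits inside the $C^\infty$ neighborhood produced by Lemma~\ref{R2}, so that $M<\infty$. For $w\in\mathcal{B}_R$ and $(x,\alpha)\in[-\delta,\delta]\times[-\alpha_0,\alpha_0]$ with $\delta\leq\delta_1$, $\alpha_0\leq\alpha_1$, I would impose the same three pointwise inequalities \eqref{con1}--\eqref{con3}: the first keeps the radicand $1-2xw\sigma+(x^2-\alpha)w^2\sigma^2$ bounded below by $c_0^2$ so that $\mathcal{N}(w)$ is well defined and continuous in $(x,\alpha)$; the second forces $\|\mathcal{N}(w)\|_\infty\leq R$, i.e.\ invariance of $\mathcal{B}_R$; the third bounds the pointwise scalar derivative $|\partial_w\mathcal{N}(w)(x,\alpha)|$ below $1$. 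The sample values $c_0=1/2$, $R=4M/3$, $\delta=1/(32R)$, $\alpha_0=1/(16R^2)$ satisfy all three simultaneously, just as in Lemma~\ref{L1}.

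To convert the pointwise derivative bound into a Lipschitz estimate in the sup norm I would fix $w_1,w_2\in\mathcal{B}_R$ and, at each $(x,\alpha)$, apply the scalar fundamental theorem of calculus to $t\mapsto \mathcal{N}\bigl(tw_1+(1-t)w_2\bigr)(x,\alpha)$; since that map is smooth in the real scalar argument as long as its value stays in $[-R,R]$, integrating the bound from \eqref{con3} along the segment yields
\[
\bigl|\mathcal{N}(w_1)(x,\alpha)-\mathcal{N}(w_2)(x,\alpha)\bigr| \leq L\,\|w_1-w_2\|_\infty, \qquad L<1,
\]
uniformly in $(x,\alpha)$. Hence $\mathcal{N}$ is a strict contraction on the closed ball $\mathcal{B}_R$ of the Banach space $\mathcal{B}$, and Banach's fixed-point theorem produces the unique continuous solution of \eqref{eq:eq4}.

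The main, but mild, obstacle is not in the contraction argument itself but rather in making $M$ finite: the formula \eqref{defumR} has apparently singular denominators $(\alpha-x^2)^{\pm 3/2}$ across the turning-point curve $\alpha=x^2$ and a separate expression involving $\phi_\omega(\alpha)$ for $\alpha\leq 0$. The continuity (in fact $C^\infty$-regularity) of $u$ through these pieces is exactly what Lemma~\ref{R2} asserts. Once Lemma~\ref{R2} is granted, the present contraction argument is mechanical and the pointwise estimates of the analytic proof go through word for word.
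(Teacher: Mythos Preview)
Your proposal is correct and follows exactly the approach the paper takes: the paper's own proof of Lemma~\ref{L1C} is the single sentence ``The same arguments as in the proof of Lemma~\ref{L1} also establish Lemma~\ref{L1C}.'' Your write-up is in fact more detailed than the paper's, since you spell out that the boundedness input $M<\infty$ now comes from Lemma~\ref{R2} rather than Lemma~\ref{R1}, and you make explicit the passage from the pointwise bound $|\partial_w\mathcal{N}(w)|<1$ to the sup-norm Lipschitz estimate via the scalar mean-value argument along the segment $t\mapsto tw_1+(1-t)w_2$.
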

 \begin{proof}
 The same arguments as in the proof of Lemma\,\ref{L1} also establish Lemma\,\ref{L1C}.  
 \end{proof}
 
 \begin{Lemma}\label{L2C}
 The continuous function $w=w(x,\a)$ satisfying $w=\mathcal{N}[w]$ given by Lemma\,\ref{L1C} is in fact $C^\infty$.
 \end{Lemma}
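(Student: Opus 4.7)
The plan is to reduce the lemma to a standard implicit function theorem argument by exploiting the fact that $\mathcal{N}$ acts pointwise. Indeed, the right-hand side of \eqref{eq:eq4} depends on $w$ only through its value at $(x,\alpha)$, so one may write $\mathcal{N}[w](x,\alpha) = G(x,\alpha,w(x,\alpha))$, where
\[
G(x,\alpha,\omega) \;=\; \gamma\, u(x;\alpha) \;+\; \omega^{2}\int_{0}^{1}\frac{(1-\sigma)\bigl[x+(\alpha-x^{2})\omega\sigma\bigr]}{\sqrt{1-2x\omega\sigma+(x^{2}-\alpha)\omega^{2}\sigma^{2}}}\,d\sigma,
\]
and the fixed-point equation becomes the scalar relation $w(x,\alpha)=G(x,\alpha,w(x,\alpha))$.

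First I would check that $G$ is jointly $C^{\infty}$ on a box $[-\delta,\delta]\times[-\alpha_{0},\alpha_{0}]\times[-R',R']$ with $R'>R$ (obtained by slightly shrinking $\delta,\alpha_{0}$ in advance). Smoothness of the $\gamma u(x;\alpha)$ piece in $(x,\alpha)$ is exactly the content of Lemma~\ref{R2}. For the integral term, condition \eqref{con1} keeps the radicand uniformly above $c_{0}^{2}>0$ on the enlarged box, so the integrand is $C^{\infty}$ in $(x,\alpha,\omega,\sigma)$ with all derivatives bounded; hence dominated convergence justifies differentiating under the $\sigma$-integral to any order. Next, set $\Phi(x,\alpha,\omega):=\omega-G(x,\alpha,\omega)$ and apply the $C^{\infty}$ implicit function theorem. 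The quantitative estimate \eqref{con3}, which was precisely what yielded contractivity in Lemma~\ref{L1C}, gives $|\partial_{\omega}G|<1$ uniformly on the enlarged box, so $\partial_{\omega}\Phi$ is bounded away from zero. At any point $(x_{0},\alpha_{0})$ the theorem produces a locally unique $C^{\infty}$ solution of $\Phi=0$, and the uniqueness clause of Lemma~\ref{L1C} forces this local solution to coincide with the continuous fixed point $w$. Covering the compact rectangle by such neighborhoods gives $w\in C^{\infty}([-\delta,\delta]\times[-\alpha_{0},\alpha_{0}])$.

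I do not expect any real obstacle beyond verifying that the inequality $|\partial_{\omega}G|<1$ holds on an open neighborhood of the graph of $w$, rather than merely on it; this is immediate because the bound from \eqref{con3} is already uniform on the full ball $|\omega|\le R$ established in Lemma~\ref{L1C}. In particular, no bootstrap on successive derivatives of $w$ is required: the entire regularity statement is extracted pointwise from the known smoothness of $G$ together with the quantitative invertibility of $\partial_{\omega}\Phi$.
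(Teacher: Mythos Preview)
Your proof is correct and follows essentially the same approach as the paper: define $\Phi(x,\alpha,\omega)=\omega-\mathcal{N}(\omega)$, observe that $\partial_\omega\Phi=1-\partial_\omega\mathcal{N}\ne0$ because Lemma~\ref{L1C} gives $|\partial_\omega\mathcal{N}|<1$, and invoke the implicit function theorem. Your write-up is in fact more detailed than the paper's, which records only the definition of $\Phi$ and the nonvanishing of $\partial_\omega\Phi$; your explicit verification that $G$ is jointly $C^\infty$ (via Lemma~\ref{R2} and differentiation under the integral) and the local-to-global covering argument fill in steps the paper leaves implicit.
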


\begin{proof}
Consider the function $\Phi(x,\a,w)=w-\mathcal{N}(w)$. Lemma\,\ref{L1C} shows that for each $(x,\a)\in[-\delta,\delta]\times[-\a_0,\a_0]$ equation  $\Phi(x,\a,w)=0$ defines implicitly $w=w(x,\a)\in \mathcal{B}_R$. We have $\partial_w\Phi(x,\a,w)=1-\partial_w\mathcal{N}(w)\ne 0$ (since  $|\partial_w\mathcal{N}(w)|<1$ by Lemma\,\ref{L1C}).  
\end{proof}

\subsubsection{Regularity of $u(x,\alpha)$ at $(0,0)$ in the $C^\infty$ case}\label{Pfan00}

We  expand the integrand, and the function $u$, in Gegenbauer polynomials. See \S\ref{regatta}  for an overview of the properties of these polynomials that we use.

We write formula \eqref{defumR} in operator notation as $u(x;\alpha):=\mathcal{J_\a}(\omega-\gamma^{-1})$. Note that for $\alpha>0$ we have
\begin{multline}
\mathcal{J_\a}f(x,\a)=\frac{1}{(\a-x^2)\,|\a-x^2|^{1/2}}\, \int_{-\sa}^x\, ds\, f(s,\a)\sqrt{|\a-s^2|}\\
=\frac{1}{(1-(\frac x\sa)^2)\,|1-(\frac x\sa)^2|^{1/2}}\, \int_{-1}^{x/\sa}\, d\tau\, \frac{f(\sa\tau,\a)}{\sa}\sqrt{|1-\tau^2|}
\end{multline}
smoothly continuable for $x=\pm \sa\ne 0$ if $f=\omega-\gamma^{-1}$.

From \eqref{cintU} we see that, for $\a>0$,
\bel{cintUx}
{C}^{(2)}_{n-1}\left(\frac x\sa\right)=\frac{-n(n+2)}{2}\, \sa\,\mathcal{J_\a}\left[{U}_n\left(\frac x\sa\right)\right]
\ee
For $\a\Le 0$ the operator $\mathcal{J_\a}$ is given by the formula
\bel{defJaneg}
\mathcal{J_\a}f(x,\a)=\frac{-1}{(x^2-\a)^{3/2}}\,\left[ \gamma^{-1}\phi_f(\alpha)+ \int_{0}^x\, ds\, f(s,\a)\sqrt{s^2-\a}\right]
\ee
where $\phi_f$ is defined by \eqref{defcphi2} and Lemma~\ref{oddofodd}.  While the extension of $\phi_{f}$ to negative
arguments is
not unique in the $C^{\infty}$ case, its Taylor expansion at $0$ is unique. And for the regularity at $(0,0)$ that is all we need in this proof.

\subsubsection{Preparatory remarks}\label{preprem}

Henceforth, {\em smooth} means infinitely differentiable. 

\begin{enumerate} 
\item[(A)] Consider the differential equation 
\bel{oeq5}
(\a-x^2)y'(x)-3xy(x)=f(x)
\ee
where $f$ is smooth. 
\begin{enumerate}
\item[(i)] If $\a>0$ the equation has at most one solution which is smooth at both $x=\pm \sa$.
\item[(ii)] If $\a< 0$ there is a unique smooth solution with a specified initial condition at $x=0$: $y(0)=y_0$ for any given $y_0$.
\item[(iii)] For $\al=0$ if $f(0)=0$ then there is a unique solution smooth at $x=0$. 
\item[(iv)] $y=\mathcal{J_\a}f$ solves the equation \eqref{oeq5}.
\end{enumerate}

\medskip

\item[(B)] In equation \eqref{eqCUnt}, changing the independent variable to $t=\frac x\sa$ we obtain that ${C}^{(2)}_{n-1}(\frac x\sa)$ satisfies \eqref{oeq5} with $$f(x)=\frac{-n(n+2)}{2}\, \sa\,{U}_n(\frac x\sa)$$ Since ${C}^{(2)}_{n-1}$ are odd functions if $n$ is even, they vanish at the origin, and by analytic continuation in $\a$ we see that \eqref{cintUx} is valid also for $\a< 0$. If $n$ is odd in that formula, then the choice of branch of $\sqrt{\a}$ is immaterial. For $n$ even we need to choose the same branch on both sides.
\end{enumerate}

\medskip

We split the integrand in formula \eqref{defumR} into even and odd parts: $$\omega(x)-\gamma ^{-1}=[\omega_{even}(x)-\gamma ^{-1}]+\omega_{odd}(x)$$ and show that $u$ is a sum of two $C^\infty$ functions $u=u_e+u_o$ where $u_e=\mathcal{J_\a}(\omega_{even}-\gamma ^{-1})$ and $u_o=\mathcal{J_\a}\omega_{odd}$.
The strategy is to Tayor expand the functions $\omega_{even}-\gamma ^{-1}$, $\omega_{odd}$, respectively,
and then to apply $\mathcal{J_\a}$ to this expansion. The operator takes the polynomial part onto another polynomial in both $x$ and $\alpha$, whereas
the Taylor remainder is estimated by hand.  We will distinguish between $\al>0$, $\al=0$ and $\al<0$ throughout. It is worth noting that the calculations involving polynomials are insensitive to the choice of sign in $\al$, since they only involve analytic functions.

\medskip

{\bf The even part.} 
In this case we work with even functions $f$, therefore $\phi_f(\a)=0$ in \eqref{defJaneg}.
 Let
$$f_{2k+2}(x,\a)=(2k+4)\, (\a-x^2)^{k+1}-\a(2k+3)(\a-x^2)^k$$
Using (A) above  it is easy to check that
\bel{trikform}
\mathcal{J_\a}f_{2k+2}(x,\a)=x(\a-x^2)^k
\ee
since both functions solve \eqref{oeq5} with $f=f_{2k+2}$ and (i) for $\a>0$ they are smooth at $x=\pm \sa$, and (ii) if $\a\Le 0$, both are $0$ at $x=0$.

It follows that for $\a>0$
\bel{fLv}
\int_{-\sa}^{\sa} f_{2k+2}(s,\a)\sqrt{\a-s^2}\, ds=0
\ee
Re-write with $t=\frac x\sa$, 
$$f_{2k+2}(x,\a)=\a^{k+1}\left[ (2k+4)\, (1-x^2/\a)^{k+1}-(2k+3)(1-x^2/\a)^k\right]:=\a^{k+1}\phi_{2k+2}(t)$$
By \eqref{fLv} we have
$\int_{-1}^1\phi_{2k+2}(t)\sqrt{1-t^2}\,dt=0$
which means that the polynomial $\phi_{2k+2}$ belongs to the span of the Chebyshev polynomials of the second kind, i.e., to  $$\mathrm{Sp}(U_2,U_4,\ldots, U_{2k+2})$$ 
Indeed,  $\phi_{2k+2}$ has a zero component along $U_0\equiv 1$, as well as along all odd $U_{j}$.

\subsubsection{Taylor approximation}

Write the Taylor polynomial
\bel{tpo}
\omega_{even}(x)-1=\sum_{k=1}^n\omega_{2k}\,x^{2k}+R_{2n}(x),\ \text{where }R_{2n}(x)=\frac{\omega^{(2n+2)}(\xi)}{(2n+2)!}x^{2n+2},\ \xi\in[0,x]
\ee
 Therefore, for $|x|\Le\delta$ we have $|R_{2n}|\leq C_n|x|^{2n+2}$.
Similarly
\bel{tgo}
\gamma^{-1}(\al)-1=\sum_{k=1}^n\gamma_{2k}\a^{k}+S_{2n}(\al)
\ee
where $|S_{2n}|\leq C_n\a^{n+1}$ for $|\a|<\a_0$.

\medskip

{\em For $\a\ne 0$:}

We have, from \eqref{eqfd}, that 
$$\int_{-1}^1[\omega_{even}(\sa t)-\gamma ^{-1}]\, \sqrt{1-t^2}\,dt=0$$
and by \eqref{gamak} we have
$$\int_{-1}^1(\omega_{2k}t^{2k}-\gamma_{2k})\, \sqrt{1-t^2}\,dt=0,\ \ \  \text{for each }k\Ge 1$$
therefore $\omega_{2k}t^{2k}-\gamma_{2k}$ has zero component along $U_0(t)$: $$\omega_{2k}t^{2k}-\gamma_{2k}=\sum_{j=1}^k c_{2j,2k}U_{2j}(t)$$ implying that 
$$P_{2n}(x,\a)=\sum_{k=1}^n(\omega_{2k}x^{2k}-\gamma_{2k}\a^{k})=\sum_{k=1}^n \a^{k}\sum_{j=1}^k c_{2j,2k}U_{2j}(x/\sa)\ \ \text{is a polynomial in } (x^2,\a)$$ 

Using \eqref{tpo}, \eqref{tgo}, \eqref{eqCUnt} we find that
$$v_e=\mathcal{J_\a}(P_{2n})+\mathcal{J_\a}(R_{2n}-S_{2n})$$
where, by \eqref{cintUx} and (B),
\begin{align*}
\mathcal{J_\a}(P_{2n})(x) &= \sum_{k=1}^n \mathcal{J_\a} (\omega_{2k}x^{2k}-\gamma_{2k}\a^{k}) \\
& =\sum_{k=1}^n \sa^{2k-1}\sum_{j=1}^k c_{2j,2k} \frac{2}{-2j(2j+2)}\,{C}^{(2)}_{2j-1}(x/\sa)
\end{align*}
which is a polynomial in $(x,\a)$ of degree $2n-1$ in $x$ and $n-1$ in $\alpha$ (and real-valued, even for $\a<0$). 
In fact, it is a real-linear combination of the monomials $x^{2\ell-1}\al^{k-\ell}$, $1\le \ell \le k\le n$. 

We will next show that
\bel{remeven}
\mathcal{J_\a}(R_{2n}-S_{2n})=O(x^{2n+1})+O(x^{2n-1}\a)+\ldots+O(|\a|^{n+\frac 12})
\ee
which implies that $u_{\text{o}}$ is of class $C^{n-1}$ at $(0,0)$ for any $n$.

\subsubsection{Estimating the remainder}\label{estrem} 

Ideally, \eqref{remeven} should be obtainable using convergence theorems  of Gegenbauer series. Interestingly though, in our case we use approximations of functions by Gegenbauer polynomials on intervals that exceed the interval of orthogonality where they would classically be known to hold.

\medskip

I. {\em Estimate for $\a>0$.} Denote $a=\sa$.

\medskip

Fix $\lambda\in(\frac 12,1)$. The proof  splits into several cases. Throughout, constants  $C_{n}$ depend only on $n$ and can change from line to line. 

\smallskip 

{\em 1 a: for $|x|\Le \lambda a$} we note that $$\int_{-a}^a P_{2n}\sqrt{a^2-s^2}\,ds=0$$ implies $$0=\int_{-a}^a (R_{2n}-S_{2n})\sqrt{a^2-s^2}\, ds=2\int_{0}^a (R_{2n}-S_{2n})\sqrt{a^2-s^2}\,ds$$ and therefore $$\int_{a}^x (R_{2n}-S_{2n})\sqrt{a^2-s^2}\,ds=\int_{0}^x (R_{2n}-S_{2n})\sqrt{a^2-s^2}\,ds$$ 

Finally, 
\begin{align*}
\big| \mathcal{J_\a}(R_{2n}-S_{2n})(x)\big| &=(a^2-x^2)^{-3/2}\big| \int_{0}^x (R_{2n}-S_{2n})\sqrt{a^2-s^2}\, ds\big|\\
&\Le C_n(a^2-x^2)^{-3/2} \int_{0}^x (|s|^{2n+2}+a^{2n+2})\sqrt{a^2-s^2}\, ds\\
&\Le C_n(1-\lambda^2)^{-3/2} \int_{0}^{x/a} (a^{2n+1}|t|^{2n+2}+a^{2n+1})\sqrt{1-t^2} \,dt\\
&\Le C_n \left[a^{2n+1}(x/a)^{2n+3}+a^{2n+1}(x/a)\right] \Le C_n a^{2n+1}
\end{align*}

\medskip

{\em 1 b: for $ \lambda a<x<a$} we estimate  (changing the integration variables first to $s=at$ and then to $1-t=\tau(1-x/a)$)

\begin{align*}
\big| \mathcal{J_\a}(R_{2n}-S_{2n})(x)\big| &=(a^2-x^2)^{-3/2}\big| \int_{a}^x (R_{2n}-S_{2n})\sqrt{a^2-s^2}\,ds\big|\\
&\Le C_n(1-x/a)^{-3/2} \int_{x/a}^{1} t^{2n+2}a^{2n+1}\sqrt{1-t^2}\,dt\\
&= C_n \int_{0}^{1}  a^{2n+1}\sqrt{2+\tau(1-x/a)} \sqrt{\tau}\,d\tau
 \Le C_{n} a^{2n+1}
\end{align*}

\medskip

{\em 2 b: for $a<x\Le 2 a$} the estimate is similar to the case {\em 1b}. Indeed,
\begin{multline}\label{sttt}
\big| \mathcal{J_\a}(R_{2n}-S_{2n})(x)\big|=(x^2-a^2)^{-3/2}\big| \int_{a}^x (R_{2n}-S_{2n})\sqrt{s^2-a^2}\, ds\big|\\
\Le  C_n(x^2/a^2-1)^{-3/2} \int_{1}^{x/a} (t^{2n+2}a^{2n+1}+a^{2n+1})\sqrt{t^2-1}\, dt\\
= C_n \int_{0}^{1}  a^{2n+1} \sqrt{2+\tau(x/a-1)} \sqrt{\tau}\, d\tau \Le C_n a^{2n+1}
\end{multline}

\medskip

{\em 2 a: for $2a\Le x<\delta$} in the second line of the estimate \eqref{sttt} we use $\sqrt{t^2-1}<t$ and obtain
\begin{align*}
\big| \mathcal{J_\a}(R_{2n}-S_{2n})(x)\big|
&\Le  C_n(x^2/a^2-1)^{-3/2} \int_{1}^{x/a} a^{2n+1}(t^{2n+3}+t)\,dt\\
&\Le C_n  (x^2/a^2-1)^{-3/2}  a^{2n+1} \left[ x^{2n+4}/a^{2n+4}+x^2/a^2\right]\\
&\Le C_n(x^{2n+1}+a^{2n+1})
\end{align*}

\medskip

{\em 3. for $x<0$} the estimates are similar.

\bigskip

II. {\em Estimate for $\a< 0$}  

\medskip

We have
\begin{align*}
\big| \mathcal{J_\a}(R_{2n}-S_{2n})(x)\big|&=(x^2+a^2)^{-3/2}\big| \int_{0}^x (R_{2n}-S_{2n})\sqrt{s^2+a^2}\, ds\big|\\
&\Le  C_n\, \,(x^2+a^2)^{-3/2} \int_{0}^{|x|} (s^{2n+2}+a^{2n+2})\sqrt{s^2+a^2}\, ds\\
&=a^{2n+1}(x^2/a^2+1)^{-3/2}\int_0^{|x|/a}(t^{2n+2}+1)\sqrt{1+t^2}\, dt:=\mathcal{E}_n%\\
%= C_n'\, (x^{2n+2}+a^{2n+2})\frac{1}{a}f(|x|/a)\ \ \text{where }f(t)=(1+t^2)^{-3/2} [\ln(t+\sqrt{1+t^2})+t\sqrt{1+t^2})
\end{align*}
For $|x|/a\Le1$ we have 
$$\mathcal{E}_n\le Ca^{2n+1}\int_0^{|x|/a} 1\, dt\le Ca^{2n}|x|$$
while for  $|x|/a>1$ we estimate
\begin{align*}
\mathcal{E}_n &= a^{2n+1}(x^2/a^2+1)^{-3/2}(\int_0^1+\int_1^{|x|/a})(t^{2n+2}+1)\sqrt{1+t^2}\, dt\\
&\le C\, a^{2n+1}+ C\, a^{2n+1}\frac{a^3}{|x|^3}\int_1^{|x|/a}t^{2n+3}\, dt\\
&\le C\, a^{2n+1}+C\, |x|^{2n+1} 
\end{align*}

\bigskip

{\bf The odd part.} Denote
$$f_{2k+1}(x,\a)=-(2k+3)\, x(\a-x^2)^{k}$$
Using (A)  of \S\ref{preprem} it is easy to check the equality
$$\mathcal{J_\a}f_{2k+1}(x,\a)=(\a-x^2)^k$$
since both functions solve \eqref{oeq5} with $f=f_{2k+1}$ and (i) for $\a>0$ both are smooth at $x=\pm \sa$, and (ii) if $\a\Le 0$, both equal $\a^k$ at $x=0$.

\smallskip

To obtain the Taylor approximations we write the Taylor polynomial 
\begin{equation}
\begin{aligned}
\label{tayodd}
\omega_{odd}(x) &=\sum_{k=1}^n\omega_{2k-1}x^{2k-1}+R_{2n-1}(x),\ \text{where }\\
R_{2n-1}(x) &=\frac{\omega^{(2n+1)}(\xi)}{(2n+1)!}x^{2n+1},\ \xi\in[0,x]
\end{aligned}
\end{equation}
 therefore, for $|x|\Le\delta$ we have $|R_{2n-1}|<C_n|x|^{2n+1}$.

\medskip

{\em I. The case $\a>0$.}

\medskip

Rewriting
$$f_{2k+1}(x,\a)=\sa^{2k+1}\left[ -(2k+3)\frac x\sa(1-x^2/\a)^k\right]:=\sa^{2k+1}\phi_{2k+1}(t),\ \ t=\frac x\sa$$
 the polynomial $\phi_{2k+1}$ belongs to the span of the Chebyshev polynomials of the second kind $$\mathrm{Span}(U_1,U_3,\ldots, U_{2k+1}) $$
Expand the Taylor approximation of $\omega_{odd}(x)$ as
\bel{exanegP}
P_{2n-1}(x):=\sum_{k=1}^n\omega_{2k-1}x^{2k-1}=\sum_{k=1}^n \sa^{2k-1}\sum_{j=1}^k c_{2j-1,2k-1}U_{2j-1}(x/\sa)
\ee

We have
$u_o=\mathcal{J_\a}(P_{2n-1})+\mathcal{J_\a}(R_{2n-1})$
where, by \eqref{cintUx},
\bel{exanegJ}
\mathcal{J_\a}(P_{2n-1})=\sum_{k=1}^n\omega_{2k-1}x^{2k-1}=\sum_{k=1}^n \a^{k-{1}}\sum_{j=1}^k c_{2j-1,2k-1} \frac{2}{-(2j-1)(2j+1)}\,{C}^{(2)}_{2j-2}(x/\sa)
\ee
which is a polynomial in $({x^2,\a})$ of degree $2n-2$ in $x$.  
 
 \medskip
 
 {\em II. The case $\a<0$.}
 
 \medskip
 
 Relations \eqref{exanegP}, \eqref{exanegJ} still hold, by analytic continuation (note continuation clockwise or counterclockwise yield the same result).
 
 \smallskip
 
We will next show that 
\bel{oremeven}
\mathcal{J_\a}(R_{2n-1})=O(x^{2n})+O(x^{2n-2}\a)+\ldots+O(\a^n)
\ee
which implies that $u(x,\a)$ is of class $C^{n-1}$ at $(0,0)$ for all $n$.

\subsubsection{Estimate of the remainder} 

{\em I. The case $\a>0$:} 

\medskip

For $x>\sa$ (and with $a=\sa$) we have
\begin{align*}
\big| \mathcal{J_\a}R_{2n-1}(x)\big| &=(x^2-\a)^{-3/2}\big| \int_{\sa}^x {R}_{2n-1}(s)\sqrt{s^2-\a}\, ds\big|\\
& \Le C_n (x^2-\a)^{-3/2} \int_{\sa}^x  s^{2n+1} \sqrt{s^2-\a}\, ds  \\
&\Le
 C_n\, x^{2n}(x-a)^{-\frac32} \int_{a}^x   \sqrt{s-a}\, ds   
\Le C_{n}\, x^{2n}
\end{align*}
The case $x<-\sa$  is analogous, while for $|x|<\sa$ we obtain $\big| \mathcal{J_\a}R_{2n-1}(x)\big|\Le C_{n}\,\a^{n}$.

\medskip

{\em II. For $\al\Le 0$} denote $a=\sqrt{-\a}$. 

\medskip

Since both $\omega_{odd}$ and its Taylor approximation $P_{2n-1}$ are odd functions, so is the remainder $R_{2n-1}$. With the notation \eqref{notomodd} we have $$R_{2n-1}(x)=x^{2n+1}g_{R_{2n-1}}(x^2)$$ where $g_{R_{2n-1}}\in C^\infty([0,\delta^{1/2}))$. Consider $g_{R_{2n-1}}^c$ a continuation in $ C^\infty((-\delta^{1/2},\delta^{1/2}))$.  We have $|g_{R_{2n-1}}^c|\Le C$ on $[-\delta^{1/2},\delta^{1/2}]$. 

 By \eqref{defcphi2} we have
$$\gamma^{-1}\phi_{R_{2n-1}}(\a)=i\a  \int_0^1   (i t\sqrt{-\a})^{2n+1}g_{R_{2n-1}}(t^2\a) \, \sqrt{1-t^2} \, dt $$
and using \eqref{defJaneg} we have
\begin{multline}
\big| \mathcal{J_\a}g_{R_{2n-1}}(x,\a)\big|=\frac{1}{(x^2-\a)^{3/2}}\,\big| (-1)^n(\sqrt{-\a})^{2n+3}  \int_0^1    t^{2n+1}g_{R_{2n-1}}(t^2\a) \, \sqrt{1-t^2} \, dt\\
+ \int_{0}^x\,s^{2n+1}g_{R_{2n-1}}(s^2)\sqrt{s^2-\a}\, ds\,  \big| \\
=\frac{1}{(x^2+a^2)^{3/2}}\,\big| (-1)^na^{2n+3}  \int_0^1    t^{2n+1}g_{R_{2n-1}}(t^2a^2) \, \sqrt{1-t^2} \, dt
+ \int_{0}^x\,s^{2n+1}g_{R_{2n-1}}(s^2)\sqrt{s^2+a^2}\, ds     \, \big| \\
\lesssim \frac{1}{(x^2+a^2)^{3/2}}\,\left( a^{2n+3}  
+ \int_{0}^{|x|}\,s^{2n+1}\sqrt{s^2+a^2}\, ds     \right)\\
= \frac{1}{(x^2+a^2)^{3/2}}\,\left( a^{2n+3}  +x^{2n} \int_{0}^{|x|}\,s\sqrt{s^2+a^2}\, ds   \right) 
  \lesssim a^{2n} \frac{ a^{3} }{(x^2+a^2)^{3/2}} + |x|^{2n}\\
=  a^{2n}+ |x|^{2n}=|\a|^n+|x|^{2n}
\end{multline}

\medskip

{\bf For $\a=0$:} 

\smallskip

The  operator is given by
$\mathcal{J}_0f=-x^{-3}\int_0^x sf(s,0)\, ds$
and since $\gamma(0)=1$ and $\omega(x)=1+x\tilde{\omega}(x)$ we have
$$\mathcal{J}_0[\omega-\gamma^{-1}]=-x^{-3}\int_0^x \, s^2\tilde{\omega}(s)\,ds$$

Consider the Taylor approximation $\tilde{\omega}=P_n+R_n$ 
where $P_n(x)=\sum_{k=0}^{n-1}\omega_{k+1} x^{k}$ and $|R_n(x)|<C_n|x|^{n+1}$. Then $\mathcal{J}_0[\omega-\gamma^{-1}]=\mathcal{J}_0P_n+\mathcal{J}_0R_n$ where
$$\mathcal{J}_0P_n(x)=\sum_{k=0}^{n-1}\frac{\omega_{k+1}}{k+3}\,{x^k}$$
and 
$$|\mathcal{J}_0R_n(x)|\Le |x|^{-3}\int_0^{|x|}s^{n+3}C_n\, ds\Le C_{n} \, |x|^{n+1}$$

The proof of Lemma\,\ref{L2C} is now complete, and so is that of Proposition\,\ref{Lemma1}. $\Box$

\subsection{Continuation of $\xi(y)$ to $\RR$}\label{xicont}

$\xi(y):=\xi(x(y))$ is a solution of class $C^v$ for $y\in [-\delta_2',\delta_2]$ where $\sqrt{\beta}<\delta_2',\delta_2,$ and outside this interval there are no turning points.

\medskip

{\bf Remark.} The solution given by Proposition \ref{Lemma1} is invertible for small $x$: we can write \eqref{eqyx} as $$\sqrt{y^2-\beta}\,\frac{dy}{dx}=\omega(x)\sqrt{x^2-\alpha}.$$ By \eqref{eq:eqoY} and since $\omega>0$ we have $dy/dx>0$ for $$x\in(-\delta,\delta)\backslash \{\pm\sqrt{\a}\}$$ and $dy/dx\ne0$ at $x=\pm\sqrt{\a}$ since $$d\nv /dx\Big|_{x=\pm\sqrt{\a}}=1\mp2\sqrt{\a}w\ne0$$
Also we know that $x(y)$ is $C^v([-\delta_2',\delta_2])$ for any $\sqrt{\beta}<\delta_2'<y(-\sqrt{\a}+)$ and $\sqrt{\beta}<\delta_2<y(\sqrt{\alpha}-)$ by the inverse function theorem.

\medskip

Since $V(0)=1$ is the unique absolute max of $V$ then $|E-V(\xi)|>\delta_4$ for all $E$ with $|1-E|<{\delta_6}$ and $|\xi|>\delta_5$.

Let $[y_0,y_1)$ be an interval for which the solution $\xi(y)$ of \eqref{eqxi} is defined; we know that such an interval exists and, in fact we can choose $y_1\Ge\delta_2$ and $y_0\in(\sqrt{\beta},\delta_2)$, so that the equation {\eqref{eqxi}} can be normalized as
\bel{eqxi2}
 \frac{d\xi}{dy}=\sqrt{{y^2-\beta}}/ \sqrt{{E-V(\xi)}}
\ee
Since $\xi'(y)>0$ on $[y_0,y_1)$ then $\xi(y)>\xi_0{(:=\xi(y_0))}$ and by letting $\xi_0=\delta_5$, the right-hand side of \eqref{eqxi2} is continuous and bounded for $y\in[y_0,y_1)$ and $\xi\in [\xi_0,\xi(y_1)]$. Therefore $\xi(y_1-0)$ exists and $\xi(y)$ is a solution on $[y_0,y_1]$.% (by Lemma 2.1 sec. I.2, of Hale's book ODEs).

Since $|E-V(\xi(y_1))|\Ge\delta_4$ the solution can be continued beyond $y_1$. This shows that the maximal interval of existence of the solution $\xi(y)$ cannot be bounded, and the solution can be continued for all $y>y_0$.

Global existence for $y<0$ is similar.

Since $\xi'(y)\neq 0$ (by \eqref{eqxi}) then $y\to \xi(y)$ is one-to-one. \Bx

\subsection{Asymptotic behavior}\label{par:asyxi} Formula \eqref{asyxi} is obtained by direct asymptotic analysis on the differential equation \eqref{eqxi2}.

{\bf Remark.} The connection constants $C_\pm$ cannot be determined by this analysis, but they can be linked to $E-V(\xi)$ as follows.

{\em I. For $E<1$ (therefore $\beta>0$)} we have $\xi(\sqrt{\beta})=b, \ \xi(-\sqrt{\beta})=a$. Then, with the notation \eqref{defIp}
we have
\begin{multline}\label{Ipbp}
I_+(E)=
\int_{\sqrt{\beta}}^{+\infty}\left(\sqrt{y^2-\beta}-\sqrt{E}\xi'(y)\right)dy-b\sqrt{E}\\
=\lim_{y\to\infty} \left(\frac12 y\sqrt{y^2-\beta}-\frac 12\beta\ln(y+\sqrt{y^2-\beta})-\sqrt{E}\xi(y)\right)-\left( -\frac 12\beta\ln\sqrt{\beta}-\sqrt{E} \xi(\sqrt{\beta})  \right)-b\sqrt{E}\\ 
=-\frac 14\beta+\frac 14\beta\ln\beta-\frac 12 \beta\ln 2-\sqrt{E}C_+
\end{multline}
where we used \eqref{asyxi}. Similarly, using \eqref{defIm}
we have
\begin{multline}\label{Imbp}
I_-(E)
=-\lim_{y\to-\infty} \left(\frac12 y\sqrt{y^2-\beta}-\frac 12\beta\ln(-y-\sqrt{y^2-\beta})-\sqrt{E}\xi(y)\right)\\
+\left( \frac 12\beta\ln\sqrt{\beta}-\sqrt{E} \xi(-\sqrt{\beta})  \right)+a\sqrt{E}\\
=-\frac 14\beta+\frac 14\beta\ln\beta-\frac 12 \beta\ln 2 {-}\sqrt{E}C_-
\end{multline}

{\em II. For $E=1$ (therefore $\beta=0$)} we have $\xi(0)=0$, therefore $a=b=0$ and a calculation similar to the above yields \eqref{Ipbp}, \eqref{Imbp} in the limit $\beta\to 0+$.

{\em III. For $E>1$ (therefore $\beta<0$)} we define
\bel{defIpbn}
I_+(E):=\int_0^{+\infty}\left(\sqrt{E-V(\xi)}-\sqrt{E}\right)d\xi,\ \ I_-(E):=\int_{-\infty}^0\left(\sqrt{E-V(\xi)}-\sqrt{E}\right)d\xi
\ee
 We proceed as in the case $\beta>0$ (noting that $\xi=0$ when $x=0$):
 \begin{multline}
I_+(E)=\lim_{\xi\to \infty}\left[ \int_0^{\xi}\sqrt{E-V(\xi)}\, d\xi-\sqrt{E}\xi\right]=\lim_{y\to \infty}\left[ \int_{y(0)}^{y}\sqrt{t^2-\beta}\, dt-\sqrt{E}\xi \right]\\
=\lim_{y\to \infty}\left[ \int_{0}^{y}\sqrt{t^2-\beta}\, dt-\sqrt{E}\xi\right]-\int_{0}^{y(0)}\sqrt{t^2-\beta}\, dt\\
=\lim_{y\to\infty}\left[  \frac12 y\sqrt{y^2-\beta}-\frac 12\beta\ln(y+\sqrt{y^2-\beta})-\frac 14\beta\ln(-\beta)
-\sqrt{E}\xi \right]-\int_{0}^{y(0)}\sqrt{t^2-\beta}\, dt\\ \label{eq57}
=-\frac 14\beta+\frac 14\beta\ln(-\beta)-\frac 12 \beta\ln 2-\sqrt{E}C_+-\int_{0}^{y(0)}\sqrt{t^2-\beta}\, dt
\end{multline}
 (where we used \eqref{asyxi}).
To evaluate the last integral note that on one hand we have (recall the notations $y=\sqrt{\beta/\alpha}Y$, $Y=x+(\alpha-x^2)w$)
\bel{oint111} 
\int_{0}^{y(0)}\sqrt{t^2-\beta}\, dt={\frac{\beta}{\alpha}}\int_0^{Y(0)}\sqrt{s^2-\a}\,ds = -\beta\int_0^{\sqrt{-\alpha}w(0)}\sqrt{\tau^2+1}\,d\tau=-\frac{\beta}{\a}\phi_\omega(\a)
\ee
where the last equality is obtained by noting that, on the other hand, equation \eqref{eq:eq4} at $x=0$ simplifies to
$$w(0)=\gamma \, u(0;\alpha)+w^2\int_0^1\frac{(1-\sigma)\a w(0)\sigma}{\sqrt{1-\a w(0)^2\sigma^2 }}\, d\sigma $$
which, using \eqref{defumR}, and, after integration by parts, becomes
\bel{oint222}
\frac{1}{\a}\phi_\om(\a)=\int_0^{\sqrt{-\alpha}w(0)}\sqrt{\tau^2+1}\,d\tau
\ee

Using \eqref{oint111} relation \eqref{eq57} becomes
\bel{Ipbn}
I_+(E)=-\frac 14\beta+\frac 14\beta\ln(-\beta)-\frac 12 \beta\ln 2-\sqrt{E}C_+ + \gamma^{-1}\phi_\om(\a)
\ee
Similar calculations give
\bel{Ipbn*}
I_-(E)=-\frac 14\beta+\frac 14\beta\ln(-\beta)-\frac 12 \beta\ln 2-\sqrt{E}C_- - \gamma^{-1}\phi_\om(\a)
\ee

Note that in all cases we have
\bel{cppcm}
-\sqrt{E}(C_++C_-)=I_+(E)+I_-(E)+\frac 12\beta-\frac 12\beta\ln{|\beta|}+\beta\ln2
\ee

On one hand, $\gamma^{-1}\phi_\omega$ has an explicit construction. On the other hand, note that we can rewrite the integral in \eqref{oint111} as
$$- \gamma^{-1}\phi_\om(\a)=\int_0^{y(0)}\sqrt{y^2-\beta}\, dy=\int_{\xi_0}^0\sqrt{E-V(\xi)}\, d\xi$$
where $\xi_0$ is the value of $\xi$ for $y=0$ (recall that $\xi(y(0))=0$). In the $C^\infty$ case of course, only the Taylor coefficients of all quantities at $E=1$ are relevant.

\subsection{$\xi(y)$ behaves like a symbol}\label{symbxi}
Assume that $|V^{(k)}(\xi)|\lesssim \la \xi\ra^{-c-k}$ (for some $c>1$) for all integer $k\Ge 0$.

The fact that $\xi'(y)\sim \pm \frac{y}{\sqrt{E}}$ is obtained by direct asymptotic analysis on the differential equation \eqref{eqxi2}. Next, differentiating the equation we obtain
$$\xi''(y)=\frac{y}{\sqrt{{y^2-\beta}}\,\sqrt{{E-V(\xi)}}}{+\frac{V'(\xi)}{2({E-V(\xi)})} \xi'(y)^2}      $$
and since $|V'(\xi(y))|\lesssim \la y^2\ra^{-c-1}$ then $|\xi''(y)|\lesssim 1$. The other derivatives are proved by induction on $k$.

\section{Scattering theory of \eqref{eqf}:  $\hb/\beta\Le {\rm Const.}$ and $\beta>0$}\label{Aicase}

We now apply the change of variables of the previous section to the problem of obtaining 
fundamental systems of the semi-classical \eqref{eqf} with precise control of the asymptotic behavior
both in terms of small $\beta$ and small $\hbar$. 
To fix the turning points we substitute (with $f$ as in \eqref{eq:S})
\bel{psi12}
h_1=\hb/\beta,\ \ y=\sqrt{\beta}y_1,\ \ {\psi_2}(y):=\psi_1(y_1),\ \ \beta f(\sqrt{\beta}y_1):=f_1(y_1)
\ee
which transforms \eqref{eqpsi2} to
\bel{eqpsi1}
\frac{d^2\psi_1}{dy_1^2}=h_1^{-2}(1-y_1^2)\psi_1+f_1\psi_1
\ee
which can be viewed as a perturbation of the Weber equation
\bel{eqw}
\frac{d^2w}{dy_1^2}=h_1^{-2}(1-y_1^2)w
\ee

Inspired by the main terms of the asymptotics in \cite{Olver1975}, \cite{Olver1959} (only those results present the error in additive form, and we need multiplicative) we proceed as follows.
We denote, see \cite{Olver1975}, 
\bel{eta}
\eta(y)=\left(\frac 32\int_1^y\sqrt{t^2-1}\, dt\right)^{2/3}\ {\rm for \ }y\Ge 1,\ \ \eta(y)=-\left(\frac 32\int_y^1\sqrt{1-t^2}\, dt\right)^{2/3}\ {\rm for \ }y{\, \in[0,1]} 
\ee
and let
\bel{defg}
g(y)=\left(\frac{\eta(y)}{y^2-1}\right)^{1/4}
\ee

Denote

\bel{defAB}
A(y_1)=g(y_1){\rm Ai}\left(-h_1^{-2/3}\eta(y_1)\right),\ \ B(y_1)=g(y_1){\rm Bi}\left(-h_1^{-2/3}\eta(y_1)\right)
\ee
where ${\rm Ai},\,{\rm Bi}$ are the Airy functions. These functions will be now used to construct fundamental systems
of \eqref{eqpsi1}, \eqref{eqw}, respectively.

\subsection{The exponential region: $y_1\in [0,1]$}\label{S1}

\bl\label{P1AB} 

(i) For $y_1\in[0,1]$ eq. \eqref{eqpsi1} has two independent solutions of the form
\bel{solp1A}
\psi_{1,A}(y_1)=A(y_1)\left(1+ h_1\,a_1(y_1;h_1,\beta)\right)
\ee
\bel{solp1B}
\psi_{1,B}(y_1)=B(y_1)\left(1+ h_1\,b_1(y_1;h_1,\beta)\right)
\ee
where the error terms $a_1,b_1$ satisfy for all $k,l\Ge0$, with $\lam=h_1^{2/3}$,
\bel{estimab}
\begin{array}{ll}
\big|\partial_{y_1}^k\partial_{\beta}^l a_1\big|\Le \,C_{kl}\,  (-\eta(y_1))^{\frac12 -k}\,\beta^{-l},\ \big|\partial_{y_1}^k\partial_{\beta}^l b_1\big|\Le \,C_{kl}\,  (-\eta(y_1))^{\frac12 -k}\,\beta^{-l} & \text{if }-\eta(y_1)\Ge h_1^{2/3}\\
\big|\partial_{y_1}^k\partial_{\beta}^l a_1\big|\Le \,C_{kl}\,  h_1^{\frac13-\frac{2k}3}\,\beta^{-l},\ \big|\partial_{y_1}^k\partial_{\beta}^l b_1\big|\Le \,C_{kl}\,  h_1^{\frac13-\frac{2k}3}\,\beta^{-l} & \text{if }-\eta(y_1)\in[0, h_1^{2/3})
\end{array}
\ee
(note that $\beta^{-l}< C_{l}\, \hb^{-l}$).

In particular, at $y_1=1$:
\bel{estab1}
\begin{array}{l}\big|\partial_{\beta}^l a_1(1,h_1,\beta)\big|\Le h_1^{1/3}\,C_{l}\,\beta^{-l},\ \ \ \big|\partial_{\beta}^l b_1(1,h_1,\beta)\big|\Le \, h_1^{1/3}\,C_{l}\, \beta^{-l}\\
\big|\partial_{y_1}\partial_{\beta}^l a_1(1,h_1,\beta)\big|\Le h_1^{-1/3}\,C_{l}\,\beta^{-l},\ \ \ \big|\partial_{y_1}\partial_{\beta}^l b_1(1,h_1,\beta)\big|\Le \, h_1^{-1/3}\,C_{l}\, \beta^{-l}
\end{array}
\ee

(ii) Furthermore, for $y_1\in[0,1]$
$$\psi_{1,A}'(y_1)=A'(y_1) (1+ h_1\, {a^d_1(y_1;h_1,\beta)} )$$
$$\psi_{1,B}'(y_1)=B'(y_1) (1+ h_1\, {b^d_1(y_1;h_1,\beta)} )$$
where the error terms ${a^d_1,b^d_1}$ satisfy estimates similar to \eqref{estab1} uniformly in $y_1\in[0,1]$.  
\el

The {proof} is found in \S\ref{PfL3}.

\medskip

In particular, the Weber equation \eqref{eqw} also admits a fundamental system $w_A,w_B$ approximated as in Lemma\,\ref{P1AB}, therefore
we have: 

\begin{Corollary}\label{Cor4}
$$\psi_{1,A}(y_{1})=w_A (1+h_1 \tilde{a}_1(y_1;h_1,\beta) )$$
$$\psi_{1,B}(y_{1})=w_B (1+h_1 \tilde{b}_1(y_1;h_1,\beta) )$$
with $\tilde{a}_1,\tilde{b}_1$ satisfying \eqref{estimab},\,\eqref{estab1}.

Furthermore, 
$$\psi'_{1,A}=w'_A  (1+h_1 {\tilde{a}^d_1(y_1;h_1,\beta)} )$$
$$\psi'_{1,B}=w'_B  (1+h_1 {\tilde{b}^d_1(y_1;h_1,\beta)} )$$
with ${\tilde{a}^d_1,\tilde{b}^d_1}$ satisfying \eqref{estimab},\,\eqref{estab1}.
\end{Corollary}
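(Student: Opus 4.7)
The plan is to apply Lemma\,\ref{P1AB} twice, once to the perturbed equation \eqref{eqpsi1} and once to the Weber equation \eqref{eqw}, and then simply take quotients. Since \eqref{eqw} is the special case of \eqref{eqpsi1} with $f_1\equiv 0$, Lemma\,\ref{P1AB} already produces a fundamental system of the form
\[
w_A(y_1)=A(y_1)\bigl(1+h_1\alpha_1(y_1;h_1)\bigr),\qquad w_B(y_1)=B(y_1)\bigl(1+h_1\beta_1(y_1;h_1)\bigr),
\]
with the error terms $\alpha_1,\beta_1$ satisfying the same spatial estimates \eqref{estimab}--\eqref{estab1}. The key structural observation is that, since \eqref{eqw} contains no $\beta$ (only $h_1$ and $y_1$), the corresponding Weber error terms are $\beta$-independent; in particular $\partial_\beta^{\,l}\alpha_1=\partial_\beta^{\,l}\beta_1=0$ for all $l\ge 1$.

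Dividing \eqref{solp1A} by this representation for $w_A$ yields
\[
\frac{\psi_{1,A}}{w_A}=\frac{1+h_1 a_1}{1+h_1\alpha_1}=:1+h_1\tilde{a}_1,\qquad
\tilde{a}_1=\frac{a_1-\alpha_1}{1+h_1\alpha_1},
\]
and similarly for $\tilde{b}_1$. For sufficiently small $h_1$ the denominator $1+h_1\alpha_1$ stays uniformly bounded away from $0$ on $y_1\in[0,1]$: in the region $-\eta(y_1)\ge h_1^{2/3}$ we have $h_1|\alpha_1|\lesssim h_1(-\eta)^{1/2}\ll 1$ because $-\eta$ is bounded on $[0,1]$, while in the transitional region $-\eta\in[0,h_1^{2/3})$ we have $h_1|\alpha_1|\lesssim h_1^{4/3}\ll 1$. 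The bounds \eqref{estimab}--\eqref{estab1} for $\tilde{a}_1$ then follow by the Leibniz rule: $y_1$-derivatives of $(1+h_1\alpha_1)^{-1}$ inherit the estimates of $\alpha_1$, while $\beta$-derivatives fall entirely on the numerator $a_1$ (by the $\beta$-independence noted above), and the $\beta^{-l}$ growth is carried over from the estimates for $a_1$ itself.

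Part~(ii) is proved by the same quotient trick applied to Lemma\,\ref{P1AB}(ii): writing $\psi_{1,A}'=A'(1+h_1 a_1^{d})$ and $w_A'=A'(1+h_1\alpha_1^{d})$, the ratio is $1+h_1\tilde{a}_1^{d}$ with $\tilde{a}_1^{d}=(a_1^{d}-\alpha_1^{d})/(1+h_1\alpha_1^{d})$, and the same bookkeeping supplies the required symbol-like control uniformly in $y_1\in[0,1]$.

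The expected main obstacle is purely notational: checking that the Leibniz expansion of $\tilde{a}_1$ and $\tilde{a}_1^{d}$ never loses a power of $\beta$ beyond what the bounds \eqref{estimab}--\eqref{estab1} allow. This is automatic once one observes that each $\partial_\beta$ that hits the denominator gives $0$, so all $\beta$-derivatives land on $a_1$ (resp.\ $a_1^{d}$), where the desired $\beta^{-l}$ growth is already built in. No new analytic ingredient beyond Lemma\,\ref{P1AB} is required.
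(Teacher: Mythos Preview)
Your approach---apply Lemma~\ref{P1AB} to both \eqref{eqpsi1} and \eqref{eqw} and form the quotient---is exactly the paper's one-line argument. However, your ``key structural observation'' that $\alpha_1,\beta_1$ are $\beta$-independent is not correct. In this paper $\partial_\beta$ is always taken with $\hbar$ held fixed (this is the physically relevant energy derivative; see the symbol bounds \eqref{symberrT} and, explicitly, the line $\partial_\beta b_1=h_1^{-1}(\partial_xb\,\tfrac{dx}{d\lambda}+\partial_\lambda b)\tfrac{d\lambda}{d\beta}$ in \S\ref{PfL3}). Since $h_1=\hbar/\beta$, the Weber errors $\alpha_1(y_1;h_1)$ do depend on $\beta$ through $h_1$, and $\partial_\beta^{\,l}\alpha_1\neq0$ in general.

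This does not break the argument, only your bookkeeping shortcut. Because Lemma~\ref{P1AB} applies verbatim to \eqref{eqw} (the case $f_1\equiv0$), the Weber errors $\alpha_1,\beta_1$ themselves satisfy the full estimates \eqref{estimab}--\eqref{estab1}, $\beta$-derivatives included. The Leibniz expansion of $\tilde a_1=(a_1-\alpha_1)/(1+h_1\alpha_1)$ then goes through using these bounds on \emph{both} numerator and denominator: each $\partial_\beta$ hitting $1+h_1\alpha_1$ produces a factor $O(h_1\beta^{-1})$ (from $\partial_\beta h_1=-h_1/\beta$ and from $|\partial_\beta\alpha_1|\lesssim\beta^{-1}$), which is harmless. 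So simply drop the $\beta$-independence claim and invoke the estimates for $\alpha_1$ directly.
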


\subsection{The oscillatory region: $y_1\Ge 1$}\label{S2} 

The fundamental systems which we just constructed for $0\Le y_{1}\Le 1$ extends to $y_{1}\Ge1$. 
In order to determine the asymptotic behavior of these solutions as $y_{1}\to\infty$, we construct
a new fundamental system in that regime, namely 
the  {\em Jost solutions}. This standard terminology refers to oscillatory solutions which asymptotically
equal those of the free problem, i.e., $e^{\pm i y_{1}\lambda}$. See for example~\cite[Section 1.3]{CDST}. 
Note carefully, though, that we are using a {\em global change of variables} in \eqref{eqxi} which reduces matters
not to the free problem but to the (global) Weber equation. This leads to  different  asymptotic behavior, 
as given by the following lemma.  

\bl\label{WtoAB} We have %see p RA15 where alpha there is lambda+ here
$$A(y_1)\pm iB(y_1)=g(y_1)({\rm Ai}\pm i{\rm Bi})(-h_1^{-2/3}\eta(y_1))\sim \, \lambda_\pm \ y_1^{-\frac 12\pm\frac{ i}{2h_1}}\,e^{\mp iy_1^2/(2h_1)}\ \ \ (y_1\to +\infty)$$
where
$$\lambda_+=\overline{\lambda_-}=\pi^{-1/2}h_1^{1/6}e^{i\pi/4}(4e)^{i/4h_1}$$
\el

The {proof} is found in \S\ref{PfL5}.

\bl\label{L5}  The Jost solutions of equation \eqref{eqpsi1} are as follows: 
\begin{enumerate}
\item[(i)] For $y_1\Ge 1$ \eqref{eqpsi1} has two independent solutions of the form
\bel{solppm}
\psi_{1,\pm}(y_1)=\left[A(y_1)\mp i B(y_1)\right]\,  \left(1+h_1 c_\pm(y_1;h_1,\beta)\right)
\ee
where
\bel{estimc}
\big|\partial_{y_1}^k\partial_{\beta}^l c_\pm\big|\Le C_{kl}\, \la y_1\ra^{-2-k}\beta^{-l}
\ee
Also, at $y_1=1$:
\bel{estimc1}
\big|\partial_{\beta}^l c_\pm(1,h_1,\beta)\big|\Le \,C_{l}\,\beta^{-l},\ \ \big|\partial_{\beta}^l \partial_{y_1}\, c_\pm(1,h_1,\beta)\big|\Le h_1^{-2/3}\,C_{l}\,\beta^{-l}
\ee
\item[(ii)] Furthermore:
$$\psi'_{1,\pm}(y_1)=\left[A'(y_1)\mp i B'(y_1)\right]\,  \left(1+h_1 {c^d_\pm(y_1;h_1,\beta)}\right)$$
where the error terms ${c^d_\pm}$ satisfy estimates similar to \eqref{estimc},\,\eqref{estimc1} for $y_1\Ge 1$.
\end{enumerate}
\el

The proof is found in \S\ref{PFLem6}.
In particular, for $f\equiv 0$, we obtain Jost solutions of the unperturbed semi-classical Weber equation.

\begin{Corollary}\label{C6}
 The Weber equation \eqref{eqw} has two solutions $w_\pm$ estimated as in Lemma~\ref{L5}.
\end{Corollary}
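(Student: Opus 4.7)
The key observation is that the Weber equation~\eqref{eqw} is exactly equation~\eqref{eqpsi1} with the perturbation term $f_1\psi_1$ removed, i.e.\ the case $f_1 \equiv 0$. Consequently the solutions $w_\pm$ asserted by the corollary are nothing other than the Jost solutions $\psi_{1,\pm}$ produced by Lemma~\ref{L5} in this unperturbed case. My plan is simply to invoke Lemma~\ref{L5} with $f_1 \equiv 0$ and check that its construction and error bounds degenerate cleanly in this limit.

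More concretely, I would re-examine the argument of~\S\ref{PFLem6}. There one writes $\psi_{1,\pm}(y_1) = [A(y_1) \mp iB(y_1)]\bigl(1 + h_1 c_\pm(y_1;h_1,\beta)\bigr)$, substitutes into~\eqref{eqpsi1}, and derives a Volterra integral equation for $c_\pm$ whose kernel is built from $A \mp iB$ and their Wronskian (with asymptotics controlled by Lemma~\ref{WtoAB}). The forcing of this Volterra equation has two independent components: the explicit perturbation $f_1$ coming from~\eqref{eqpsi1}, and the intrinsic ``Liouville--Green defect'' expressing the fact that $A \mp iB$ are only approximate solutions of~\eqref{eqw} (arising from the Schwarzian error of the reduction to the Airy equation through $\eta$ and $g$). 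Setting $f_1 \equiv 0$ kills the first contribution but leaves the second, so the same Volterra iteration, now in a space of functions of $y_1$ and $h_1$ only, produces a nontrivial $c_\pm$ obeying the same symbol-type bounds~\eqref{estimc}--\eqref{estimc1}; the factor $\beta^{-l}$ in those estimates is simply vacuous because there is no longer any $\beta$-dependence to differentiate. The statement for $w'_\pm$ and the error terms $c_\pm^d$ follows by differentiating the Volterra representation, exactly as in Lemma~\ref{L5}(ii).

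The one thing to verify before declaring victory is that no step in~\S\ref{PFLem6} uses $f_1$ in an essential way that would break at $f_1 = 0$: there should be no normalization by a power of $f_1$, no inversion of an $f_1$-dependent operator, and the contraction estimate should depend only on $h_1$ and on the size of the Airy/LG defect. Inspection of~\eqref{estimc}, where $\beta^{-l}$ appears only as the expected cost of $l$ derivatives in $\beta$ via $f_1$, makes it clear that the bound is of the form (Airy defect) $+$ (multiple of $f_1$), and the $f_1$-free piece is already in the claimed form. I therefore do not foresee any real obstacle: the corollary should be a direct specialization of the preceding lemma, and the proof amounts to one sentence observing this.
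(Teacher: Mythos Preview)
Your proposal is correct and matches the paper's approach exactly: the paper's entire argument is the single sentence ``In particular, for $f\equiv 0$, we obtain Jost solutions of the unperturbed semi-classical Weber equation,'' which is precisely the specialization you describe. Your only minor imprecision is the claim that ``there is no longer any $\beta$-dependence''---the Weber equation~\eqref{eqw} still depends on $\beta$ through $h_1=\hbar/\beta$, so the $\beta^{-l}$ factors in~\eqref{estimc} are not vacuous but remain valid via the chain rule; this does not affect the correctness of the argument.
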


From Lemma~\ref{L5} and {Corollary}\,\ref{C6} it follows that

\begin{Corollary}\label{C7}
$$\psi_{1,\pm}=w_\pm\left(1+h_1 \tilde{c}_\pm(y_1;h_1,\beta)\right)$$
with $\tilde{c}_\pm$ satisfying \eqref{estimc},\eqref{estimc1}.
\end{Corollary}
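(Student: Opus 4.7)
The plan is to combine the two representations of the Jost solutions coming from Lemma~\ref{L5} (for the perturbed equation~\eqref{eqpsi1}) and from Corollary~\ref{C6} (for the unperturbed Weber equation~\eqref{eqw}), and then to form the ratio. Both give the same leading behavior $A(y_1)\mp iB(y_1)$, so the common leading factor cancels when taking the quotient, leaving only a comparison of the two multiplicative corrections.

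More precisely, by Lemma~\ref{L5} we have $\psi_{1,\pm}(y_1)=[A(y_1)\mp iB(y_1)](1+h_1 c_\pm(y_1;h_1,\beta))$, and by Corollary~\ref{C6}, applied to the case $f\equiv 0$, we have $w_\pm(y_1)=[A(y_1)\mp iB(y_1)](1+h_1 c_\pm^{(0)}(y_1;h_1,\beta))$, where both $c_\pm$ and $c_\pm^{(0)}$ satisfy the symbol-type bounds \eqref{estimc} and \eqref{estimc1}. I would first note that the denominator $1+h_1 c_\pm^{(0)}$ is bounded below in modulus by $1/2$ (for $h_1$ small enough, using that $c_\pm^{(0)}$ is bounded uniformly in $y_1\geq 1$) and hence its reciprocal is well defined. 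Then I would define
\[
\tilde{c}_\pm(y_1;h_1,\beta):=\frac{c_\pm(y_1;h_1,\beta)-c_\pm^{(0)}(y_1;h_1,\beta)}{1+h_1\, c_\pm^{(0)}(y_1;h_1,\beta)},
\]
so that a direct algebraic manipulation gives $\psi_{1,\pm}=w_\pm(1+h_1\tilde{c}_\pm)$.

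It then remains to verify that $\tilde c_\pm$ inherits the bounds \eqref{estimc} and \eqref{estimc1}. For the numerator, the triangle inequality applied to each derivative $\partial_{y_1}^k\partial_\beta^l$ yields exactly the required estimate since both $c_\pm$ and $c_\pm^{(0)}$ satisfy it. For the reciprocal of the denominator I would use the elementary fact that if $g$ satisfies symbol-type bounds and $|1+h_1 g|\geq 1/2$, then so does $(1+h_1 g)^{-1}$, uniformly in $h_1$; this is a straightforward consequence of the Fa\`a~di~Bruno / Leibniz formulas applied to $1/(1+z)$. Combining the two with the Leibniz rule produces the claimed symbol behavior for $\tilde c_\pm$.

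The main obstacle, modest though it is, lies in the bookkeeping of the mixed $(k,l)$ derivatives: one must check that when distributing derivatives across the quotient the extra factors of $h_1\leq 1$ coming from the denominator's expansion do not destroy the $y_1$-weights in \eqref{estimc} nor the $h_1^{-2/3}$ factor appearing in the $\partial_{y_1}$-bound at $y_1=1$ in \eqref{estimc1}. Since all those weights are multiplicative and the $h_1$-factors from the reciprocal only improve the estimates, this verification is routine. The analogous statement for the derivative $\psi_{1,\pm}'/w_\pm'$, giving the companion error $\tilde c_\pm^d$, is handled identically using part~(ii) of Lemma~\ref{L5} and the corresponding part of Corollary~\ref{C6}.
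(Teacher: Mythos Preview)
Your proposal is correct and is precisely the argument the paper has in mind: the statement is presented as an immediate consequence of Lemma~\ref{L5} and Corollary~\ref{C6}, and the only thing to do is to divide the two representations sharing the common leading factor $A\mp iB$, then check that the quotient of the multiplicative corrections inherits the symbol bounds. You have carried this out carefully, including the Leibniz/Fa\`a~di~Bruno bookkeeping for the reciprocal, which the paper leaves implicit.
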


\subsection{The regions with  $y_1\Le 0$}\label{y1neg}

Changing  variables $$y_1=-y_3, \quad \psi_{1}(y_1)=\psi_{1}(-y_4):={\psi_4}(y_4)$$ we see that 
if $\psi_{1}(y_1)$ solves \eqref{eqpsi1} then ${\psi_{4}}(y_4)$ solves an equation with the same properties as \eqref{eqpsi1}, therefore the results of \S\ref{S1} and \S\ref{S2} apply to ${\psi_{4}}(y_4)$ for $y_4\Ge 0$. Reverting to the original variable $y_1$ we obtain:

\bl 
\begin{enumerate}
\item[(i)] For $y_1\in[-1,0]$ equation \eqref{eqpsi1} has two independent solutions of the form
$$\psi_{1,A}^\ell(y_1)=A(-y_1) (1+h_1 a^\ell_1(y_1;h_1,\beta) )$$
$$\psi_{1,B}^\ell(y_1)=B(-y_1) (1+h_1 b^\ell_1(y_1;h_1,\beta) )$$
where the error terms $a^\ell_1,b^\ell_1$ satisfy \eqref{estimab},\eqref{estab1}.

\item[(ii)] In particular, the Weber equation \eqref{eqw} also has two solutions $w^\ell_A,w^\ell_B$ of the form in (i) for $y_1\in[-1,0]$ and thus
$$\psi^\ell_{1,A}=w^\ell_A\left(1+h_1 \tilde{a}^\ell_1(y_1;h_1,\beta)\right)$$
$$\psi^\ell_{1,B}=w^\ell_B\left(1+h_1 \tilde{b}^\ell_1(y_1;h_1,\beta)\right)$$
with $\tilde{a}^\ell_1,\tilde{b}^\ell_1$ satisfying \eqref{estimab},\eqref{estab1}.

\item[(iii)] For $y_1\Le -1$ eq. \eqref{eqpsi1} has two independent solutions of the form
$$\psi^\ell_{1,\pm}(y_1)=\left[A(-y_1)\mp i B(-y_1)\right]\,  \left(1+h_1 c^\ell_\pm(y_1;h_1,\beta)\right)$$
where $c^\ell_\pm$ satisfy \eqref{estimc},\eqref{estimc1}.

\item[(iv)] In particular, the Weber equation \eqref{eqw} also has two solutions $w^\ell_\pm$ of the form in (iii), therefore
$$\psi^\ell_{1,\pm}=w^\ell_\pm\left(1+h_1 \tilde{c}^\ell_\pm(y_1;h_1,\beta)\right)$$
with $\tilde{c}^\ell_\pm$ satisfying \eqref{estimc},\eqref{estimc1}.
\end{enumerate}
\el

\subsection{Matching at $y_1=\pm 1$} This is, as expected, straightforward (in \cite{Olver1975}, \cite{Olver1959}, the Airy asymptotic approximation used is valid from infinity, through one turning point and past $y_1=0$). We use the notation $[f\; g]$ to denote the row vector
with functions $f,g$. 
%see pages CC/ (A12..A14)

\bl\label{L10} Matching at $y_1=1$:
denote
\bel{notP}
\Psi_{\pm}=[\psi_{1,+}\ \psi_{1,-}],\ \Psi_{AB}=[\psi_{1,A}\ \psi_{1,B}],\  W_\pm=[w_{+}\ w_{-}], \ W_{AB}=[w_A\ w_B]
\ee
We have
$$\Psi_{\pm}= \Psi_{AB}\, (E_0+h_1E_1(h_1))$$
and, as a consequence
$$W_\pm=\, W_{AB}(E_0+h_1E_2(h_1))$$
where
$$E_0=\left[\begin{array}{cc} 1& 1\\-i & i\end{array}\right]$$
and $E_{1,2}(h_1)$ are square matrices with bounded entries (for $h_1\lesssim 1$, %<M$ 
and $\beta<\beta_0$).

Similar results hold at $y=-1$.
\el

See \S\ref{PfL9} for the proof. Here $\beta_{0}>0$ is small so that the results of the previous section apply.

\subsection{Matching at $y_1=0$}\label{Match0} This is equivalent to finding the monodromy of equation \eqref{eqpsi1} which is estimated based on the monodromy of the modified parabolic cylinder functions (see Appendix B) as follows.%pages Rmat0

\bp\label{Prop13}
We have 
\bel{MonP}
\Psi_\pm^\ell=\Psi_\pm N,\ \ {\rm with\ } N=(I+h_1R)M(I+h_1T)
\ee 
where $M$ is the monodromy matrix of the Weber equation \eqref{eqw} given by \eqref{monoWM} and the matrices $R,T$ have bounded entries in the parameters for $h_1{\lesssim 1}, %{\rm <Const.}, 
\beta<\beta_0$.
\ep

\begin{proof}
We use the notations \eqref{notP} and similar ones for $y_1\Le0$: $\Psi_{\pm}^\ell=[\psi_{1,+}^\ell \psi_{1,-}^\ell]$ etc. The following table summarizes the ranges of validity of the different fundamental sets of solutions used:
\begin{center}
\begin{tabular}{c|lcccccccr}
\hline
$y_1$ & $-\infty$ & \ & $-1$ &\  & $0$ & \ &  $1$ & \ & $+\infty$ \\
\hline
 & &\\
solutions of \eqref{eqpsi1} & \ & $\Psi^\ell_{\pm} $ & \ & $\Psi^\ell_{AB}$ & \  & $\Psi_{AB}$ & \ & $\Psi_{\pm}$\\
\hline
 & &\\
solutions of \eqref{eqw} & \ & $W_\pm^\ell $  & \ & $W^\ell_{AB}$ & \  & $W_{AB}$ & \ & $W_{\pm}$\\
\hline
 & &\\
approx. solutions  & \ & $(A\mp iB)(-y_1)$  & \ & ${A,B(-y_1)}$ & \  & ${A,B}$ & \ & $A\mp iB$\\
\hline
 \hline
\end{tabular}
\end{center}

\medskip

Combining the relations:
\begin{enumerate}
\item[(0)] $W_\pm^\ell=W_{\pm}M$, see \eqref{monoW}, \eqref{monoWM},  
\item[(1)] $\Psi_\pm=W_\pm(I+h_1D(y_1))$ where $D_1$ is diagonal, see Corollary\,\ref{C7}, 
\item[(2)] $\Psi_\pm=\Psi_{AB}(E_0+h_1E_1)$, see Lemma\,\ref{L10}, 
\item[(2')] $W_\pm=W_{AB}(E_0+h_1E_2)$, see Lemma\,\ref{L10}, 
\item[(3)] $\Psi_{AB}=W_{AB}(I+h_1F(y_1))$, see {Corollary}\,\ref{Cor4},  
\end{enumerate}
and similarly  
\begin{align*}
\Psi_\pm^\ell &=W_\pm^\ell(I+h_1D^\ell(y_1)) \\
\Psi_\pm^\ell &=\Psi_{AB}^\ell(E_0+h_1E_1^\ell) \\
 W_\pm^\ell &=W_{AB}^\ell(E_0+h_1E_2^\ell)
 \end{align*}
we obtain
\begin{align*}
\Psi_\pm(0) &=\Psi_{AB}(0)(E_0+h_1E_1)=W_{AB}(0)(I+h_1F(0))(E_0+h_1E_1)\\
&=W_\pm(0)(E_0+h_1E_2)^{-1}(I+h_1F(0))(E_0+h_1E_1)
\end{align*} 
and a similar expression for $\Psi_\pm^\ell(0)$, which implies that $\Psi_\pm^\ell(0)=\Psi_\pm(0) N$
for 
$$
N=(E_0+h_1E_1)^{-1}(I+h_1F(0))^{-1}(E_0+h_1E_2)M(E_0+h_1E_2^\ell)^{-1}(I+h_1F(0))(E_0+h_1E_1^\ell)$$ which has the stated form. 
\end{proof}

\subsection{The scattering matrix}\label{ScatMat}

Equation \eqref{eqf} has Jost solutions $f_\pm^{\ell,r}$ (since $V\in L^1(\RR)$) and it is easy to see that they have the asymptotic behavior
$$ f_\pm^\ell(\xi) = e^{\pm i\frac{\sqrt{E}}{\hb}\xi} \left(1+o(1)\right),\ \ (\xi\to -\infty)$$
$$f_\pm^r(\xi) = e^{\pm i\frac{\sqrt{E}}{\hb}\xi} \left( 1+o(1)\right),\ \ (\xi\to +\infty)$$
(if, say, $V(\xi)\sim B \xi^{-r-1}$ with $r>0$, then the correction $o(1)$ is $O(\xi^{-r})$) and using \eqref{asyxi}, we obtain

\bel{fpmr}
f_\pm^r(\xi(y)) = { e^{\pm i\frac{\sqrt{E}C_{+}}{\hb}}}y^{\mp\frac{i\beta}{2\hb}}e^{\pm\frac{iy^2}{2\hb}}(1+o(1))\ \ (y\to +\infty)
\ee

\bel{fpml}
f_\pm^\ell(\xi(y)) = { e^{\mp i\frac{\sqrt{E}C_{-}}{\hb}}}  (-y)^{\pm\frac{i\beta}{2\hb}}e^{\mp\frac{iy^2}{2\hb}}(1+o(1))\ \ (y\to -\infty)
\ee

{On the other hand, we work} back through the substitutions $f(\xi(y))=\sqrt{\xi'(y)}{\psi_2}(y)$, followed by \eqref{psi12}.  We have, from \eqref{asyxi}, that 
\bel{asydxi}
\sqrt{\xi'(y)}=y^{1/2}E^{-1/4}(1+O(y^{-2}))\ \ \ (y\to \infty)
\ee
 and using Corollary\,\ref{C7} we see that 
$$ f_{\pm}^r(\xi(y))=K_\pm\, \sqrt{\xi'(y)}\,  \psi_{1,\pm}(y\sqrt{\beta})=K_\pm\, \sqrt{\xi'(y)}\,  w_{\pm}(y\sqrt{\beta})(1+O(y^{-2})),$$
where $$K_\pm= \frac{ E^{1/4} e^{\pm i\frac{\sqrt{E}C_{+}}{\hb}}}{\lambda_\mp\beta^{\frac 14\pm\frac{i\beta}{4\hb}}}.$$
Similarly, 
$$ f_{\pm}^\ell(\xi(y))=K_\mp^\ell\, \sqrt{\xi'(y)}\,  \psi_{1,\mp}^\ell(y\sqrt{\beta}) ,\ \text{with }K_\pm^\ell= \frac{ E^{1/4} e^{\pm i\frac{\sqrt{E}C_{-}}{\hb}}}{\lambda_\mp\beta^{\frac 14\pm\frac{i\beta}{4\hb}}}$$
We now use Proposition\,\ref{Prop13}:
\begin{align*}
\frac 1{\sqrt{\xi'(y)}}F^{\ell} &=\Psi_\pm^{\ell}{\left[\begin{array}{cc} 0 & K_+^{\ell}\\ K_-^{\ell} & 0\end{array}\right]} = \frac 1{\sqrt{\xi'(y)}}F^r {\left[\begin{array}{cc} 1/K_+ & 0\\ 0 & 1/K_- \end{array}\right] \,N\, \left[\begin{array}{cc} 0 & K_+^{\ell}\\ K_-^{\ell}& 0\end{array}\right]}\\
&=  \frac 1{\sqrt{\xi'(y)}}F^r\mathcal{M}  
\end{align*}
where 
$F^{r/\ell}:=[f_+^{r/\ell}\, f_-^{r/\ell}]$ and $\mathcal{M}=(I+h_1\mathcal{R})\mathcal{M}_0(I+h_1\mathcal{T})$ 

with $\mathcal{M}_0$ obtained by a straightforward calculation as
\bel{monof}
\mathcal{M}_0=
\left[\begin{array}{cc} {p}e^{i\phi}\sqrt{1+A^2} &-{q^{-1}}iA\\
i{q}A & {p^{-1}}e^{-i\phi}\sqrt{1+A^2} \end{array}\right]
\ee
where
\bel{valsA}
A=e^{\pi/(2h_1)},\ \ e^{i\phi}=e^{i\phi_2}\left(\beta h_1/2\right)^{i/(2h_1)}=e^{i\phi_2}\left(\hb/2\right)^{i\beta/(2\hb)},\ \ \ 
\phi_2=\arg\Gamma\left(\frac 12+\frac{i}{2h_1}\right)
\ee
and $$
{p=e^{ -i\frac{\sqrt{E}}{\hb}(C_{-}+C_+)}, \ \ q=e^{ -i\frac{\sqrt{E}}{\hb}(C_{-}-C_+)}}$$

Note that the entries $\mathcal{M}_{ij}$ of the monodromy matrix $\mathcal{M}$ are linked to the entries $\mathcal{M}_{0,ij}$ of $\mathcal{M}_0$ by $\mathcal{M}_{ij}=\mathcal{M}_{0,ij}(1+h_1\mathcal{P}_{ij})$ where $\mathcal{P}_{ij}$ is multilinear in the entries of $\mathcal{R}$, $\mathcal{T}$ and bounded in the parameters.

The entries $\mathcal{S}_{ij}$ of the scattering matrix $\mathcal{S}$ can now be calculated as
$${\mathcal{S}_{11}}=\frac{\det\mathcal{M}}{\mathcal{M}_{22}}={e^{i\phi -i\frac{\sqrt{E}}{\hb}(C_{-}+C_+)}}\frac{1}{\sqrt{1+A^2}}\,(1+h_1e_{21})$$
and
$${{\mathcal{S}_{12}}}=-\frac{\mathcal{M}_{21}}{\mathcal{M}_{22}}=-ie^{i\phi{-i\frac{2\sqrt{E}C_-}{\hb}}}\frac{A}{\sqrt{1+A^2}}\,(1+h_1e_{11}).$$
(with notations as in \eqref{valsA}).
{Using \eqref{Ipbp}, \eqref{Imbp},}
 \eqref{valsA} we obtain 
{\bel{argS11}
\phi -\frac{\sqrt{E}}{\hb}(C_{-}+C_+)=\frac{1}{\hb}(I_+(E)+I_-(E))+\phi_2+\frac{\beta}{2\hb}[1+\ln(2\hb/\beta)]
\ee
where $I_+(E),I_-(E)$ are defined in \eqref{defIp}, \eqref{defIm}.}

Therefore
\bel{S11}
{\mathcal{S}_{11}}=e^{\frac i{\hb}(I_+(E)+I_-(E))}\, e^{i \theta}\frac{1}{\sqrt{1+A^2}}\,(1+\frac{\beta}{\hb}e_{11})
\ee
where
\bel{formtheta}
\theta=\phi_2+\frac{1}{2h_1}\left[1+\ln(2h_1)\right]=\phi_2+\frac{\beta}{2\hb}\left[1+\ln(2\hb/\beta)\right]
\ee
Similarly, {using \eqref{Imbp},} \eqref{valsA} we obtain
%\r{Hyejin, I use here $q$ and $q^{-1}$ reversed}
{$$\phi -2\frac{\sqrt{E}}{\hb}C_-=\frac{2}{\hb}I_-(E)+\phi_2+\frac{\beta}{2\hb}\left[ 1+\ln(2\hb/\beta)\right]]$$}
and therefore
\bel{S21}
{\mathcal{S}_{21}}={e^{\frac i{\hb}2I_-(E)}\, e^{i \theta}\frac{-iA}{\sqrt{1+A^2}}\,(1+h_1e_{21})}
\ee

\subsubsection{Dominant terms for small $h_1=\hb/\beta$}\label{smallh1} 
Using  $$\phi_2=(-1-\ln(2h_1))/(2h_1)+O(h_1)$$ (recall that $h_1=\hb/\beta$) we see that in \eqref{formtheta} we have $\theta=O(h_1)$. 
The modulus in ${\mathcal{S}_{11}}$ is, using \eqref{SdeE},
$$(1+A^2)^{-1/2}\sim A^{-1}=\exp[-\pi\beta/(2\hb)]=\exp(-S(E)/\hb)$$
Similarly, the modulus of ${{\mathcal{S}_{12}}}$ is of order $1$, while 
the argument in ${{\mathcal{S}_{12}}}$ is 
$$\phi -2\frac{\sqrt{E}}{\hb}C_{-}=\frac{2}{\hb}I_-(E)\,+O(h_1)$$
The dominant terms in these expressions correspond to the ones in \cite{Ramond}. 

%%%%%%%%%%%%%%%%%%%%%%%%%%%%%%%%%%%%%%%%%%%%%%%%%
%%%%%%%%%%%%%%%%%%%%%%%%%%%%%%%%%%%%%%%%%%%%%%%%%%%%

\section{Proofs of statements in \S\ref{Aicase}}\label{pfP1AB} 

The proofs use lemmas found in \cite{CDST}. It is useful to note the following identities satisfied by the functions defined in \eqref{eta}, \eqref{defg}, \eqref{defAB}:

\bel{useide}
g^2\eta'=1,\ \ \eta\eta'^2=y^2-1,\ \ \eta/g^4=y^2-1
\ee
and note that $\eta(y)$ is $C^\infty$, increasing, with $\eta(0)\approx -1.11,\ \eta(1)=0$ and 
\bel{asyetag}
\eta(y)\sim (3/4)^{2/3}y^{4/3}, \ \ { g^4(y)\sim (3/4)^{2/3}\, y^{-2/3}  \ \ \ \text{for }y\to\infty}
\ee

\subsection{Reduction to the Airy equation}\label{AiryRed}

In equation \eqref{eqpsi1} we change the dependent and independent variables:
\bel{psi1calA}
\psi_1(y_1)=g(y_1)\mathcal{A}\left(-\lambda^{-1}\eta(y_1)\right)\ \ \ \text{where } \lambda=h_1^{2/3}
\ee
and let
$$x=-\lambda^{-1}\eta(y_1) \text{ with its inverse }y_1=\eta^{-1}(-\lam x):=\theta(\lam x)$$
Equation \eqref{eqpsi1} becomes
\bel{eqA}
\frac{d^2}{dx^2}\mathcal{A}=x\mathcal{A}+\lambda^2g^4V_1\mathcal{A},\ \ \ {\rm where\ }V_1=f_1-g''/g
\ee
which is a perturbation of the Airy equation. 
The following table summarizes ranges of different variables used:

\bigskip

\begin{tabular}{l|lcccccc}
$y_1$ & $0$ &\ & $1$ &\ \ \ \  \ \ & $y_c$ & \ \ \ & \ \  \ \ \ \ \ \  $+\infty$  \\
\hline
$\eta(y_1)$ & $\eta(0)\approx -1.1$ \ \ \ &\ \ \ \  $-$\ \ \ \ & $0$ &\  +\ & $1$ &\  \ $+$\ \ \ \ & \ \ \ \ \ \ \  $+\infty$  \\
\hline
$x$ & $-\lambda^{-1}\eta(0)$\ \ \  & $+$& $0$ &\ \ \ \  $-$\ \ \ & $-\lambda^{-1}$ & \ $-$ \ & \ \ \ \ \ \  \ $-\infty$  \\
\hline
\end{tabular}

\subsection{Proof of Lemma\,\ref{P1AB} }\label{PfL3}

\begin{Lemma}\label{estinx}

Equation \eqref{eqA} has solutions $\mathcal{A}_B, \mathcal{A}_B$ the form 
\bel{erforB}
\mathcal{A}_B(x)={\rm Bi}(x)(1+b(x;\lam))
\ee 
\bel{erforA}
\mathcal{A}_A(x)={\rm Ai}(x)(1+a(x;\lam))
\ee
where the errors satisfy
\bel{esbx}
|\partial_\lambda^\ell\partial_x^k b(x;\lam)|\Le C_{k,\ell}\la x\ra^{1/2-k}\lambda^{2-\ell},\ \ |\partial_\lambda^\ell\partial_x^k b(x;\lam)|\Le C_{k,\ell}\la x\ra^{1/2-k}\lambda^{2-\ell}
\ee
\end{Lemma}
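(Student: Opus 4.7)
The plan is to view equation \eqref{eqA} as a Volterra perturbation of the Airy equation $\mathcal{A}''=x\mathcal{A}$ and to construct $\mathcal{A}_A, \mathcal{A}_B$ as multiplicative perturbations of $\mathrm{Ai}, \mathrm{Bi}$ by iterating an integral equation. For the Ai-branch, substituting $\mathcal{A}_A=\mathrm{Ai}(x)(1+a(x;\lambda))$ into \eqref{eqA} and using $\mathrm{Ai}''=x\mathrm{Ai}$ gives the reduction-of-order identity
\bel{roidfor}
(\mathrm{Ai}^2(x)\, a'(x))' = \lambda^2 g^4(y_1(x))\, V_1(y_1(x))\, \mathrm{Ai}^2(x)\,(1+a(x)),
\ee
which, after two integrations from the right edge $x_0$ of the range (chosen so that $\mathrm{Ai}$ dominates, i.e., Bi-growth is suppressed), becomes the Volterra equation
\bel{voltA}
a(x)=\int_x^{x_0}\mathrm{Ai}^{-2}(s)\int_s^{x_0}\lambda^2 g^4(y_1(t))\, V_1(y_1(t))\, \mathrm{Ai}^2(t)\,(1+a(t))\,dt\,ds.
\ee
The Bi-branch is treated analogously, with the roles of $\mathrm{Ai}$ and $\mathrm{Bi}$ swapped and integration proceeding from the left edge of the range.

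To estimate \eqref{voltA}, I first quantify the perturbation $\lambda^2 g^4 V_1$ as a function of $x$. From \eqref{asyetag} and the relation $y_1=\eta^{-1}(-\lambda x)\sim (4/3)^{1/2}(-\lambda x)^{3/4}$ in the classically allowed region $x\to-\infty$, one obtains $g^4(y_1)\sim c\,|\lambda x|^{-1/2}$; while $V_1=f_1-g''/g$ inherits symbol behavior from \eqref{estimf} and the symbol bounds on $\xi$ proved in \S\ref{symbxi}. Combined with the well-known pointwise bounds $|\mathrm{Ai}(x)\mathrm{Bi}(t)-\mathrm{Bi}(x)\mathrm{Ai}(t)|\lesssim \la x\ra^{-1/4}\la t\ra^{-1/4}$ in the oscillatory region and the exponential balance $\mathrm{Ai}(t)/\mathrm{Ai}(x)\lesssim e^{-c(t^{3/2}-x^{3/2})}$ in the forbidden region $x>0$, a direct estimate of the inner and outer integrals in \eqref{voltA} yields $|a(x)|\lesssim \lambda^2\la x\ra^{1/2}$. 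In the admissible range $|x|\lesssim \lambda^{-1}$ this quantity is $\lesssim \lambda^{3/2}\ll 1$, so the Neumann series for the Volterra iteration converges and produces the claimed $a(x;\lambda)$.

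Higher derivatives in $x$ are obtained by differentiating \eqref{voltA}, each derivative gaining one power of $\la x\ra^{-1}$ from the asymptotics of $g$, $V_1$, and the Airy factors; derivatives in $\lambda$ (or equivalently in $\beta$) are obtained by differentiating under the integral sign and using the $C^v$ dependence of $g$, $\eta$, and $V_1$ on the parameter, each $\partial_\lambda$ costing a factor of $\lambda^{-1}$ from the explicit $\lambda$-dependence of both the kernel $\mathrm{Ai}^{-2}(s)\mathrm{Ai}^2(t)$ (through $y_1(t)=\theta(\lambda x(t))$) and of the endpoint structure. The technical framework for this is precisely the set of Volterra lemmas compiled in Appendix C (the analogue of the ones used in \cite{CDST}), which produce the symbol-type bounds \eqref{esbx} directly from the one-step estimate on the perturbation. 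The main obstacle will be the transition across the turning point $x=0$, where $\mathrm{Ai}$ and $\mathrm{Bi}$ are comparable and pure WKB estimates break down; this is exactly why I keep $\mathrm{Ai}, \mathrm{Bi}$ themselves in the ansatz rather than their asymptotic forms, and why $\la x\ra$ (rather than $|x|$) appears in \eqref{esbx}, providing uniform control across the transition.
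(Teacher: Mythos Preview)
Your high-level plan (treat \eqref{eqA} as a Volterra perturbation of the Airy equation and invoke the lemmas of Appendix~C) is the same as the paper's. For the $\mathrm{Bi}$-branch, your Volterra equation from the left edge is exactly what the paper sets up, and Proposition~\ref{prop:volterraAinox} applied with $a\equiv1$, $b=\lambda^2 g^4V_1$, $\alpha=0$, $\beta=2$ gives \eqref{esbx} immediately.

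Two things to correct. First, you are mixing regions. Lemma~\ref{estinx} concerns only $x\in[0,-\lambda^{-1}\eta(0)]$, i.e.\ $y_1\in[0,1]$; the oscillatory side $x<0$ is the separate Lemma~\ref{L5}. Your estimate of $g^4V_1$ via \eqref{asyetag} and ``$y_1\sim(-\lambda x)^{3/4}$ as $x\to-\infty$'' is therefore irrelevant here: on the compact range $y_1\in[0,1]$ the functions $g,V_1$ are simply bounded, so $\lambda^2 g^4V_1=O(\lambda^2)$, and each $\partial_x$ brings down the chain-rule factor $\lambda\,\theta'(\lambda x)=O(\lambda)\le C\langle x\rangle^{-1}$ (since $\lambda x\le -\eta(0)$). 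That is the input to Proposition~\ref{prop:volterraAinox}, not any large-$y_1$ asymptotics.

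Second, for $\mathcal{A}_A$ the paper does \emph{not} run a second Volterra equation. It fixes $W[\mathcal{A}_A,\mathcal{A}_B]=\pi^{-1}$ and writes
\[
\mathcal{A}_A(x)=\mathcal{A}_B(x)\int_{-\lambda^{-1}\eta(0)}^{x}\frac{-\pi^{-1}}{\mathcal{A}_B(t)^2}\,dt,
\]
then checks this is $\mathrm{Ai}(x)(1+a)$ with $a$ inheriting the same bound as $b$. Your direct Volterra for $\mathrm{Ai}$ from the $\lambda$-dependent endpoint $x_0=-\lambda^{-1}\eta(0)$ is also viable, but note two things: (i) the moving endpoint must be handled carefully under $\partial_\lambda$ (this is exactly the role of the hypothesis on $c(\lambda)$ in Proposition~\ref{prop:volterraAinox}); and (ii) the raw estimate $\int_x^{x_0}\mathrm{Ai}^{-2}(s)\int_s^{x_0}\mathrm{Ai}^2\lesssim\int_x^{x_0}\langle s\rangle^{-1/2}ds$ gives $|a(x)|\lesssim\lambda^2(\langle x_0\rangle^{1/2}-\langle x\rangle^{1/2})$, which is $\lambda^{3/2}$ at $x=0$ rather than the sharper $\lambda^2$ of \eqref{esbx}. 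The Wronskian route avoids this loss because it reduces everything to the already-established $b$-bound plus an exponentially small tail $\int_{x_0}^\infty\mathrm{Bi}^{-2}$.
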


{\em Proof.} 

Substituting \eqref{erforB} in \eqref{eqA}, the equation for the error $b(x;\lam)$ can be turned into the Volterra equation
\bel{eqforb}
b(x;\lam)=\int_0^x\, dx'\, K(x',x)[1+b(x';\lam)] 
\ee
with
$$K(x',x)=\lambda^2(g^4V_1)\big|_{y_1=\theta(\lam x')}\, {\rm Bi}^2(x')\int_{x'}^x
\frac{dx''}{{\rm Bi}^2(x'')}\ \ \text{ for }x>x'$$
Note that $V_1$ depends of $\beta$ through the term $f_1(y_1)=\beta f(y_1\sqrt{\beta})$ where $f(y)=f(y,\beta)$ and here we should consider $\beta=\beta(\lam)=\hb \lam^{-\frac32}$. 

Straightforward calculations show that for $0\Le x\Le -\lambda^{-1}\eta(0)$ we have
$$|\partial_\lam^\ell\partial_x^k \lambda^2(g^4V_1)\big|_{y_1=\theta(\lam x')}|\Le C_{k,\ell}\langle x\rangle^{-k}\lambda^{2-\ell}$$

Proposition C8 in \cite{CDST} can be applied (see its statement in \S\ref{PropC8}), yielding the fact that equation \eqref{eqforb} has a unique solution, and this solution satisfies \eqref{esbx}.

(b) An independent solution $\mathcal{A}_A$ of \eqref{eqA}
 is obtained using the fact that the Wronskian $[\mathcal{A}_A,\mathcal{A}_B]=Const.$ We choose this constant to be the value $W[{\rm Ai},{\rm Bi}]=\pi^{-1}$. This implies 
$\mathcal{A}_A'-\mathcal{A}_A\, {\mathcal{A}_B'}/{\mathcal{A}_B}=-{\pi^{-1}}/{\mathcal{A}_B}$
which, using \eqref{erforB}, we rewrite in integral form as
$$ \mathcal{A}_A(x)={\rm Bi}(x)(1+b(x;\lam))\int_{-\lambda^{-1}\eta(0)}^x\, dt\, \frac{-\pi^{-1}}{{\rm Bi}(t)^2(1+b(t;\lam))^2} $$
It is standard to check that this implies \eqref{erforA} where  $a(t;\lam)$ satisfies the same estimates as $b(t;\lam)$ does, namely \eqref{esbx}. 

(c) To show that the derivatives of these solutions are approximated by derivatives of the Airy functions, differentiating \eqref{erforB} with respect to $x$ we obtain
$$\mathcal{A}_B'={\rm Bi}'(1+b)+{\rm Bi}\,b'={\rm Bi}'\left( 1+b+b'\, {\rm Bi}/{\rm Bi}'\right)
$$
(${\rm Ai},{\rm Ai}',{\rm Bi},{\rm Bi}'$ have no zeroes for $x\Ge 0$ \cite{nistAi})

The asymptotic behavior at infinity of the Airy functions shows that $|{\rm Bi}/{\rm Bi'}|\lesssim \la x\ra^{-1/2}$ hence $|b'\, {\rm Bi}/{\rm Bi}'| \lesssim \la x\ra^{-1/2}\, \la x\ra^{\frac12-1}\lambda^2$ by (\ref{esbx}) { and therefore the error satisfied the same estimates as $b$ does. The estimates for $\mathcal{A}_A'$  are similar. } $\Box$.

\

 Lemma\,\ref{P1AB} follows from Lemma\,\ref{estinx} by simply going back to the original variables using \eqref{psi1calA}: we found a solution $$\psi_{1,B}(y_1)=g(y_1)\mathcal{A}_{B}(-\lambda^{-1}\eta(y_1))=B(y_1)(1+h_1b_1(y_1;h_1,\beta))$$ where $b_1(y_1;h_1,\beta)=h_1^{-1}b(-\lambda^{-1}\eta(y_1),\lam)$. The solution $\psi_{1,A}(y_1)$ is similar.

The estimates \eqref{esbx} are straightforwardly transferred to estimates for $b_1$. Clearly \eqref{esbx} implies that $|b_1|\lesssim h_1^{-1}\la -\eta/\lam\ra^{1/2}\lam^2$ which, for $ -\eta>\lam$ is of order $\sqrt{-\eta}<Const.$, and for $ -\eta\in[0,\lam)$ is $h_1^{1/3}$. Then {$$|\partial_{y_1}b_1|=h_1^{-1}| (-\eta'/\lam) \partial_{x}b|\lesssim h_1^{-1} \lam\la -\eta/\lam\ra^{-1/2}$$} in agreement with  \eqref{estimab}, \eqref{estab1}.

Also, $\partial_\beta b_1=h_1^{-1}(\partial_xb\frac{dx}{d\lam}+\partial_\lam b)\frac{d\lam}{d\beta}$
yields results  in agreement with  \eqref{estimab}, \eqref{estab1}. Higher order derivatives are estimated inductively.

\subsection{Proof of Lemma\,\ref{WtoAB}}\label{PfL5}

The result follows by a direct calculation based on the asymptotic of Airy functions as follows. 

We use the classical asymptotic expansion for the Airy functions as $z\to\infty$ 
$${\rm Ai}(-z)+i{\rm Bi}(-z)\sim\frac{1}{\sqrt{\pi}}z^{-1/4}\left[\cos(\zeta-\frac{\pi}{4})-i\sin(\zeta-\frac{\pi}{4})\right]=\frac{1}{\sqrt{\pi}}e^{i\pi/4}z^{-1/4}e^{-i\zeta}$$
where
$$\zeta=\frac 23 z^{3/2}=\frac 23 h_1^{-1}\eta^{3/2}=\frac{y_1^2}{2h_1}-\frac{1}{2h_1}\ln y_1-\frac{1}{4h_1}-\frac{1}{2h_1}\ln 2+O(y^{-2})$$
(since $\int_1^y \sqrt{t^2-1}\,dt=\frac 12y^2-\frac 12\ln y-\frac 14-\frac 12 \ln 2+O(y^{-2})$) and therefore
$$e^{-i\zeta}=y_1^{\frac{i}{2h_1}}e^{-i\frac{y_1^2}{2h_1}}e^{\frac{i}{4h_1}}2^{\frac{i}{2h_1}}(1+O(y_1^{-2}))$$
Furthermore,  
$$z^{-1/4}=h_1^{1/6}\eta^{-1/4}$$
Using the fact that $g\,\eta^{-1/4}=y_1^{-1/2}(1+O(y_1^{-2}))$ we obtain the result of Lemma\,\ref{WtoAB}.

\subsection{Proof of Lemma\,\ref{L5}  }\label{PFLem6} % proof found in ages A and Check (following pages Add)

We assume $\hb<h_0$. 

Substituting $x=\lam^{-1}\zeta$ in equation \eqref{eqA} and denoting $\nu=h_1^{-1}$ we obtain
\bel{eqAzeta}
\frac{d^2}{d\zeta^2}\mathcal{A}=\nu^2\mathcal{A}+g^4V_1\mathcal{A}
\ee
to which we apply Lemma D.5 in \cite{CDST} (stated here, for completeness, as Lemma\,\ref{lem:bessel-} in \S\ref{PropC8}), yielding

\begin{Lemma}\label{NewlyLem}

Equation \eqref{eqAzeta} has solutions of the form 
$$[{\rm Ai}(\nu^{2/3}\zeta)\pm i{\rm Bi}(\nu^{2/3}\zeta)]\,(1+a_\pm(\zeta;\nu))$$ 
with errors having the symbol-like behavior \eqref{estimcRD}, \eqref{estimcRD1}.

\end{Lemma}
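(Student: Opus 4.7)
The plan is to view \eqref{eqAzeta} as a perturbation of the rescaled Airy equation whose unperturbed ($g^4V_1\equiv 0$) fundamental system is $\{\mathrm{Ai}(\nu^{2/3}\zeta),\,\mathrm{Bi}(\nu^{2/3}\zeta)\}$. Since $y_1\ge 1$ corresponds to $\zeta\le 0$, the argument $\nu^{2/3}\zeta$ lands on the oscillatory side of Ai and Bi, so the combinations $\mathrm{Ai}\pm i\mathrm{Bi}$ serve as the natural outgoing/incoming Jost templates, matching the form prescribed in the lemma.

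First I would substitute the ansatz
\begin{equation*}
\mathcal{A}(\zeta)=\bigl[\mathrm{Ai}(\nu^{2/3}\zeta)\pm i\,\mathrm{Bi}(\nu^{2/3}\zeta)\bigr]\bigl(1+a_\pm(\zeta;\nu)\bigr)
\end{equation*}
into \eqref{eqAzeta}. Because the bracketed base solves the unperturbed scaled Airy equation, the resulting identity is a linear inhomogeneous ODE for $a_\pm$ whose forcing is $g^4V_1$. Variation of parameters, using the constant Wronskian $W[\mathrm{Ai}+i\mathrm{Bi},\,\mathrm{Ai}-i\mathrm{Bi}]$, converts it into a Volterra equation
\begin{equation*}
a_\pm(\zeta;\nu)=\int_{-\infty}^{\zeta}\mathcal{K}_\pm(\zeta',\zeta;\nu)\bigl(1+a_\pm(\zeta';\nu)\bigr)\,d\zeta',
\end{equation*}
where the kernel $\mathcal{K}_\pm$ is built out of $g^4V_1$ evaluated along the inverse change of variables and quadratic products of the Airy templates. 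Anchoring the integral at $-\infty$ encodes the Jost normalization $a_\pm(-\infty;\nu)=0$.

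Next I would verify the hypotheses of Lemma D.5 of \cite{CDST}, reproduced in \S\ref{PropC8} as Lemma \ref{lem:bessel-}. From \eqref{asyetag} one has $\zeta=-\eta(y_1)\sim -(3/4)^{2/3}y_1^{4/3}$, hence $y_1\sim |\zeta|^{3/4}$ as $\zeta\to-\infty$. Combining \eqref{psi12} with the remark \eqref{estimf} yields $f_1(y_1)\sim \tfrac34 y_1^{-2}$, while a direct computation from \eqref{defg} and \eqref{asyetag} gives $g''/g=O(y_1^{-2})$, so
\begin{equation*}
g^4V_1=O(y_1^{-8/3})=O(|\zeta|^{-2})\qquad(\zeta\to-\infty),
\end{equation*}
which is the integrable decay required. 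The symbol behavior of $\xi(y,E)$ from \S\ref{symbxi} then propagates through the composition $y_1=y_1(\zeta;\nu)$ to yield symbol-type bounds on $\partial_\zeta^k\partial_\beta^\ell(g^4V_1)$, so that Lemma \ref{lem:bessel-} applies and outputs existence, uniqueness, and the desired estimates \eqref{estimcRD}, \eqref{estimcRD1} on $a_\pm$.

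The main obstacle is the bookkeeping of symbol-type $\beta$-derivatives of the kernel $\mathcal{K}_\pm$: each $\partial_\beta$ can land either on the coefficient $g^4V_1$ or on the Airy templates through $\nu=1/h_1=\beta/\hbar$, and one must verify inductively that every resulting contribution preserves the $\langle\zeta\rangle^{-2-k}$ decay after $k$ $\zeta$-derivatives. This is routine given the symbol regularity of $\xi(y,E)$ and of $\beta(\alpha)$ established in Proposition \ref{Lemma1} and \S\ref{symbxi}, but it is precisely where the uniformity in $\beta$ built up in the earlier sections is cashed in.
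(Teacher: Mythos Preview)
Your proposal is correct and follows essentially the same route as the paper: verify that the perturbation $g^4V_1$, viewed as a function of $\zeta=-\eta(y_1)\le 0$, satisfies the symbol bound $|\partial_\zeta^k(g^4V_1)|\lesssim\langle\zeta\rangle^{-2-k}$ via \eqref{asyetag}, and then invoke Lemma~\ref{lem:bessel-} (Lemma~D.5 of \cite{CDST}). The paper's own argument is just this, compressed to two lines; your explicit description of the Volterra equation and its kernel is redundant since that machinery is internal to Lemma~\ref{lem:bessel-}, and your final paragraph on $\beta$-derivative bookkeeping, while not wrong, goes beyond what the stated conclusions \eqref{estimcRD}--\eqref{estimcRD1} (which involve $\partial_\nu$, not $\partial_\beta$) actually require.
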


{\em Indeed,} $\zeta=-\eta(y_1)\Le 0$, and using \eqref{asyetag} it is easy to check that the assumptions of Lemma D.5 are satisfied (the proof of Lemma D.5 only uses the symbol behavior \eqref{boundsV2}, and not the particular form of $V_2$). $\Box$

\

The estimates \eqref{estimcRD}, \eqref{estimcRD1} can be straightforwardly translated into \eqref{estimc}, \eqref{estimc1}, completing the proof of Lemma\,\ref{L5}.

\subsection{Proof of Lemma\,\ref{L10}}\label{PfL9}

Solutions \eqref{solp1A},\,\eqref{solp1B} are linked to solutions \eqref{solppm} by $$\psi_\pm=\alpha_\pm\psi_{1,A}+\beta_\pm\psi_{1,B}$$ where $$\alpha_\pm=W[\psi_\pm,\psi_{1,B}]/W[\psi_{1,A},\psi_{1,B}]$$ and 
$$\beta_\pm=W[\psi_{1,A},\psi_\pm,]/W[\psi_{1,A},\psi_{1,B}].$$ Each of the four Wronskians can be easily estimated using \eqref{solp1A},\,\eqref{solp1B},\, \eqref{solppm} evaluated at $y_1=1$, yielding the conclusions of Lemma\,\ref{L10}.

\section{The case $\hb/\beta \gtrsim 1$}

In this section we assume that $|\beta|/(2\hb)\Le a_0$ for some $a_0>0$.

We change variables in \eqref{eqpsi2} as follows: with
\bel{sbbets}
y=x\sqrt{\hb/2},\ \ \psi_2(y)=\psi(x\sqrt{\hb/2})=u(x),\ \ a=\beta/(2\hb)
\ee
equation   \eqref{eqpsi2} becomes
\bel{equu}
u(x)''=\left(a-\frac{x^2}{4}\right)u(x)+\frac{\hb}{2}f\left(x\sqrt{\hb/2}\right)u(x)
\ee

\begin{Theorem}\label{betasmall}

Let $x\Ge 0$. Equation \eqref{equu} has two independent solutions of the following forms:

\bel{soluE}
u_E(x)=E\left(a,x\right)\, (1+e(x;\hb, \beta))
\ee
\bel{soluES}
u_E^*(x)=E^*\left(a,x\right)\, (1+e^*(x;\hb, \beta))
\ee
where
\bel{ese}
e,e^*=O(\hb\,\ln \hb) \ \ \text{for }x\Le \sqrt{2/{\hb}}\ \ \text{and } e,e^*=O(\hb) \  \text{for all }x\Ge \sqrt{2/{\hb}}
\ee

Also
\begin{equation}\label{soluEp}
\begin{split}
\partial_xu_E(x)&=\partial_xE\left(a,x\right)\, (1+\tilde{e}(x;\hb, \beta)) \\
\partial_xu_E^*(x)&=\partial_xE^*\left(a,x\right)\, (1+\tilde{e}^*(x;\hb, \beta))
\end{split}
\end{equation}
where $\tilde{e}, \tilde{e}^* $ satisfy:
\bel{eq72.5}
{ |\tilde{e}|\lesssim \hb \la x\ra^{-1}}
\ee

Furthermore, $e=O(x^{-2})=O(\hb)$ for ${x>\sqrt{2/\hb}}$, and the derivatives satisfy the estimates: for $k,\ell\Ge 0$ we have 
\bel{likeasym}
\begin{array}{ll}
 |\partial_x^{k+1}\partial_\beta^\ell e(x;\hb, \beta)|\lesssim    x^{-3-k}\hb^{-\ell}<{x^{-1-k} \hb^{-\ell+1}}&  \text{for }{x>\sqrt{2/\hb}}\\
  \\
   |\partial_x^{k+1}\partial_\beta^\ell e(x;\hb, \beta)|\lesssim x^{-1-k} {\hb^{-\ell+1}}& \text{for } x\in[\sqrt{2}, {\sqrt{2/\hb}}] \\
   \\ 
 |\partial_x^{k+1}\partial_\beta^\ell e(x;\hb, \beta)|\lesssim   {\, \hb^{-\ell+1}} & \text{for }x\in[0, \sqrt{2}]
 \end{array}
\ee

\end{Theorem}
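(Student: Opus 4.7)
The plan is to follow the Volterra-perturbative template of Sections~\ref{S1}--\ref{S2}, but now using the modified parabolic cylinder functions $E(a,\cdot), E^*(a,\cdot)$ from Appendix~B as the unperturbed fundamental system, since for $|a|\le a_0$ the classical transition region of the Weber equation $u''=(a-x^2/4)u$ stays in a bounded range and no further matching across a large turning point is required. Setting $u_E(x)=E(a,x)(1+e(x))$ and using $E''=(a-x^2/4)E$ in \eqref{equu} reduces the problem to $(E^2 e')'=\epsilon E^2(1+e)$ with the small perturbation $\epsilon(x)=\tfrac{\hbar}{2}f(x\sqrt{\hbar/2})$. Imposing $e,e'\to 0$ as $x\to+\infty$ selects the correct branch; integrating twice converts the ODE into the Volterra equation
\begin{equation*}
e(x)=\int_x^\infty K(x,t)\bigl(1+e(t)\bigr)\,dt,\qquad K(x,t)=\frac{E^*(a,x)E(a,t)-E(a,x)E^*(a,t)}{W[E,E^*]\,E(a,x)}\,\epsilon(t)E(a,t).
\end{equation*}

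Next I would bound $K$ using the Weber asymptotics from Appendix~B. In the oscillatory tail $x\gtrsim 1$ we have $E^*=\overline{E}$, $|E(a,x)|^2\sim C/x$, and $|E^*/E|=1$, so $|K(x,t)|\lesssim |\epsilon(t)|/t$. On the bounded transition set surrounding the turning point (which lies in $|x|\lesssim\sqrt{|a|}\le\sqrt{a_0}$) the functions $E, E^*$ are smooth and $W[E,E^*]$ is bounded away from $0$, so $K$ is dominated by a constant times $|\epsilon(t)|$. The perturbation splits naturally according to the argument $y=x\sqrt{\hbar/2}$: by \eqref{estimf}, $|\epsilon(t)|\lesssim t^{-2}$ for $t\ge\sqrt{2/\hbar}$, while $|\epsilon(t)|\lesssim\hbar$ for $t\le\sqrt{2/\hbar}$ since $f$ is smooth at $0$. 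Splitting $\int_x^\infty$ at $t=\sqrt{2/\hbar}$ gives
\begin{equation*}
\int_{\sqrt{2/\hbar}}^\infty \frac{dt}{t^{3}}=O(\hbar),\qquad \int_{\max(x,1)}^{\sqrt{2/\hbar}}\frac{\hbar\,dt}{t}=O(\hbar\log\hbar^{-1}),
\end{equation*}
accounting for the two cases in \eqref{ese}. The Volterra equation is then closed in a weighted sup-norm by a direct application of Proposition~C.8 of~\cite{CDST}, invoked exactly as in the proof of Lemma~\ref{P1AB}.

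The derivative bounds \eqref{soluEp}, \eqref{eq72.5}, \eqref{likeasym} come from differentiating the Volterra equation. A direct computation gives $\partial_x K(x,t)=-E(a,t)^2\epsilon(t)/E(a,x)^2$, whose modulus is $\sim (x/t)|\epsilon(t)|$. In the oscillatory tail the pure modulus bound produces only $|\partial_x e|\lesssim x^{-1}$, but one integration by parts against the Fresnel phase of $E(a,t)^2$ extracts a further $x^{-2}$, yielding the $x^{-3-k}$ factor of \eqref{likeasym}; higher $x$- and $\beta$-derivatives are obtained by iterating this IBP mechanism, and the $\hbar^{-\ell+1}$ arises from $\partial_\beta=(2\hbar)^{-1}\partial_a$ combined with the joint smoothness of $E(a,x)$ in $(a,x)$ on $|a|\le a_0$. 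The companion solution $u_E^*$ is constructed identically with the roles of $E$ and $E^*$ swapped. The main obstacle is the bounded transition region, where $E$ is neither purely exponential nor purely oscillatory: one must verify that $W[E,E^*]$ stays away from zero uniformly in $a\in[-a_0,a_0]$, and that the logarithmic growth in \eqref{ese} is produced only by the middle range $1\lesssim t\lesssim\sqrt{2/\hbar}$ rather than by the bounded piece near $0$.
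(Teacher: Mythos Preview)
Your Volterra setup and the three--region splitting $[0,1]\cup[1,\sqrt{2/\hbar}]\cup[\sqrt{2/\hbar},\infty)$ match the paper exactly, including the identification of the middle interval as the source of the $\hbar\ln\hbar^{-1}$ loss. Two points deserve correction or comment.

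First, Proposition~C.8 of \cite{CDST} does not apply here: that result is specifically tailored to kernels built from $Bi$, whereas the present kernel is built from the Weber function $E(a,\cdot)$, which is oscillatory rather than exponentially growing on the whole half--line. The paper does not invoke any black--box Volterra lemma here but carries out the contraction by hand in the ball $\|e\|_\infty\le R\,\hbar\ln\hbar^{-1}$.

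Second, for the derivative bounds the paper takes a different (though equivalent) route. Instead of differentiating the Volterra equation for $e$, it passes to the Riccati unknown $\phi=u'/u-E'/E=e'/(1+e)$, which satisfies the closed nonlinear equation $\phi'+2G_0\phi+\phi^2=\tfrac12\hbar f(x\sqrt{\hbar/2})$ with $G_0=E'/E$. This is analyzed region by region (again $[0,\sqrt2]$, $[\sqrt2,\sqrt{2/\hbar}]$, $[\sqrt{2/\hbar},\infty)$) by separate contractions; the $x^{-3}$ decay in the far region is obtained after the change of variables $\xi=x^2/2$ and a single integration by parts against $e^{i\xi}$, exactly the Fresnel gain you describe. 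One then recovers $\partial_x e=\phi(1+e)$ via $1+e=\exp\int_\infty^x\phi$. The Riccati formulation buys a cleaner inductive structure for higher $x$-- and $\beta$--derivatives, since differentiating a first--order ODE is more transparent than repeatedly differentiating under a Volterra integral. For the $\beta$--derivatives, note that $\partial_\beta=(2\hbar)^{-1}\partial_a$ acting on $G_0$ is not bounded in $x$ by smoothness alone: one needs the separate fact (Lemma~\ref{DaG0}) that $\partial_a^\ell G_0=O(\langle x\rangle)$ uniformly in $|a|\le a_0$, which the paper proves via a Borel--Laplace representation.
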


\ 

The {\em Proof of Theorem\,\ref{betasmall}} is presented in \S\ref{esePf}-\S\ref{Pf73}. Section \S\ref{Monbsma} contains the monodromy and the scattering matrix in this case.

\subsection{Proof of \eqref{ese}}\label{esePf}
Denote for short $e(x;\hb, \beta)=e(x)$.

Substituting \eqref{soluE} in \eqref{equu}, we obtain the integral equation:  
\begin{multline}
\label{eq2}
e(x)=\int_{\infty}^x \frac1{E(a,s)^2}\int_{\infty}^s\frac{\hbar}2\, f(t\sqrt{{\hbar}/2})\, E(a,t)^2\, (1+e(t))\, dt\,ds
\\
=\int_{\infty}^x (1+e(t))\, \frac{\hbar}2\,  f(t\sqrt{{\hbar}/2})\,  E(a,t)^2\, \Big(\int_x^t\frac{ds}{E(a,s)^2}\Big)\, dt
\end{multline}
and using
$$\Big(\frac{E^*}{E}\Big)'=\frac{W[E,E^*]}{E^2}=\frac{-2i}{E^2},$$
the equation becomes:
\begin{multline}
\label{eq3}
={\frac {i\hbar}4}\int_{\infty}^x (1+e(t))\,f(t\sqrt{{\hbar}/2})\, E(a,t)^2\Big(\frac{E^*(a,t)}{E(a,t)}-\frac{E^*(a,x)}{E(a,x)}\Big)dt
\\
={\frac {i\hbar}4}\int_{\infty}^x (1+e(t))\,f(t\sqrt{{\hbar}/2})\, \Big(|E(a,t)|^2-E(a,t)^2\frac{E^*(a,x)}{E(a,x)}\Big)dt
\\
=:F(x)+[Ge](x)=:J[e](x)
\end{multline}
where
$$F(x)=\int_{\infty}^x K(x,t)\, dt,\ \ \ G[e](x)=\int_{\infty}^x K(x,t)\, e(t)\, dt$$
with
$$K(x,t)=\frac{i\hb}{4} f(t\sqrt{{\hbar}/2})\Big(|E(a,t)|^2-E(a,t)^2\frac{E^*(a,x)}{E(a,x)}\Big)$$

We use the following estimates (we use \eqref{estimf}, \eqref{asyE}, \eqref{errsymE}): for all $x\Ge 0$ and $a$ with $|a|\Le a_0$} (for $\hb<\text{Const.}$) we have: \newline
- for $t\in[0,1]$ we have $|E(a,t)|<C_1$ and $|f(t\sqrt{{\hbar}/2})|<C_2$ therefore
$$| K(x,t)|\Le\hb\,  C_2C_1^2/2\ \ \ \  \text{for all }t\in[0,1],\ x\Ge 0$$
-for $t\in[1,\sqrt{2/\hb}]$ we have $|E(a,t)|<C_3(=M_0)t^{-1/2}$ and $|f(t\sqrt{{\hbar}/2})|<C_2$ therefore
$$| K(x,t)|\Le\hb\, \frac{1}{t}\, C_2C_3^2/2\ \ \ \  \text{for all }t\in [1,\sqrt{2/\hb}],\ x\Ge 0$$
-for $t\Ge \sqrt{2/\hb}$ we have $|E(a,t)|<C_3t^{-1/2}$ and, from \eqref{estimf}, $|f(t\sqrt{{\hbar}/2})|<{C_4}/({\hb t^2})$ therefore
$$| K(x,t)|\Le \frac{1}{t^3}\, C_4C_3^2/2\ \ \ \  \text{for all }t\Ge\sqrt{2/\hb},\ x\Ge 0$$

Then for $x\Ge 0$,
\begin{multline}\label{fifty3}
|F(x)|\le\Big(\int_0^1+\int_1^{\sqrt{2/{\hbar}}}+\int_{\sqrt{2/{\hbar}}}^{\infty}\Big) \Big|
\hbar f(t\sqrt{{\hbar}/2})\left(|E(a,t)|^2-E(a,t)^2\frac{E^*(a,x)}{E(a,x)}\right)\Big| \, dt
\\ 
\le \text{Const. }\hbar +\text{Const. }\hbar\int_1^{\sqrt{2/{\hbar}}}\frac{1}t\,dt+\text{Const. }\int_{\sqrt{2/{\hbar}}}^{\infty}\, \frac{1}{t^3}\,dt
\\
=\tilde{C_1}\hbar+\tilde{C_2}\hbar\ln(\hbar^{-1})
\end{multline}
where $\tilde{C_1}$ and $\tilde{C_2}$ are independent of $\hbar$ and $a$.
\\

{\bf Remark} {\em in support of the order of the error $\hb\ln\hb$ in the second integral {of \eqref{fifty3}}.}

The estimate in the second integral seems optimal: 

(i) Denoting $F=|E(a,x)|$, $E(a,x)=Fe^{i\chi}$ we have
$$|E(a,t)|^2-E(a,t)^2\frac{E^*(a,x)}{E(a,x)}=F^2\left(1-e^{2i[\chi(t)-\chi(x)]}\right)$$
so there are no cancellations due to oscillations.

(ii) Estimating $|f(y)|\lesssim\frac1{1+y^2}$ and ${F^2}\lesssim\frac{1}{1+x}$ the second integral is estimated by
$$\hb \int_1^{\sqrt{2/{\hbar}}} { \frac1{1+\hb t^2/2}}\, \frac{1}{1+t}\,dt=O(\hb\ln\hb)$$
after an explicit calculation.

\medskip

Using the norm $\|e\|:=\sup_{x\Ge 0}|e(x)|$, we have, using similar estimates for $G[e]$,
\begin{multline}
\|J(e)\|\le \|F\|+\|G(e)\|
\le \tilde{C_1}\hbar+\tilde{C_2}\hbar\ln(\hbar^{-1})+(\tilde{C_3}\hbar+\tilde{C_4}\hbar\ln(\hbar^{-1}))\|e\|.
\end{multline}
Let $\|e\|\le \hbar\ln(\hbar^{-1})R$, $R=2(\tilde{C_1}+\tilde{C_2})$ for $\hbar$ small enough so that $\ln(\hbar^{-1})>1$ and $\tilde{C_3}\hbar+\tilde{C_4}\hbar\ln(\hbar^{-1})\le \frac12$. (The choice of $R$ is made to be independent of $\hbar$.)

Consider the closed ball $$B:=\{e(x):\|e\|\le \hbar\ln(\hbar^{-1})R\}$$ in the Banach space of continuous, bounded functions on $[0,+\infty)$. 
The mapping $e\mapsto F+Ge$ is a contractive mapping from $B$ to itself since if $e\in B$ then
%\begin{multline}
$$\|J(e)\|\le \tilde{C_1}\hbar+\tilde{C_2}\hbar\ln(\hbar^{-1})+(\tilde{C_3}\hbar+\tilde{C_4}\hbar\ln(\hbar^{-1})) \hbar\ln(\hbar^{-1})R \le \hbar\ln(\hbar^{-1})R$$
and
\begin{equation}
\|J(e_1)-J(e_2)\|=\|G(e_1)-G(e_2)\|\le (\tilde{C_3}\hbar+\tilde{C_4}\hbar\ln(\hbar^{-1}))\|e_1-e_2\|\le \frac12\|e_1-e_2\|.
\end{equation}
The calculation shows that the error term for the bigger $x\ge \sqrt{2/{\hbar}}$ is $O(\hbar)$ instead of $O(\hbar\ln(\hbar^{-1}))$. Note also that $e(x)=O(x^{-2})$ for $x\ge \sqrt{2/{\hbar}}$.

This completes the proof that equation \eqref{eq3}, $e=F+Ge$, has a continuous solution: $e=(I-G)^{-1}F$ with $\|e\|<\hbar\ln(\hbar^{-1})R$ for some constant $R$. Since both $u_E$ and $E(a,x)$ are of class $C^v$ in $x$, then so is $e$. Finally, since both the function $F$ and the operator $G$ depend analytically on the parameter $a$ for $|a|<a_0$ then $e$ is also analytic in $a$ (regularity in $a$ is re-obtained below.)

\subsection{Proof of \eqref{soluEp}.}

In \eqref{equu} we substitute $u(x)=\exp(\int G)$ (note that $G=u'/u$) and obtain the equation
\bel{eqG}
G'+G^2=a-\frac{x^2}{4}+\frac 12\hb f(x\sqrt{\hb/2})
\ee

Let $G_0(x)=E'(a,x)/E(a,x)$ whence $G_0'{+G_0^2}=a-\frac{x^2}{4}$. Denoting $G=G_0+\phi$, then $\phi$ satisfies
\bel{difeqphi}
\phi'+2G_0\phi=\frac 12\hb f(x\sqrt{\hb/2})-\phi^2
\ee
or, in integral form
\bel{eqphi}
\phi(x)=E(a,x)^{-2}\int_{+\infty}^x\left[\frac 12\hb f(x\sqrt{\hb/2})E(a,t)^2-\phi^2(t)E(a,t)^2\right]\, dt
\ee

\begin{Lemma}\label{Olema}
Equation \eqref{eqphi} has a unique solution satisfying:

(i) $|\phi(x)|< 2C x^{-3}<C\hb x^{-1}$ for $x>\sqrt{2/h}$

(ii) $|\phi(x)|<C\hb x^{-1}$ for $x\in[\sqrt{2}, \sqrt{2/h}]$

(iii) $|\phi(x)|<C\hb $ for $x\in[{0}, \sqrt{2}]$

{(iv) $\phi(x)$ behaves like a symbol.}

{(v) $\partial_\beta^\ell\phi(x)$ satisfies the estimates at (i)...(iii) multiplied by $\hb^{-\ell}$.}

\end{Lemma}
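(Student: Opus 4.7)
\textbf{Proof plan for Lemma \ref{Olema}.}
The plan is to solve \eqref{eqphi} by a contraction argument in a Banach space of continuous functions on $[0,\infty)$ equipped with a weighted sup norm that simultaneously encodes bounds (i)--(iii). Define a weight
\[
w(x)=\begin{cases}\hb & 0\Le x\Le \sqrt{2},\\ \hb/x & \sqrt{2}\Le x\Le \sqrt{2/\hb},\\ x^{-3} & x\Ge \sqrt{2/\hb},\end{cases}
\]
and the space $X=\{\phi\in C([0,\infty)):\|\phi\|_X:=\sup_{x\Ge 0}|\phi(x)|/w(x)<\infty\}$; note that these three branches are compatible at their common endpoints ($\hb/\sqrt{2/\hb}=(\hb/2)^{3/2}2^{-1/2}$ and the two branch values match up to constants). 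Denote the right-hand side of \eqref{eqphi} by $\mathcal{T}[\phi]=F+\mathcal{N}[\phi]$, where $F$ is the inhomogeneous term involving $\hb f$ and $\mathcal{N}[\phi]$ is the quadratic term $-E^{-2}\int_\infty^x\phi^2 E^2\,dt$.

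The first step is to rewrite $\mathcal{T}[\phi]$ using the same identity $(E^*/E)'=-2i/E^2$ that was exploited in the proof of \eqref{ese}; this gives a representation in which the kernel in $t,x$ is bounded by $2|E(a,t)|^2$ pointwise, so the dangerous amplification factor $|E(a,x)|^{-2}\sim x$ at large $x$ is absorbed into the difference $|E(a,t)|^2-E(a,t)^2E^*(a,x)/E(a,x)$. Using the asymptotics $|E(a,t)|^2\lesssim \la t\ra^{-1}$ (from \eqref{asyE}--\eqref{errsymE}), the symbol behavior \eqref{estimf} of $f$, and splitting the integral over the three ranges $[0,\sqrt{2}]$, $[\sqrt{2},\sqrt{2/\hb}]$, $[\sqrt{2/\hb},\infty)$, one obtains $|F(x)|\Le C_0\,w(x)$ on all of $[0,\infty)$ (the logarithm that appeared in \eqref{fifty3} does not appear here because the outer factor $E(a,x)^{-2}$ has been absorbed into the kernel).

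For the quadratic term one estimates $|\mathcal{N}[\phi](x)|\lesssim\|\phi\|_X^2\int_x^\infty w(t)^2|E(a,t)|^2\,dt$; a region-by-region calculation yields the bound $C_1\hb\,w(x)\|\phi\|_X^2$, with an extra gain of one power of $\hb$ coming from $w$ itself. Hence, taking $R=2C_0$, the map $\mathcal{T}$ sends the closed ball $B_R\subset X$ into itself for all sufficiently small $\hb$, and
\[
\|\mathcal{N}[\phi_1]-\mathcal{N}[\phi_2]\|_X\Le 2C_1 R\hb\,\|\phi_1-\phi_2\|_X<\tfrac12\|\phi_1-\phi_2\|_X
\]
gives contractivity. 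The unique fixed point furnishes $\phi\in B_R$ with bounds (i)--(iii).

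For the symbol-type statements (iv)--(v), I would bootstrap from the differential form \eqref{difeqphi}, rewritten as $\phi'=\tfrac12\hb f(x\sqrt{\hb/2})-2G_0\phi-\phi^2$; since $G_0=E'/E$ behaves like a symbol in $x$ (by \eqref{asyE} and the known recursion for $E'/E$), a straightforward induction on $k$ produces the announced $x$-symbol bounds, with each extra $x$-derivative improving decay by one power. For $\beta$-derivatives, since $a=\beta/(2\hb)$ enters only through $E(a,\cdot)$ and $E(a,\cdot)$ is $C^v$ in $a$, one differentiates \eqref{eqphi} in $\beta$ to get a linear Volterra equation for $\partial_\beta^\ell\phi$ of the same form, inhomogeneous in lower-order derivatives, and the chain-rule factor $\partial_\beta=(2\hb)^{-1}\partial_a$ supplies the $\hb^{-\ell}$ factors in (v). The main technical obstacle, as already signalled by the $\hb\ln\hb$ appearing in \eqref{fifty3}, is the intermediate window $x\in[\sqrt{2},\sqrt{2/\hb}]$ where $f$ has not yet attained its symbol decay; the key simplification relative to the proof of \eqref{ese} is that the cancellation identity $(E^*/E)'=-2i/E^2$ suffices to eliminate the logarithmic loss and yields the sharp $\hb/x$ bound directly.
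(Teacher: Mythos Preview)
Your contraction framework and the treatment of (iv)--(v) are essentially the paper's approach (the paper runs three separate contractions on the regions $x\ge\sqrt{2/\hb}$, $[\sqrt2,\sqrt{2/\hb}]$, $[0,\sqrt2]$ rather than one global weighted space, but that is only an organizational difference). The real issue is your estimate of the inhomogeneous term $F$.

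You propose to rewrite $\mathcal{T}[\phi]$ via $(E^*/E)'=-2i/E^2$ to obtain a kernel bounded by $2|E(a,t)|^2$, and you assert this removes the logarithm. But that identity acts on integrals of $E^{-2}$, and in \eqref{eqphi} there is \emph{no} $E^{-2}$ integral: the factor $E(a,x)^{-2}$ sits outside, and the integrand carries $E(a,t)^2$. The representation you describe is exactly \eqref{eq3} for the $e$--equation, which has a double integral; there the identity was already used and the logarithm \emph{still appeared} (see \eqref{fifty3} and the Remark following it). If you simply bound $|E(a,x)|^{-2}\sim x$, $|E(a,t)|^2\sim t^{-1}$, $|f|\lesssim 1$ on $[\sqrt2,\sqrt{2/\hb}]$, you get
\[
|F(x)|\lesssim x\int_x^{\sqrt{2/\hb}}\hb\,t^{-1}\,dt\;+\;x\int_{\sqrt{2/\hb}}^\infty t^{-3}\,dt
\;\sim\; x\hb\ln\!\frac{\sqrt{2/\hb}}{x}\;+\;x\hb,
\]
which is $\gg \hb/x$ near both endpoints of the interval. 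So the ``direct'' route you sketch fails.

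The mechanism the paper actually uses is different: after the substitution $\tau=t^2/2$ one has $E(a,\sqrt{2\tau})^2\sim \tau^{-\frac12-ia}e^{i\tau}(1+\tilde e_E)^2$, and one integrates by parts against the oscillatory factor $e^{i\tau}$ (see \eqref{eq88}). This gains one power of $\tau^{-1}$ and produces the sharp bounds $|F_0|\lesssim\xi^{-3/2}$ on $\xi\ge\hb^{-1}$ and $|F_1|\lesssim\hb\,\xi^{-1/2}$ on $[1,\hb^{-1}]$, i.e.\ exactly your weight $w$. Your plan becomes correct once this oscillatory integration by parts replaces the $(E^*/E)'$ identity; without it the key estimate does not go through. (A minor point: your bound on $\mathcal N[\phi]$ omits the factor $|E(a,x)|^{-2}\sim x$; once restored, the quadratic term still gains the extra $\hb$ you claim.)
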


The proof is found in \S\ref{PfOlema} below. Let us first show that this implies \eqref{soluEp}. 

The relation $G=G_0+\phi$ is, in fact $u'/u=E'/E+\phi$. Hence, we have found a solution $u(x)$ so that
$$\ln u(x)=\ln E(a,x)+\int_\infty^x \phi(t)\, dt.$$ Therefore, $u(x)=E(a,x) (1+e(x))$ where 
\bel{eeephi}
1+e(x)=\exp(\int_\infty^x \phi(t)\, dt)
\ee 
showing that $e(x)=O(x^{-2})={O(\hb)} $ for $x>\sqrt{2/\hb}$. In conclusion,  $u(x)$ is the solution we found in \S\ref{esePf}.
 %Should get $\hb^{3/2}x^{-1}$\b{why?}, will check later, and include the other regions.}

Note that from \eqref{eeephi} and \eqref{eqphi} it is easy to see that $e$ is of class $C^v$.

On the other hand, we have
\begin{multline}
u'(x)=u(x)\left( \frac{E'(a,x)}{E(a,x)} +\phi(x)\right)=E(a,x)(1+e(x))\left( \frac{E'(a,x)}{E(a,x)} +\phi(x)\right)\\  =E'(a,x)\left(1+\frac{E(a,x)}{E'(a,x)}\phi(x)+e(x)+\frac{E(a,x)}{E'(a,x)}\phi(x)e(x)\right)
\end{multline}
which implies \eqref{soluEp}.

\subsubsection{ Proof of Lemma\,\ref{Olema}.}\label{PfOlema}

Denoting $\xi=x^2/2$, $\phi(x)=\phi(\sqrt{2\xi}):=\tilde{\phi}(\xi)$ and changing the integration variable to $\tau=t^2/2$ equation \eqref{eqphi} becomes
\begin{multline}\label{eqphit1}
\tilde{\phi}(\xi)=E(a,\sqrt{2\xi})^{-2}\int_{+\infty}^\xi\left[\frac 12\hb f(\sqrt{\hb\tau})E(a,\sqrt{2\tau})^2-\tilde{\phi}^2(\tau) {E(a,\sqrt{2\tau})^2}\right]\, \frac{d\tau}{\sqrt{2\tau}}\\
=E(a,\sqrt{2\xi})^{-2}\int_{+\infty}^\xi  \frac 12\hb f(\sqrt{\hb\tau})E(a,\sqrt{2\tau})^2\, \frac{d\tau}{\sqrt{2\tau}}-E(a,\sqrt{2\xi})^{-2}\int_{+\infty}^\xi\tilde{\phi}^2(\tau) {E(a,\sqrt{2\tau})^2}\, \frac{d\tau}{\sqrt{2\tau}}\\
:=F_0(\xi)+\tilde{J}_0\tilde{\phi}(\xi):=\tilde{J}\tilde{\phi}(\xi)
\end{multline}

\

{\em Proof of (i).}

\

We show that operator $\tilde{J}$ is contractive in the Banach space $\mathcal{B}_1$ of continuous functions $\tilde{\phi}$ on the interval $[\hb^{-1},\infty)$ equipped with the norm 
\bel{norm1}
\|\tilde{\phi}\|_1=\sup_{\xi\Ge\hb^{-1}}  \xi^{3/2}|\tilde{\phi}(\xi)|
\ee

Indeed, if $\tilde{\phi}\in\mathcal{B}_1$ then using \eqref{symbE} we have
$$2\sqrt{2}F_0(\xi)=\xi^{\frac 12+ia}e^{-i\xi}(1+\tilde{e}_E)^{-2}\int_{+\infty}^\xi   \hb  f(\sqrt{\hb\tau}) \tau^{-1-ia}   e^{i\tau}(1+\tilde{e}_E)^{2}
{d\tau}$$
and integrating by parts this equals 
\begin{multline}\label{eq88}
\xi^{-\frac 12+ia}  \hb f(\sqrt{\hb\xi}) \\
- \xi^{\frac 12+ia}e^{-i\xi}(1+\tilde{e}_E)^{-2}\int_{+\infty}^\xi  {d\tau}  \hb  \left[ \frac{\sqrt{\hb}}{2\sqrt{\tau}}f'(\sqrt{\hb\tau}) (1+\tilde{e}_E)^{2} + f(\sqrt{\hb\tau}) (1+\tilde{e}_E)\tilde{e}_E'\right]\int_\infty^\tau dt\, t^{-1-ia}   e^{it}
\end{multline}
where from \eqref{estimf} and the fact that $|\int_\infty^\tau dt\, t^{-1-ia}   e^{it}|\lesssim \tau^{-1}$  we infer that $|F_0|<C_1\xi^{-3/2}$.

%\r{the antiderivative of $\tau^{-1-ia}e^{i\tau}$ is not complete in the first term, so I changed it $\Rightarrow$}

Integrating by parts, \eqref{eq88} becomes
$$=-i\hb f(\sqrt{\hb\xi})\xi^{-\frac12}
+i\hb\xi^{\frac 12+ia}e^{-i\xi}(1+\tilde{e}_E)^{-2}
\int_{+\infty}^\xi  {d\tau} e^{i\tau}\left[f(\sqrt{\hb\tau}) \tau^{-1-ia}   e^{i\tau}(1+\tilde{e}_E)^{2}\right]'. $$
Since $f(y)$ and $\tilde e_E(\tau)=e_E(t)$ behave like a symbol and in view of \eqref{estimf}, we obtain $|F_0|<C_1\xi^{-3/2}$.

Using \eqref{symbE} and \eqref{norm1} we obtain
$$|\tilde{J}_0\tilde{\phi}(\xi)|\lesssim \xi^{\frac 12}{|1+\tilde{e}_E|^{-2}}\int_\xi^{+\infty}   \tau^{-3} \tau^{-1}   {|1+\tilde{e}_E|^{2}}
{d\tau}\ \|\tilde{\phi}\|_1^2\Le C_2\xi^{-5/2}\|\tilde{\phi}\|_1^2<C_2\hb\xi^{-3/2}\|\tilde{\phi}\|_1^2$$

Consider the ball $\|\tilde{\phi}\|_1\Le R$.
A similar estimate shows that $\tilde{J}$ is a contraction if $2C_2\hb R<1$. 
The ball  is invariant under $\tilde{J}$ if $C_1+C_2\hb R^2\Le R$ and both conditions are clearly possible  (for $\hb$ small enough).

In particular, the solution satisfies 
$$\sup_{\xi\Ge \hb^{-1}} |\tilde{\phi}| \Le {\rm Const.}\ \xi^{-3/2}<{\rm Const.}\ \hb\xi^{-1/2}$$
{which is equivalent to (i).}

\

{\em Proof of (ii).}

\

For $\xi\in [1,\hb^{-1}]$ we rewrite the equation {\eqref{eqphit1}} as 
\begin{multline}\label{eqphit2}
\tilde{\phi}(\xi)=\frac{E(a,\sqrt{2\hb^{-1}})^2}{E(a,\sqrt{2\xi})^2}\,\tilde{\phi}(\hb^{-1})+
E(a,\sqrt{2\xi})^{-2}\int_{\hb^{-1}}^\xi\left[\frac 12\hb f(\sqrt{\hb\tau})E(a,\sqrt{2\tau})^2-\tilde{\phi}^2(\tau)E(a,\tau)^2\right]\, \frac{d\tau}{\sqrt{2\tau}}\\
=\frac{E(a,\sqrt{2\hb^{-1}})^2}{E(a,\sqrt{2\xi})^2}\,\tilde{\phi}(\hb^{-1})+
E(a,\sqrt{2\xi})^{-2}\int_{\hb^{-1}}^\xi  \frac 12\hb f(\sqrt{\hb\tau})E(a,\sqrt{2\tau})^2\, \frac{d\tau}{\sqrt{2\tau}}\\
-E(a,\sqrt{2\xi})^{-2}\int_{\hb^{-1}}^\xi\tilde{\phi}^2(\tau)E(a,\tau)^2\, \frac{d\tau}{\sqrt{2\tau}}\\
:=F_1(\xi)+\tilde{J}_1\tilde{\phi}(\xi):=\tilde{J}_2\tilde{\phi}(\xi)
\end{multline}
and we show that $\tilde{J}_2$ is contractive in the Banach space  $\mathcal{B}_2$ of continuous functions $\tilde{\phi}$ on the interval $[1,\hb^{-1}]$ equipped with the norm 
\bel{norm2}
\|\tilde{\phi}\|_2=\sup_{\xi\in [1,\hb^{-1}]}  \xi^{1/2}|\tilde{\phi}(\xi)|
\ee

The estimates are similar to those at point (i), except that on this interval we use the fact that $f$ is bounded, and we obtain that the solution is $O(\hb\xi^{-1/2})$.

\

{\em Proof of (iii).}

\

For $\xi\in [0,1]$ we rewrite the equation {\eqref{eqphit1}} as 
\begin{multline}\label{eqphit3}
\tilde{\phi}(\xi)=\frac{E(a,\sqrt{2})^2}{E(a,\sqrt{2\xi})^2}\,\tilde{\phi}(1)+
E(a,\sqrt{2\xi})^{-2}\int_{1}^\xi\left[\frac 12\hb f(\sqrt{\hb\tau})E(a,\sqrt{2\tau})^2-\tilde{\phi}^2(\tau)E(a,\tau)^2\right]\, \frac{d\tau}{\sqrt{2\tau}}\\
=\frac{E(a,\sqrt{2})^2}{E(a,\sqrt{2\xi})^2}\,\tilde{\phi}(1)+
E(a,\sqrt{2\xi})^{-2}\int_{1}^\xi  \frac 12\hb f(\sqrt{\hb\tau})E(a,\sqrt{2\tau})^2\, \frac{d\tau}{\sqrt{2\tau}}\\
-E(a,\sqrt{2\xi})^{-2}\int_{1}^\xi\tilde{\phi}^2(\tau)E(a,\tau)^2\, \frac{d\tau}{\sqrt{2\tau}}\\
:={F_2(\xi)+\tilde{J}_3\tilde{\phi}(\xi):=\tilde{J}_4\tilde{\phi}(\xi).}
\end{multline}

%\r{ to be continued...I think we go back to variable x and it is contraction in sup norm}

 We show that $\tilde{J}_4$ is contractive in the Banach space  $\mathcal{B}_3$ of continuous functions $\tilde{\phi}$ on the interval $[0,1]$ equipped with the sup norm 
\bel{supnorm}
\|\tilde{\phi}\|_{\infty}=\sup_{\xi\in [0,1]} |\tilde{\phi}(\xi)|
\ee
The estimates are similar to those in parts (i) and (ii), except that on this interval we use the fact that not only $f$ is bounded, but also $C_1\le|E(a,\sqrt{2\xi})|\le C_2$ for some positive constants $C_1$ and $C_2$, and we obtain that the solution is $O(\hb)$.

\

{ \em Proof of (iv).}
This results directly using the differential equation \eqref{difeqphi} and the fact that $G_0$ behaves like a symbol, see \eqref{symbE}.

\

 {\em Proof of (v).}
 Differentiating equation \eqref{difeqphi} with respect to $\beta$ we obtain
$$\partial_\beta \phi'+(2G_0+2\phi)\partial_\beta \phi=\frac 12\hb \partial_\beta f(x\sqrt{\hb/2})-\frac{1}{\hb}\partial _a G_0\, \phi$$
and using \eqref{eeephi} 

$$\partial_\beta\phi(x)=E(a,x)^{-2}(1+e(x))^{-2}\int_{+\infty}^x\left[\frac 12\hb \partial_\beta {f(t\sqrt{\hb/2})}-\frac{1}{\hb}\partial _a G_0\, \phi\right]E(a,t)^2(1+e(t))^{2}\, dt$$
where the integral has the same form as the nonhomogeneous term in \eqref{eqphi}, only the integrand has a factor ${\hb}^{-1}$.

Since $\partial_\beta^{\ell}G_0(x)\sim x$ for $x\to \infty$ (see Lemma\,\ref{DaG0}) then using (i)...(iii) we see that $G_0\, \phi$ has the same behavior as $\hb f(x\sqrt{\hb/2})$ hence the same estimates as for the nonhomogeneous term in the proof of (i)...(iii) apply, yielding the same result only multiplied by ${\hb}^{-1}$.

Higher order derivatives are estimated by induction on $\ell$. Differentiating equation \eqref{difeqphi} $\ell$ times with respect to $\beta$ we obtain
\begin{multline}
\partial_\beta^\ell \phi'+(2G_0+2\phi)\partial_\beta^\ell \phi=\frac 12\hb \partial_\beta^\ell f(x\sqrt{\hb/2})\\
-2\sum_{j=1}^{\ell-1} \binom{\ell}{j}    (2\hb)^{-j}\partial _a^j G_0\, \partial_\beta^{\ell-j}\phi-\sum_{j=1}^{\ell-1} \binom{\ell}{j}   \partial_\beta^{j}\ \, \partial_\beta^{\ell-j}\phi
\end{multline}

As for the first derivative also for arbitrary $\ell$ the same estimates as for the nonhomogeneous term in the proof of (i)...(iii) apply, yielding the same result only (after using the induction hypothesis) multiplied by ${\hb}^{-\ell}$.

\subsection{Proof of \eqref{likeasym}}\label{Pf73}

Differentiating in \eqref{eeephi} we obtain $\partial_xe=\phi(1+e)$ therefore $\partial_xe$ satisfy the same estimates as $\phi$ given in Lemma\,\ref{Olema}. The higher derivatives are estimated by a straightforward induction.

Also from \eqref{eeephi} we have  $$\partial_\beta e=\left(\int_\infty^x \partial_\beta\phi(t)\, dt\right)\ (1+e)$$  and using Lemma\,\ref{Olema} (v) followed by (i)-(iii) we obtain the stated estimate for $\partial_\beta e$.

Estimates for higher order derivatives are found similarly, by induction.

This completes the proof of Theorem\,\ref{betasmall}.

\subsection{Solutions for $x\Le0$} Arguing as in \S\ref{y1neg} we obtain as a consequence of Theorem\,\ref{betasmall} that
\begin{Corollary}\label{C10}
Let $x\Le 0$. Equation \eqref{equu} has two independent solutions $u_E^\ell(x)$ and $u_E^{*\,\ell}(x)=\overline{u_E^\ell(x)}$ satisfying
$$u_E^\ell(x)=E\left(a,-x\right)\, (1+e^\ell(x;\hb, \beta))$$
where $e^\ell(x;\hb, \beta)$ satisfies \eqref{ese}, \eqref{eq72.5} for $x\to -\infty$ and \eqref{likeasym} with $x$ replaced by $-x(=|x|)$.
\end{Corollary}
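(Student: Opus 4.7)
The plan is to reduce Corollary\,\ref{C10} to Theorem\,\ref{betasmall} via the reflection $x \mapsto -x$, exactly as was done in \S\ref{y1neg} for the Airy--Weber problem. First I would define $\tilde u(\tilde x) := u(-\tilde x)$ and observe that if $u$ solves \eqref{equu} on $x \le 0$ then $\tilde u$ solves the same equation on $\tilde x \ge 0$ with $f(\tilde x \sqrt{\hb/2})$ replaced by $\tilde f(\tilde x\sqrt{\hb/2}) := f(-\tilde x\sqrt{\hb/2})$. The quadratic term $a - x^2/4$ and the parameter $a$ are unaffected by the sign flip, so the structural form of \eqref{equu} is preserved.

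Next I would verify that $\tilde f$ satisfies all hypotheses actually used in the proof of Theorem\,\ref{betasmall}. The ingredients that entered were the asymptotic decay $f(y) \sim \tfrac 34 y^{-2}$ as $y \to \pm\infty$ (see \eqref{estimf}) and the symbol-type behavior of $f$ in $y$ uniformly in $\beta$ (see the remark following \eqref{eq:S}). Both properties are symmetric in the sign of $y$, so $\tilde f$ inherits them with the same constants. Consequently Theorem\,\ref{betasmall} applies verbatim to the transformed equation and produces a solution of the form $\tilde u_E(\tilde x) = E(a,\tilde x)\bigl(1 + \tilde e(\tilde x;\hb,\beta)\bigr)$ on $\tilde x \ge 0$, with $\tilde e$ obeying the estimates \eqref{ese}, \eqref{eq72.5}, \eqref{likeasym}.

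Setting $u_E^\ell(x) := \tilde u_E(-x)$ and $e^\ell(x;\hb,\beta) := \tilde e(-x;\hb,\beta)$ for $x \le 0$ then yields the first solution claimed in the corollary: the asymptotics valid for $\tilde x \to +\infty$ become the stated asymptotics for $x \to -\infty$, and every $\tilde x$ in the symbol-type bounds of \eqref{likeasym} becomes $|x| = -x$, which is exactly the replacement announced in the statement. The derivative estimate \eqref{soluEp} transforms as well, up to an irrelevant sign absorbed into the error term.

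The second solution comes from complex conjugation. Since $a = \beta/(2\hb)$ is real and the coefficients of \eqref{equu} are real, $\overline{u_E^\ell(x)}$ is again a solution, and the identity $E^*(a,\cdot) = \overline{E(a,\cdot)}$ (Appendix B) gives the representation $u_E^{*,\ell}(x) = E^*(a,-x)\bigl(1 + \overline{e^\ell(x;\hb,\beta)}\bigr)$. Linear independence from $u_E^\ell$ follows from the Wronskian $W[E,E^*] = -2i \ne 0$ together with the smallness of the errors $e^\ell$, $\overline{e^\ell}$. No step presents a genuine obstacle, since all the analytic content is already carried by Theorem\,\ref{betasmall}; the only point requiring attention is the two-sided nature of the assumptions on $f$, which makes the reflection reduction essentially cost-free.
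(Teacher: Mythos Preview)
Your proposal is correct and follows essentially the same approach as the paper: the paper's proof is simply ``Arguing as in \S\ref{y1neg} we obtain as a consequence of Theorem\,\ref{betasmall}\ldots'', i.e., the reflection $x\mapsto -x$ transports Theorem\,\ref{betasmall} from $x\ge 0$ to $x\le 0$. Your write-up supplies the details the paper leaves implicit (verifying the hypotheses on $\tilde f$, obtaining the conjugate solution, checking independence), all of which are straightforward.
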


\subsection{Matching at $x=0$ and the scattering matrix}\label{Monbsma} 

The matching, monodromy and scattering matrix is deduced as in \S\ref{Match0}. As expected, the dominant term of the monodromy matrix turns out to be exactly \eqref{monof}. The rest of this section shows the main steps of the calculation which leads to this result.

Working back through the substitutions \eqref{sbbets} the solution $u_E(x)$ in \eqref{soluE} corresponds to
$$\psi(\xi(y))=\sqrt{\xi'}\psi_2(y)=\sqrt{\xi'}u_E(y\sqrt{2/\hb})$$
We have
$$u_E(y\sqrt{2/\hb})=E(\frac{\beta}{2\hb},y\sqrt{2/\hb})(1+O(y^{-2}))=2^{1/4}\hb^{\frac 14+i\frac{\beta}{4\hb}} e^{i\pi/4+i\phi_2/2}y^{-\frac 12-i\frac{\beta}{2\hb}}e^{i\frac{y^2}{2\hb}}(1+O(y^{-2}))$$
and therefore, as in \S\ref{ScatMat},
$$\sqrt{\xi'}u_E(y\sqrt{2/\hb})=\tilde{B} e^{i\tilde{\phi}}y^{-i\frac{\beta}{2\hb}}e^{i\frac{y^2}{2\hb}}(1+O(y^{-2})),\ \text{where }\tilde{B} \in\RR,\ e^{i\tilde{\phi}}=\hb^{i\frac{\beta}{4\hb}} e^{i\pi/4+i\phi_2/2} $$

Comparing to \eqref{fpmr}, \eqref{fpml} we see that
$$ f_+^r(\xi(y))=\tilde{K}_+  \sqrt{\xi'}\,  u_E(y\sqrt{2/\hb}),\ \text{where }\tilde{K}_\pm=\frac{1}{\tilde{B} e^{\pm i\tilde{\phi}}}\, e^{\pm i\frac{\sqrt{E}C_+}{\hb}}$$
and
$$ f_+^\ell(\xi(y))=\tilde{K}_-^\ell \sqrt{\xi'}u_E^*(y\sqrt{2/\hb}),\ \text{where }\tilde{K}_\pm^\ell=\frac{1}{\tilde{B} e^{\pm i\tilde{\phi}}}e^{\pm i\frac{\sqrt{E}C_-}{\hb}}$$

Of course, $f_\pm^{r,\ell}(\xi(y))$ is the complex conjugate of $f_\mp^{r,\ell}(\xi(y))$.

As in \S\ref{ScatMat} a direct calculation gives

$$ \frac 1{\sqrt{\xi'(y)}}F^{\ell}:=\frac 1{\sqrt{\xi'(y)}}[f_+^{\ell},f_-^{\ell}]=  \frac 1{\sqrt{\xi'(y)}}F^r \,(I+\hb\ln\hb^{-1}\mathcal{R}_1)\mathcal{M}_0(I+\hb\ln\hb^{-1}\mathcal{T}_1) 
$$

$$ \frac 1{\sqrt{\xi'(y)}}F^r:=\frac 1{\sqrt{\xi'(y)}}[f_+^r,f_-^l]=  \frac 1{\sqrt{\xi'(y)}}F^\ell \,(I+\hb\ln\hb^{-1}\mathcal{R}_1)\mathcal{M}_0(I+\hb\ln\hb^{-1}\mathcal{T}_1) 
$$
with $\mathcal{M}_0$ given by \eqref{monof} and the matrices $\mathcal{R}_1,\, \mathcal{T}_1$ have entries which are rational functions in $e(0,\hb,\beta)$ and $e^*(0,\hb,\beta)$ and therefore satisfy, by \eqref{likeasym}, $|\partial_\beta^\ell R_{ij}|\lesssim   \hb^{-\ell+1}$.

It follows that the dominant behavior of the scattering matrix is the same as in \S\ref{ScatMat}, and we obtain
 \eqref{S11}, \eqref{S21} for $\beta\Ge 0$ and, for $\beta<0$ we arrive, as in  \S\ref{ScatMat3}, at the formulas  \eqref{S11n}, \eqref{S21n}.

\section{The case $\beta<0$ with $\hb/|\beta|<Const.$}\label{Bcase}

Denote $-\beta=B>0$. Equation \eqref{eqpsi2} becomes
\bel{eqnegbeta}
\frac{d^2{\psi_2}}{dy^2}=-\hb^{-2}(B+y^2){\psi_2}+f(y){\psi_2}
\ee

In this case there are no turning points, and the behavior of the Weber functions (when $f\equiv 0$) is purely oscillatory. We will approximate solutions using the Airy functions, similar to the approach in \S\ref{Aicase}. 

We substitute
\bel{psi123}
h_3=\hb/B,\ \ y=\sqrt{B}y_3,\ \ {\psi_2}(y):=\psi_3(y_3),\ \ B f(\sqrt{B}y_3):=f_3(y_3)
\ee
which transforms \eqref{eqnegbeta} to
\bel{eqpsi3}
\frac{d^2\psi_3}{dy_3^2}=-h_3^{-2}(1+y_3^2)\psi_3+f_3\psi_3
\ee
which can be viewed as a perturbation of the Weber equation
\bel{eqw3}
\frac{d^2w}{dy_3^2}=-h_3^{-2}(1+y_3^2)w
\ee

Denote, for any $\delta>0$, 
\bel{eta3}
\eta_3(y_3)=\left(\frac 32\int_{-\delta}^{y_3}\sqrt{t^2+1}\, dt\right)^{2/3}\ {\rm for \ }y_3\Ge 0\ee
and let
\bel{defg3}
g_3(y_3)=\left(\frac{\eta_3(y_3)}{y_3^2+1}\right)^{1/4}
\ee

These functions satisfy relations similar to \eqref{useide}, more precisely
\bel{useide3}
g_3^2\eta_3'=1,\ \ \eta_3\eta_3'^2=y_3^2+1,\ \ \eta_3/g_3^4=y_3^2+1
\ee
and note that $\eta_3(y_3)$ is $C^\infty([0,+\infty))$, is increasing, and
\bel{asyetag3}
\eta_3(y_3)\sim (3/4)^{2/3}y_3^{4/3}, \ \ { g_3^4(y_3)\sim (3/4)^{2/3}\, y_3^{-2/3}  \ \ \ \text{for }y_3\to\infty}
\ee

We proceed as in \S\ref{PFLem6}: 
substituting
$$\psi_3(y_3)=g_3(y_3)F( -h_3^{-2/3}\eta_3(y_3)),\ \ -\eta_3(y_3)=\zeta,\ \ \nu=h_3^{-1}$$
equation \eqref{eqpsi3} becomes
$$\frac{d^2}{d\zeta^2}F''=\nu^2\zeta F+g_3^4V_3F,\ \ \ \text{where }V_3=f_3-\frac{g''}g$$
to which we apply Lemma D.5 in \cite{CDST} (Lemma\,\ref{lem:bessel-} in \S\ref{PropC8}). In the present case $\zeta\Le-\eta(0)<0$ but the proof of Lemma D.5 in \cite{CDST} goes through as such. Working back through the substitutions we obtain the Jost solutions of \eqref{eqpsi3}:

\bl\label{L53} Denote
\bel{defAB3}
A_3(y_3)=g_3(y_3){\rm Ai}\left(-h_3^{-2/3}\eta_3(y_3)\right),\ \ B_3(y_3)=g_3(y_3){\rm Bi}\left(-h_3^{-2/3}\eta_3(y_3)\right)
\ee
where ${\rm Ai},\,{\rm Bi}$ are the Airy functions.

(i) For $y_3\Ge 0$ eq. \eqref{eqpsi3} has two independent solutions of the form
\bel{solppm3}
\psi_{3,\pm}(y_3)=\left[A_3(y_3)\mp i B_3(y_3)\right]\,  \left(1+h_3 c_\pm(y_3;h_3,B)\right)
\ee
where
\bel{estimc3}
\big|\partial_{y_3}^k\partial_{B}^l c_\pm\big|\Le C_{kl}\, \la y_3\ra^{-2-k}B^{-l}
\ee
Also, at $y_3=0$:
\bel{estimc13}
\big|\partial_{\beta}^l c_\pm(0,h_3,\beta)\big|\Le \,C_{l}\,\beta^{-l},\ \ \big|\partial_{\beta}^l \partial_{y_3}\, c_\pm(0,h_3,\beta)\big|\Le h_3^{-2/3}\,C_{l}\,B^{-l}
\ee

(ii) Furthermore:
$$\psi'_{3,\pm}(y_3)=\left[A'_3(y_3)\mp i B'_3(y_3)\right]\,  \left(1+h_3 {c^d_\pm(y_3;h_3,B)}\right)$$
where the error terms ${c^d_\pm}$ satisfy estimates similar to \eqref{estimc3},\,\eqref{estimc13} for $y_3\Ge 0$.
\el

In particular, for $f\equiv 0$, we obtain:
\begin{Corollary}\label{C63}
 The Weber equation \eqref{eqw3} has two solutions $w_{3,\pm}$ estimated as in Lemma~\ref{L53}.
\end{Corollary}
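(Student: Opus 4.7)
The plan is straightforward: Corollary~\ref{C63} is the $f\equiv 0$ specialization of Lemma~\ref{L53}, and the proof goes by running the argument of Lemma~\ref{L53} verbatim with the perturbation switched off. Concretely, setting $f\equiv 0$ in the substitution \eqref{psi123} forces $f_3\equiv 0$, so equation \eqref{eqpsi3} reduces identically to the Weber equation \eqref{eqw3}. I would simply invoke Lemma~\ref{L53} applied to this reduced equation.

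The point worth articulating is that even after killing the perturbation $f_3$, the Liouville--Green style reduction used to prove Lemma~\ref{L53} still introduces a Schwarzian remainder: substituting $\psi_3(y_3)=g_3(y_3)\,F(-h_3^{-2/3}\eta_3(y_3))$ transforms \eqref{eqw3} into
\[
\frac{d^2 F}{d\zeta^2}=\nu^2\zeta F+g_3^4\bigl(-g_3''/g_3\bigr)F,\qquad \zeta=-\eta_3(y_3),\ \nu=h_3^{-1}.
\]
Thus the effective potential is $V_3=-g_3''/g_3$ rather than zero, and Lemma~D.5 of \cite{CDST} (our Lemma~\ref{lem:bessel-}) still needs to be invoked. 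The symbol-like bound $|g_3^4\,g_3''/g_3|\lesssim \langle y_3\rangle^{-2}$ needed to apply that lemma follows from \eqref{useide3} and the asymptotics \eqref{asyetag3}, exactly as in the proof of Lemma~\ref{L53}.

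Reverting through $F\mapsto g_3 F$ and $\zeta=-\eta_3(y_3)$, the two solutions produced by Lemma~\ref{lem:bessel-} become
\[
w_{3,\pm}(y_3)=\bigl[A_3(y_3)\mp i B_3(y_3)\bigr]\bigl(1+h_3\,\tilde c_\pm(y_3;h_3,B)\bigr),
\]
with errors $\tilde c_\pm$ inheriting the estimates \eqref{estimc3}--\eqref{estimc13}; the derivative statement in part (ii) follows by the same differentiation trick used at the end of \S\ref{PfL3} and \S\ref{PFLem6} (dividing $w_{3,\pm}'$ by $A_3'\mp iB_3'$, which have no common zeros on $[0,\infty)$ by the Airy asymptotics).

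No new obstacle arises beyond checking that the $V_3=-g_3''/g_3$ piece obeys the symbol hypotheses of Lemma~\ref{lem:bessel-}; this is the only computation to verify, and it is identical to the bound already exploited when proving Lemma~\ref{L53}. In particular, uniformity in $B$ for $h_3/|B|\lesssim 1$ is automatic since $V_3$ is independent of $B$.
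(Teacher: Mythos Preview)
Your proposal is correct and matches the paper's approach exactly: the paper treats Corollary~\ref{C63} as an immediate specialization of Lemma~\ref{L53} to $f\equiv 0$, and your write-up simply spells out that the Schwarzian term $-g_3''/g_3$ survives in $V_3$ so that Lemma~\ref{lem:bessel-} is still what is being invoked.
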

From Lemma\,\eqref{L53} and {Corollary}\,\ref{C63} it follows that
\begin{Corollary}\label{C73}
$$\psi_{3,\pm}=w_{3,\pm}\left(1+h_3 \tilde{c}_\pm(y_3;h_3,\beta)\right)$$
with $\tilde{c}_\pm$ satisfying \eqref{estimc3},\eqref{estimc13}.
\end{Corollary}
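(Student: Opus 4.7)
The proof is a short division argument and should parallel the passage from Lemma \ref{L5} plus Corollary \ref{C6} to Corollary \ref{C7} in the previous section. The plan is to view $w_{3,\pm}$ simply as the specialization of $\psi_{3,\pm}$ to the unperturbed case $f\equiv 0$: Corollary \ref{C63} gives us
\[
w_{3,\pm}(y_3) = \bigl[A_3(y_3)\mp iB_3(y_3)\bigr]\bigl(1+h_3\,c^{0}_\pm(y_3;h_3,B)\bigr),
\]
where $c^{0}_\pm$ obeys the same bounds \eqref{estimc3}--\eqref{estimc13} as $c_\pm$ (since the proof of Lemma \ref{L53} applies verbatim with $f_3\equiv 0$). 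Dividing the representation \eqref{solppm3} for $\psi_{3,\pm}$ by that for $w_{3,\pm}$, the prefactor $A_3\mp iB_3$ cancels and one obtains
\[
\frac{\psi_{3,\pm}}{w_{3,\pm}} = \frac{1+h_3 c_\pm}{1+h_3 c^{0}_\pm} = 1 + h_3\,\tilde c_\pm, \qquad \tilde c_\pm := \frac{c_\pm - c^{0}_\pm}{1+h_3 c^{0}_\pm}.
\]

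For $h_3\ll 1$, the factor $(1+h_3 c^{0}_\pm)^{-1}$ is well-defined and smooth, because $|c^{0}_\pm|\lesssim \la y_3\ra^{-2}\lesssim 1$ by \eqref{estimc3}; in fact it can be expanded as a Neumann series $\sum_{k\ge 0}(-h_3 c^{0}_\pm)^k$ whose partial sums preserve the symbol-type estimates. The key step is then to verify that the quotient $\tilde c_\pm$ inherits the estimates \eqref{estimc3}--\eqref{estimc13}. This follows by the Leibniz rule: each derivative $\partial_{y_3}^k \partial_B^\ell$ applied to $\tilde c_\pm$ distributes over the numerator (which trivially satisfies \eqref{estimc3}--\eqref{estimc13}, being a difference of two functions that do) and over the denominator (where one uses that $\partial_{y_3}^k\partial_B^\ell (1+h_3 c^{0}_\pm)^{-1}$ is a polynomial in the derivatives of $c^{0}_\pm$ divided by powers of $1+h_3 c^{0}_\pm$, hence bounded by $C_{k\ell}\la y_3\ra^{-k} B^{-\ell}$ on the relevant range).

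There is no real obstacle; the only thing to check carefully is that the $B^{-\ell}$-type loss in the $\partial_B$ derivatives is not worsened by the quotient. Since $h_3 = \hb/B$ is bounded and since the error bounds for $c_\pm$ and $c^{0}_\pm$ are identical, each $\partial_B$ derivative falling on $(1+h_3 c^{0}_\pm)^{-1}$ produces at worst an extra factor of $B^{-1}$ from differentiating $c^{0}_\pm$, consistent with the weight $B^{-\ell}$ on the right-hand side of \eqref{estimc3}. The bound \eqref{estimc13} at $y_3=0$ follows from the same computation by restricting $y_3$, noting that the $\partial_{y_3}$ derivative acquires the expected factor $h_3^{-2/3}$ from differentiating the Airy argument $-h_3^{-2/3}\eta_3(y_3)$ through $c^{0}_\pm$ and $c_\pm$ individually, and these weights are preserved under the quotient. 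This completes the proof of the corollary.
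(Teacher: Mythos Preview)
Your argument is correct and is exactly the intended one: the paper states the corollary as an immediate consequence of Lemma~\ref{L53} and Corollary~\ref{C63} without writing out a proof, just as Corollary~\ref{C7} was stated as immediate from Lemma~\ref{L5} and Corollary~\ref{C6}. Your division $\tilde c_\pm=(c_\pm-c^{0}_\pm)/(1+h_3c^{0}_\pm)$ and the Leibniz/Neumann verification of the symbol bounds is precisely the computation one performs to justify that passage; the only (harmless) imprecision is the remark about the origin of the $h_3^{-2/3}$ at $y_3=0$, which is simply inherited from the bound \eqref{estimc13} already established for $c_\pm$ and $c^{0}_\pm$, rather than arising anew from the Airy argument.
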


Analogous to Lemma\,\ref{WtoAB}:
\bl\label{WtoAB3} We have %see p RA15 where alpha there is lambda+ here
$$A_3(y_3)\pm iB_3(y_3)=g(y_3)({\rm Ai}\pm i{\rm Bi})(-h_3^{-2/3}\eta_3(y_3))\sim \, \lambda_{3,\pm} \ y_3^{-\frac 12\mp\frac{ i}{2h_3}}\,e^{\mp iy_3^2/(2h_3)}\ \ \ (y_3\to +\infty)$$
where
$$\lambda_{3,+}=\overline{\lambda_{3,-}}=\pi^{-1/2}h_3^{1/6}e^{i\pi/4}\, (4e)^{-i/4h_3}\, e^{-iC_3/h_3}$$
with
$$C_3=\frac 12\,\delta\,\sqrt {{\delta}^{2}+1}+\frac 12\ln\left(\delta+\sqrt {{\delta}^{2}+1}\right)$$
\el

\begin{proof}The proof of Lemma\,\ref{WtoAB3} is almost the same as that of Lemma\,\ref{WtoAB}, contained in \S\ref{PfL5}. \end{proof}

Similar to Lemma\,\ref{L5}, and we the same proof, we have

\subsection{The regions with  $y_3\Le 0$}\label{y1neg3}
The same argument as in \S\ref{y1neg} gives:

\bl 

(i) For $y_3\Le 0$ eq. \eqref{eqnegbeta}  has two independent solutions of the form
$$\psi^\ell_{3,\pm}(y_3)=\left[A_3(-y_3)\mp i B_3(-y_3)\right]\,  \left(1+h_3 c^\ell_\pm(y_3;h_3,B)\right)$$
where $c^\ell_\pm$ satisfy \eqref{estimc3},\eqref{estimc13}.

(ii) In particular, the Weber equation \eqref{eqw3} also has two solutions $w^\ell_{3,\pm}$ of the form in (i), therefore
$$\psi^\ell_{3,\pm}=w^\ell_\pm\left(1+h_3 \tilde{c}^\ell_\pm(y_3;h_3,B)\right)$$
with $\tilde{c}^\ell_\pm$ satisfying \eqref{estimc3},\eqref{estimc13}.
\el

\

\subsection{Matching at $y_3=0$} As in \S\ref{Match0} we have
\bp\label{Prop133}
We have 
\bel{MonP3}
[\psi_{3,+}^\ell\ \psi_{3,-}^\ell]=[\psi_{3,+}\ \psi_{3,-}]\  N_3,\ \ {\rm with\ } N_3=(I+h_3R)M_3(I+h_3T)
\ee 
where $M_3$ is the monodromy matrix of the Weber equation \eqref{eqw3}, given by \eqref{monoWM3}, and the matrices $R,T$ have bounded entries in the parameters for $h_3{\lesssim 1}, %{\rm <Const.}, 
\beta<\beta_0$.
\ep

The proof is straightforward, using arguments similar to those in \S\ref{Match0}.

\subsection{The scattering matrix}\label{ScatMat3}

The Jost solutions of \eqref{eqf} satisfy \eqref{fpmr}, \eqref{fpml}.

{On the other hand, we work} back through the substitutions $f(\xi(y))=\sqrt{\xi'(y)}{\psi_2}(y)$, followed by \eqref{psi123}. Using \eqref{asydxi} and Corollary\,\ref{C73} we see that
$$ f_{\pm}^r(\xi(y))=K_\pm\, \sqrt{\xi'(y)}\,  \psi_{3,\pm}(y\sqrt{|\beta|})=K_{3,\pm}\, \sqrt{\xi'(y)}\,  w_{3,\pm}(y\sqrt{|\beta|})(1+O(y^{-2})),$$
where $$K_{3,\pm}= \frac{ E^{1/4} e^{\pm i\frac{\sqrt{E}C_{+}}{\hb}}}{\lambda_{3,\mp}\, |\beta|^{\frac 14\pm\frac{i\beta}{4\hb}}}.$$
Similarly, 
$$ f_{\pm}^\ell(\xi(y))=K_{3,\mp}^\ell\, \sqrt{\xi'(y)}\,  \psi_{3,\mp}^\ell(y\sqrt{|\beta|}) ,\ \text{with }K_{3,\pm}^\ell= \frac{ E^{1/4} e^{\pm i\frac{\sqrt{E}C_{-}}{\hb}}}{\lambda_{3,\mp}\, |\beta|^{\frac 14\pm\frac{i\beta}{4\hb}}}$$
We now use Proposition\,\ref{Prop133}, and calculating as in \S\ref{ScatMat}  it follows that 

$${[f_+^{\ell}\, f_-^{\ell}]}=[f_+^{r}\, f_-^{r}]\mathcal{M}_3\  \ \text{where }\mathcal{M}_3= {\left[\begin{array}{cc}  1/K_{3,+}& 0\\ 0 & 1/K_{3,-} \end{array}\right] \,N_3\, \left[\begin{array}{cc} 0 & K_{3,+}^{^\ell}\\ K_{3,- }^{^\ell }& 0\end{array}\right]}$$

We have $\mathcal{M}_3=(I+h_3\mathcal{R})\mathcal{M}_{0,3}(I+h_3\mathcal{T})  $
with $\mathcal{M}_{0,3}$ obtained by a straightforward calculation as

\bel{monof*}
\mathcal{M}_{0,3}=
\left[\begin{array}{cc} {p}e^{i\phi}\sqrt{1+A^2} &{-{q^{-1}}iA}\\
{i{q}A} & {p^{-1}}e^{-i\phi}\sqrt{1+A^2} \end{array}\right]
\ee
where
\bel{valsA3}
A=e^{\pi\beta/2\hb},\ \ e^{i\phi}=e^{i\phi_2}\left(\hb/2\right)^{i\beta/2\hb},\ \ \ 
\phi_2=\arg\Gamma\left(\frac 12+\frac{i\beta}{2\hb}\right)
\ee
and 
$${p=e^{ -i\frac{\sqrt{E}}{\hb}(C_{-}+C_+)}, \ \ q=e^{- i\frac{\sqrt{E}}{\hb}(C_{-}-C_+)}}=e^{i\frac{\sqrt{E}}{\hb}(C_+-C_-)}$$

Note that the entries $\mathcal{M}_{3,ij}$ of the monodromy matrix $\mathcal{M}_3$ are linked to the entries $\mathcal{M}_{0,3,ij}$ of $\mathcal{M}_{0,3}$ by $\mathcal{M}_{3,ij}=\mathcal{M}_{0,3ij}(1+h_3\mathcal{P}_{ij})$ where $\mathcal{P}_{ij}$ is multilinear in the entries of $\mathcal{R}$, $\mathcal{T}$ and bounded in the parameters.

The entries $\mathcal{S}_{ij}$ of the scattering matrix $\mathcal{S}$ can now be calculated as
$${\mathcal{S}_{11}}=\frac{\det\mathcal{M}_3}{\mathcal{M}_{3;22}}={e^{i\phi -i\frac{\sqrt{E}}{\hb}(C_{-}+C_+)}}\frac{1}{\sqrt{1+A^2}}\,(1+h_3e_{21})$$
and
$${{\mathcal{S}_{12}}}=-\frac{\mathcal{M}_{3;21}}{\mathcal{M}_{3;22}}={-ie^{i\phi{-i\frac{2\sqrt{E}C_-}{\hb}}}}\frac{1}{\sqrt{1+A^2}}\,(1+h_3e_{11}).$$
{(with notations as in \eqref{valsA3}).
Using \eqref{cppcm}, \eqref{valsA3} we obtain}
\bel{argS11n}
\phi -\frac{\sqrt{E}}{\hb}(C_{-}+C_+)=\frac{1}{\hb}(I_+(E)+I_-(E))+\phi_2+\frac{\beta}{2\hb}\left[ 1+\ln(2\hb/|\beta|)\right]
\ee
where $I_+(E),I_-(E)$ are defined by \eqref{defIpbn}. Therefore
\bel{S11n}
\mathcal{S}_{11}=e^{ \frac i{\hb}\int_{-\infty}^{+\infty}\left(\sqrt{E-V(\xi)}-\sqrt{E}\right)d\xi}\, e^{i\theta}\, \frac{1}{\sqrt{1+A^2}}\,\left(1+h_3e_{11}\right)
\ee
where 
\bel{defthetan}
\theta=\phi_2+\frac{\beta}{2\hb}\left[ 1+\ln(2\hb/|\beta|)\right],\ \ \ A=e^{\pi\beta/2\hb}
\ee
(cf. also \eqref{formtheta}).

Also, using \eqref{Ipbn}, \eqref{valsA3} we obtain
\bel{argS21n}
{\phi -\frac{2\sqrt{E}}{\hb}C_-=\frac{2}{\hb}I_-(E)+\phi_2 +\frac{\beta}{2\hb}\left[ 1+\ln(2\hb/|\beta|)\right]+\frac{2}{\hb}\gamma^{-1}\phi_\omega}
\ee
therefore
\bel{S21n}
\mathcal{S}_{21}=-ie^{ \frac{2i}{\hb}\int_{-\infty}^{0}\left(\sqrt{E-V(\xi)}-\sqrt{E}\right)d\xi}\  e^{\frac{i}{\hb}2\gamma^{-1}\phi_\omega}\  e^{i\theta}\,\frac{1}{\sqrt{1+A^2}}\,\left(1+h_3e_{11}\right)
\ee

\

{\em For $h_3$ small} we have, as in \S\ref{smallh1}, that $\theta=O(h_3)$. 

\newpage

\section{Appendix A: Some properties of Gegenbauer polynomials }\label{regatta}

 For reference on this classical topic see e.g.~\cite{Ismail}. 
The Rodrigues formula
$$\tilde{C}^{(\lambda)}_n(t)=(1-t^2)^{-\lambda+1/2}\frac{d^n}{dt^n}\left[ (1-t^2)^{n+\lambda-1/2}\right]$$
define the Gegenbauer (ultraspherical) polynomials ${C}^{(\lambda)}_n(t)$ up to a multiplicative factor: 
${C}^{(\lambda)}_n(t)=K(n,\lambda)\tilde{C}^{(\lambda)}_n(t)$ where
$$K(n,\lambda)=\frac{(-2)^n}{n!}\, \frac{\Gamma(n+\lambda)\,\Gamma(n+2\lambda)}{\Gamma(\lambda)\,\Gamma(2n+2\lambda)}$$
The polynomials ${C}^{(\lambda)}_n(t)$ are of degree $n$ and form a basis in the Hilbert space $L^2([-1,1])$ endowed with the measure $(1-t^2)^{\lambda-1/2}\,dt$.
In particular, ${C}^{(1)}_n(t)=U_n(t)$ are called Cebyshev polynomials of the second kind.

Using the Rodrigues formula it is easy to check that
\bel{eqCU}
(1-t^2)\, \frac{d}{dt}\tilde{C}^{(2)}_{n-1}(t)-3t\tilde{C}^{(2)}_{n-1}(t)=\tilde{U}_n(t)
\ee
and therefore
\bel{eqCUnt}
(1-t^2)\, \frac{d}{dt}{C}^{(2)}_{n-1}(t)-3t{C}^{(2)}_{n-1}(t)=\,-\frac{n(n+2)}{2}\, {U}_n(t)
\ee
or, in integral form,
\bel{cintU}
{C}^{(2)}_{n-1}(t)=\frac{-n(n+2)}{2}\, (1-t^2)^{-3/2}\int_{-1}^t \, {U}_n(\tau)\sqrt{1-\tau^2}\, d\tau
\ee
for all $n\Ge 1$.

Note that all polynomials ${C}^{(\lambda)}_{n}$ are even functions for $n$ even, and odd functions for $n$ odd, as it is easy to see from their recurrence formula:
$$n{C}^{(\lambda)}_{n}(t)=2t(n+\lambda-1){C}^{(\lambda)}_{n-1}(t)-(n+2\lambda-2){C}^{(\lambda)}_{n-2}(t),\ \ \ \ {C}^{(\lambda)}_{-1}=0, {C}^{(\lambda)}_{0}=1$$

\section{Appendix B}\label{WeberFunc}

This section uses notations and results of \cite{Olver1975}, \cite{Olver1959}, \cite{AS}, \cite{nist} to collect and deduce further results on the modified parabolic cylinder functions $E(a,x),\, E^*(a,x)$ and to derive their monodromy matrix. These functions are solutions of the Weber equation: 
\bel{Weber}
 \frac{d^2w}{dx^2}=\left(a-\frac{x^2}{4}\right)w
 \ee

Consider the real-valued, independent solutions $W(a,x),\ W(a,-x)$ of \eqref{Weber} and its complex solutions 
\begin{align*} E(a,x) &=k^{-1/2}W(a,x)+ik^{1/2}W(a,-x) \\
E^*(a,x) &=k^{-1/2}W(a,x)-ik^{1/2}W(a,-x)
\end{align*}
which satisfy
\bel{asyE}
\begin{array}{l}\displaystyle{E(a,x) = 2^{\frac 12}\,e^{i\frac{\pi}{4}+\frac{i}{2}\phi_2}\, x^{-\frac 12-ia}e^{ix^2/4}  }{(1+O(x^{-2}))}\ \ \ \ (x\to +\infty)\\
\displaystyle{E^*(a,x) =  2^{\frac 12}\,e^{-i\frac{\pi}{4}-\frac{i}{2}\phi_2}\, x^{-\frac 12+ia}e^{-ix^2/4}  }{(1+O(x^{-2}))}\ \ \ \ (x\to +\infty)\end{array}
\ee
where $\phi_2=\arg\Gamma(\frac 12+ia)$ and $k=\sqrt{1+e^{2\pi a}}-e^{\pi a}$.

The complex solutions $E(a,x)$, and its complex conjugate $E^*(a,x)$, are entire functions in $x$. They are also real-analytic in the parameter $a$ (this can be seen in the representation \eqref{symbE}, with  \eqref{subwh}, \eqref{defvas}).
If $a>0$ the functions $W(a,\pm x)$ have an oscillatory character for $|x|>2\sqrt{a}$, while between the turning points $x=\pm 2\sqrt{a}$ they have an exponential character. For $a<0$ there is no turning point and these functions are oscillatory on the whole real line.

\subsection{Approximation of $E(a,x)$ with the error behaving like a symbol} {Let $a_0>0$.}

Rewriting \eqref{asyE} as 
\begin{equation}
\label{symbE}
\begin{split}
 E(a,x)&=C(a)\, x^{-\frac 12-ia}e^{ix^2/4}  (1+e_E(x,a)), \\
 E^*(a,x)&=C^*(a)\, x^{-\frac 12-ia}e^{ix^2/4}  (1+e_E^*(x,a))
\end{split}
\end{equation}
the fact that the errors satisfy
\bel{errsymE}
|\partial_x^k e_E(x,a)|\Le C_k \la x\ra^{-2-k},\ \ \ |\partial_x^k e_E^*(x,a)|\Le C_k\la x\ra^{-2-k}\ \ {\text{for }x>0,\ |a|\Le a_0}
\ee
is seen by expressing them as Laplace transforms as follows.

\subsubsection{Estimate of $e_E(\sqrt s,a)$}\label{EstiEax} 

Substituting in \eqref{Weber} 
\bel{subwh}
w(x)=e^{ix^2/4} h(a,x^2),\ \ x=\sqrt{s}
\ee
we can calculate the Laplace representation  
\bel{defvas}
h(a,s)=\int_0^{+\infty}e^{-ps}(1+2ip)^{-\frac 34 -\frac{ia}{2}}p^{-\frac 34+\frac{ia}{2}}\, dp
\ee
In \eqref{defvas} we use the Taylor expansion with remainder:
$$(1+2ip)^{{-\frac 34 -\frac{ia}{2}}}=1+R(p)\, p$$ where $$R(p)=({-\frac 34 -\frac{ia}{2}})(1+2i\xi_p)^{{-\frac 74 -\frac{ia}{2}}}$$ (for some $\xi_p\in(0,p)$)
and obtain$$h(a,s)=\frac{\Gamma(\frac14+\frac{ia}2)}{s^{\frac14+\frac{ia}2}} + \int_0^{+\infty}e^{-ps}R(p) p^{\frac 14+\frac{ia}{2}}\, dp $$
which compared with \eqref{symbE} gives
$$h(a,s)=\frac{\Gamma(\frac14+\frac{ia}2)}{s^{\frac14+\frac{ia}2}}\, [1+e_E(\sqrt s,a)]$$
{where }
\bel{linkeh}
e_E(\sqrt s,a)=\frac{s^{\frac14+\frac{ia}2}}{\Gamma(\frac14+\frac{ia}2)} \int_0^{+\infty}e^{-ps}R(p) p^{\frac 14+\frac{ia}{2}}\, dp
\ee
Using the estimate $$|(1+2i\xi_p)^{{-\frac 74 -\frac{ia}{2}}}|=|1+2i\xi_p)|^{-\frac 74} \exp[\frac a2 \arg(1+2i\xi_p)]<e^{\pi a/2}$$ we obtain
$$
|e_E(\sqrt s,a)|  \Le   \frac{s^{\frac14} \, e^{\pi a/2}|-\frac 34 -\frac{ia}{2}|}{\big| \Gamma(\frac14+\frac{ia}2)\big|} \frac{\Gamma(5/4)}{s^{5/4}}\Le \rm{Const.}\ \frac 1s\ \ \text{for all }|a|<a_0$$

\subsubsection{Estimate of derivatives}\label{DexEax} 
Taking the derivative in $s$ in \eqref{linkeh} we obtain
$$\partial_s e_E(\sqrt s,a)=\frac 1s (\frac 14+\frac {ia}2) e_E(\sqrt s,a)- \frac{s^{\frac14+\frac{ia}2}}{\Gamma(\frac14+\frac{ia}2)} \int_0^{+\infty}e^{-ps}R(p) p^{\frac 54+\frac{ia}{2}}\, dp
$$
and the same method as in \S\ref{EstiEax} proves that $|\partial_s e_E(\sqrt s,a)|\Le$ Const. $s^{-2}$ for all $a$ with $|a|<a_0$.

In the same way, by induction, it can be shown that higher order derivatives satisfy $|\partial_s^k e_E(\sqrt s,a)|<C_ks^{-1-k}$.
We also note that $e_E(\sqrt s)$ depends analytically on $a$ for $a\in(-a_0,a_0)$.

\subsubsection{Final remark} The proof of \eqref{errsymE} is completed by noting that:

\begin{Remark} If a function behaves like a symbol in variable $s$ then it also behave like a symbol in variable $x=s^{\alpha}$.
\end{Remark}
Indeed, if $|\partial_s^k F(s)|\lesssim \la s\ra^{c-k}$ then, since $\partial_s\sim x^{1-1/\alpha}\partial_x$ then $$|\partial_x F|\lesssim x^{-k+k/\alpha}\la x^{1/\alpha}\ra ^{c-k}\sim \la x \ra^{c/\alpha-k}$$

\subsection{Some bounds on the function $E(a,x)$}\label{remonE}

(i) We have
\bel{sqxE}
\sqrt{x}|E(a,x)|<M_0\ \ \ \text{for all }x\Ge 1,\ \ |a|\Le a_0
\ee

 Indeed, from \eqref{symbE}, \eqref{errsymE} we see that  there is some $x_1$ (large enough, depending only on $a_0$, but not on $a$) so that \eqref{sqxE} holds with $M_0=2$ (or any $M_0>2^{1/2}$) for all $x>x_1$ and $|a|\Le a_0$ . Also, let $M_1$ be the maximum of $\sqrt{x}|E(a,x)|$ for $x\in[1,x_1]$ and $|a|\Le a_0$. Then let $M_0=\max\{2,M_1\}$.

(ii) $E(a,x)\ne 0$ for all $x$ and $a$.

In fact, the modulus $F=|E(a,x)|$ satisfies the differential equation $$F''-F^{-3}+(x^2/4-a)F=0$$ \cite{Miller} therefore $F$ has no zeroes.

(iii) $|E(a,x)|\Ge C>0$ for all $x\in[0,1]$ and $|a|\Le a_0$.

To see this,  let $C$ be the minimum of the continuous function $F$.

\subsubsection{Further results}

 We need the following estimate:
\bl\label{DaG0}
The function $G_0(a,x)=E'(a,x)/E(a,x)$ satisfies
$$|\partial_a^\ell G_0|\Le C_\ell\la x\ra$$
for all $a$ in a compact set.
\el

\begin{proof}
The result is obtained by expressing $G_0$ as a Laplace transforms as follows. 
$G_0$ satisfies the differential equation $$G_0'+G_0^2=a-x^2/4$$ and, by \eqref{asyE}, $G_0(x)\sim ix/2$ for $x\to\infty$. Denoting $$G_0(x)=x(\frac{i}{2}+u(x^2)),\quad x^2=s,\quad u(s)=(\mathcal{L}U)(s)=\int_0^\infty e^{sp}U(a,p)\, dp$$ then $U(a,p)$ satisfies the integral equation
$$\left( \frac i2-p \right) U  +\frac i4-\frac a2+\frac i2\,\int_{0}^{p}\!U  \left(a, q \right) \,{\rm d}q+U*U=0$$
which has a solution satisfying $|U(a,p)|<\exp(-\nu p)$ for some $\nu>0$ \cite{Duke} and which can be chosen independent of $a$
for $a$ in a compact set, say $|a|\Le a_0$. Since $\partial_a^\ell u=\mathcal{L}(\partial_a^\ell U)$ and $U$ is entire in $a$, Cauchy's integral formula (in $a$) shows that $|\partial_a^\ell U|<C_\ell \exp(-\nu p)$ and therefore, for $s\Ge s_0$, $$(\mathcal{L}\partial_a^\ell U)(s)<C_\ell \sup |U|\int_0^\infty e^{-(s-\nu)p}\, dp \lesssim (s-\nu)^{-1}$$
Therefore $\partial_a^\ell G_0$ has the same decay in $x$ as $G_0$. \end{proof}

\subsection{Monodromy of the modified parabolic cylinder functions} 

If $w(x)$ solves \eqref{Weber} then so does $w(-x)$, hence $E(a,-x)$ and $E^*(a,-x)$ are also solutions and \eqref{asyE} (with $x$ replaced by $-x$) gives their asymptotics for $x\to -\infty$.

Using the connection formula (\cite{AS}\S 19.18.3)
$$\sqrt{1+e^{2\pi a}}\, E(a,x)-e^{\pi a}E^*(a,x)=iE^*(a,-x)$$
implying also
$$\sqrt{1+e^{2\pi a}}\, E(a,-x)-e^{\pi a}E^*(a,-x)=iE^*(a,x)$$
we obtain the monodromy matrix: $\left[E(a,-x)\ E^*(a,-x)\right]=\left[E(a,x)\ E^*(a,x)\right]\, M_E$ where

\bel{monoE}
M_E=\left[\begin{array}{cc} -ie^{\pi a} &-i\sqrt{1+e^{2\pi a}}\\
i\sqrt{1+e^{2\pi a}} & ie^{\pi a} \end{array}\right]
\ee

\

The form of \eqref{Weber} used in \S\ref{Aicase} is \eqref{eqw}
%\bel{Weber2}
% \frac{d^2v}{dy^2}=\frac{1-y^2}{h_1^2}\ v(y)
% \ee
(linked to \eqref{Weber} by changing  $x=y_1\frac{\sqrt{2}}{ \sqrt{h_1}}=y\frac{\sqrt{2}}{ \sqrt{\hb}}$, $ a=\frac{1}{2h_1}=\frac{\beta}{2\hb}$, see \eqref{psi12}) for which we have the fundamental system
\bel{fdsysE}
E\left({\frac{\beta}{2\hb}},y\frac{\sqrt{2}}{ {\sqrt{\hb}}}\right):=\mu\phi_+(y),\ \ E^*\left({\frac{\beta}{2\hb}},y\frac{\sqrt{2}}{ {\sqrt{\hb}}}\right):=\overline{\mu}\phi_-(y)
\ee
where
\bel{defphiE}
\begin{array}{l} \phi_\pm(y)= y^{-\frac 12\mp i\frac{\beta}{2\hb}}\, e^{\pm iy^2/(2\hb)}(1+o(1))\ \ \ (y\to \infty) \\ \ \\
\mu=2^{\frac14 -i \frac{\beta}{4\hb}}\, \hb^{\frac14 +i \frac{\beta}{4\hb}}\, e^{i\frac{\pi}{4}+\frac{i}{2}\phi_2},\ \ \ \phi_1=\arg\Gamma\left(\frac 12+i\frac{\beta}{2\hb}\right)
\end{array}
\ee

With the notation
$$E\left({\frac{\beta}{2\hb}},-y\frac{\sqrt{2}}{ {\sqrt{\hb}}}\right):=\mu\phi_+^\ell(y),\ \ E^*\left({\frac{\beta}{2\hb}},-y\frac{\sqrt{2}}{ {\sqrt{\hb}}}\right):=\overline{\mu}\phi_-^\ell(y)$$
\eqref{monoE} yields $\left[\phi_+^\ell\ \  \phi_-^\ell\right]=\left[\phi_+\ \ \phi_-\right] \tilde{M}$ where

\bel{monoP}
\tilde{M}=\left[\begin{array}{cc} -ie^{\pi a} &\theta^{-1}\sqrt{1+e^{2\pi a}}\\
\theta\sqrt{1+e^{2\pi a}} & ie^{\pi a} \end{array}\right],\ \ \ {\rm with\ }a=\frac{\beta}{2\hb},\ \theta=(2/\hb)^{ \frac{i\beta}{2\hb}}e^{-i\phi_2}
\ee

\

{\em The monodromy for the solutions $w_\pm$ in \S\ref{Aicase}.}

The solutions $w_\pm$ given by Lemma\,\ref{L5}  are {linked} to $\phi_\pm$ of \eqref{defphiE} by Lemma\,\ref{WtoAB} and it follows that 
%see RMat0, 3,4
\bel{monoW}
[w_+^\ell\ w_-^\ell]=[w_+\ w_-]M
\ee
where
\bel{monoWM}
{M}=\left[\begin{array}{cc} -iA &i\theta_1\sqrt{1+A^2}\\
-i\theta_1^{-1}\sqrt{1+A^2} & iA \end{array}\right]\ \ \ \ {\rm where\ }A=e^{\pi\beta/(2\hb)},\ \theta_1=e^{i\phi_2}(2e\hb/\beta)^{i\beta/(2\hb)}
\ee

\vskip3cm

{\em The monodromy for the solutions $w_{3,\pm}$ in \S\ref{Bcase}.}

The form \eqref{eqw3} of Weber's equation used in \S\ref{Bcase} is linked to  \eqref{Weber} by changing  $x=y_3\frac{\sqrt{2}}{ \sqrt{h_3}}=y\frac{\sqrt{2}}{ \sqrt{\hb}}$, $ a=\frac{-1}{2h_3}$. Noting that $1/(2h_1)=-1/(2h_3)=\beta/(2\hb)$
equation  \eqref{eqw3} admits the same fundamental system \eqref{fdsysE}. 

The solutions $w_{3,\pm}$ given by Corollary~\ref{C63}  are {linked} to $\phi_\pm$ of \eqref{defphiE} by Lemma~\ref{WtoAB3}, yielding that
$$w_{3,\pm}=\lam_{3,\mp}\, |\beta|^{\frac 14\pm i\beta/4\hb} \phi_\pm$$
and it follows that
\bel{monoW3}
[w_{3,+}^\ell\ w_{3,-}^\ell]=[w_{3,+}\ w_{3,-}]M_3
\ee
where
\bel{monoWM3}
{M_3}=\left[\begin{array}{cc} -iA &i\theta_3\sqrt{1+A^2}\\
-i\theta_3^{-1}\sqrt{1+A^2} & iA \end{array}\right] 
\ee
with
\[ A=e^{\pi\beta/(2\hb)},\ \theta_3=e^{i\phi_2}(2e\hb/|\beta|)^{i\beta/2\hb}e^{2iC_3\beta/\hb}
\]

\section{Appendix C}\label{PropC8}

We collect here some results found in \cite{CDST} that we use, referring to symbol behavior of solutions of Volterra equations.

The following is Proposition C8 in \cite{CDST}.
\bp
 \label{prop:volterraAinox}
Fix $x_0 \in \mathbb{R}$, $\lambda_0>0$, $\alpha >-\frac12$, $\beta\geq \frac32 \gamma \geq 0$ and assume that $\beta-(\alpha+\frac12)\gamma\geq 0$. 
Let $c$ be a real--valued function that satisfies $c(\lambda)\geq x_0$ and $|c^{(\ell)}(\lambda)|\leq C_\ell \lambda^{-\gamma-\ell}$ for all $\lambda \in (0,\lambda_0)$, $\ell \in \mathbb{N}_0$.
Furthermore, assume that the (possibly complex--valued) functions $a(\cdot,\lambda)$, $b(\cdot,\lambda)$ satisfy the bounds
$$ |\partial_\lambda^\ell \partial_x^k a(x,\lambda)|\leq C_{k,\ell}\langle x \rangle^{-k}\lambda^{-\ell},\quad
|\partial_\lambda^\ell \partial_x^k b(x,\lambda)|\leq C_{k,\ell}\langle x \rangle^{\alpha-k}\lambda^{\beta-\ell} $$
for all $x_0\leq x \leq c(\lambda)$, $\lambda \in (0,\lambda_0)$ and $k,\ell \in \mathbb{N}_0$.
Set
$$ K(x,y,\lambda):=Bi(y)^2 b(y,\lambda)\int_x^y Bi(u)^{-2}a(u,\lambda)du $$
for $x_0\leq y \leq x \leq c(\lambda)$. 
Then the equation
$$ \varphi(x,\lambda)=\int_{x_0}^x K(x,y,\lambda)[1+\varphi(y,\lambda)]dy $$
has a unique solution $\varphi(\cdot,\lambda)$ that satisfies
$$ |\partial_\lambda^\ell \partial_x^k \varphi(x,\lambda)|\leq C_{k,\ell}\langle x \rangle^{\alpha+\frac12-k}\lambda^{\beta-\ell} $$
for all $x_0 \leq x \leq c(\lambda)$, $\lambda \in (0,\lambda_0)$ and $k,\ell \in \mathbb{N}_0$.
\ep

The following is Lemma D.5 in \cite{CDST}.

Consider equation (D.9) in  \cite{CDST}, namely
\begin{equation} 
\label{eq:Airypert}
\underbrace{\phi''(\zeta)=\nu^2 \zeta \phi(\zeta)}_{\mbox{Airy}}+\underbrace{V_2(\zeta)\phi(\zeta)}_{\mbox{pert.}} 
\end{equation}
where $V_2$ satisfies the bounds 
\bel{boundsV2}
|V_2^{(k)}(\zeta)|\leq C_k \langle \zeta \rangle^{-2-k}
\ee
for all $k \in \mathbb{N}_0$ and all $\zeta \in \mathbb{R}$.

\begin{Lemma}
 \label{lem:bessel-}
 For $\zeta \leq 0$, $\nu \geq 1$ there exists a fundamental system $\{\phi_\pm(\cdot, \nu)\}$ of Eq.~\eqref{eq:Airypert} of the form
 $$ \phi_\pm(\zeta, \nu)=[Ai(\nu^\frac23 \zeta)\pm iBi(\nu^\frac23 \zeta)][1+\nu^{-1}a_\pm(\zeta, \nu)] $$
 where the functions $a_\pm(\cdot, \nu)$ are smooth and $|a_\pm(\zeta,\nu)|\lesssim 1$ in the above range 
 of $\zeta$ and $\nu$.
 Furthermore, $a_\pm$ satisfy the bounds
 \bel{estimcRD}
  |\partial_\nu^\ell \partial_\zeta^k a_\pm(\zeta, \nu)|\leq C_{k,\ell}\langle \zeta\rangle^{-\frac32-k}\nu^{-\ell},\quad \zeta \leq -1 
  \ee
 as well as
 \bel{estimcRD1} |\partial_\nu^\ell a_\pm(0,\nu)|\leq C_\ell \nu^{-\ell},\quad 
 |\partial_\nu^\ell \partial_\zeta a_\pm(0,\nu)|\leq C_{k,\ell}\nu^{\frac23-\ell}
 \ee
 for all $\nu \geq 1$ and $k,\ell \in \mathbb{N}_0$.
\end{Lemma}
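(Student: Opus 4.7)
The plan is to treat \eqref{eq:Airypert} as a Jost-type perturbation of the rescaled Airy equation $\phi''=\nu^{2}\zeta\phi$, for which $\psi_{\pm}(\zeta):={\rm Ai}(\nu^{2/3}\zeta)\pm i{\rm Bi}(\nu^{2/3}\zeta)$ is a fundamental pair with Wronskian $-2i\nu^{2/3}/\pi$. I look for the perturbed solutions in the form $\phi_{\pm}=\psi_{\pm}(1+\nu^{-1}a_{\pm})$; substitution cancels the leading $\nu^{2}\zeta$ term and produces the conservation law $(\psi_{\pm}^{2}a_{\pm}')'=\nu V_{2}\psi_{\pm}^{2}(1+\nu^{-1}a_{\pm})$. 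Imposing $a_{\pm}(-\infty)=0$ and $(\psi_{\pm}^{2}a_{\pm}')(-\infty)=0$, integrating twice, and swapping the order of integration yields the Volterra equation
\begin{equation*}
a_{\pm}(\zeta)=\nu\int_{-\infty}^{\zeta}V_{2}(s)\,\psi_{\pm}(s)^{2}\bigl[1+\nu^{-1}a_{\pm}(s)\bigr]\int_{s}^{\zeta}\frac{d\sigma}{\psi_{\pm}(\sigma)^{2}}\,ds.
\end{equation*}

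Next, existence and the $\la\zeta\ra^{-3/2}$ size bound would come from a contraction in the Banach space of continuous functions on $(-\infty,0]$ with the norm $\|a\|:=\sup_{\zeta\Le-1}\la\zeta\ra^{3/2}|a(\zeta)|+\sup_{-1\Le\zeta\Le 0}|a(\zeta)|$. Classical Airy asymptotics give $|\psi_{\pm}|^{2}\sim\pi^{-1}\nu^{-1/3}|\zeta|^{-1/2}$ together with the explicit oscillatory structure $\psi_{\pm}(\zeta)^{2}\sim -\pi^{-1}\nu^{-1/3}|\zeta|^{-1/2}e^{\mp 2i\xi(\zeta)}$, where $\xi(\zeta)=\tfrac{2}{3}\nu|\zeta|^{3/2}+\pi/4$ has derivative of magnitude $\nu|\zeta|^{1/2}$. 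A single non-stationary-phase integration by parts against $\xi$ then produces $|\int_{s}^{\zeta}\psi_{\pm}(\sigma)^{-2}d\sigma|\lesssim\nu^{-2/3}$ uniformly in $s\Le\zeta\Le 0$. Combined with $|V_{2}|\lesssim\la s\ra^{-2}$, the outer integrand is bounded by $\nu\cdot\la s\ra^{-2}\cdot\nu^{-1/3}\la s\ra^{-1/2}\cdot\nu^{-2/3}=\la s\ra^{-5/2}$, whose integral is $O(\la\zeta\ra^{-3/2})$. The $\nu$-powers balance out cleanly, and contractivity of the linear-in-$a_{\pm}$ map is checked by exactly the same estimate.

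The symbol-type bounds \eqref{estimcRD}-\eqref{estimcRD1} are harvested by differentiating the Volterra identity. A single $\zeta$-derivative is read off from the outer limit together with the ODE, and the same inner-integral IBP supplies the advertised extra factor of $\la\zeta\ra^{-1}$; higher $\zeta$-derivatives follow by induction from $(\psi_{\pm}^{2}a_{\pm}')'=\nu V_{2}\psi_{\pm}^{2}(1+\nu^{-1}a_{\pm})$, with each differentiation of $V_{2}$ contributing one further power of $\la\zeta\ra^{-1}$. Each $\nu$-derivative is more delicate because $\partial_{\nu}$ acting on the Airy phase $\tfrac{2}{3}\nu|\zeta|^{3/2}$ brings down large growth in $|\zeta|$; this is precisely compensated by performing one additional non-stationary-phase IBP, which extracts a factor of $\nu^{-1}$ together with a matching inverse power of $|\zeta|$, yielding the claimed gain of $\nu^{-\ell}$. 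At $\zeta=0$ the Airy functions and their derivatives are of respective sizes $1$ and $\nu^{2/3}$, so \eqref{estimcRD1} follows by evaluating the Volterra identities for $a_{\pm}$ and $a_{\pm}'$ at the turning point.

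The main technical obstacle is establishing the smallness of the inner integral $\int_{s}^{\zeta}\psi_{\pm}^{-2}d\sigma$: its integrand has large amplitude of order $\nu^{1/3}|\sigma|^{1/2}$, so the requisite $\nu^{-2/3}$ factor must come purely from oscillatory cancellation against the Airy phase, with careful tracking of boundary terms. The contribution at $\sigma=-\infty$ vanishes in a manner consistent with the Jost normalization $a_{\pm}(-\infty)=0$, which is what makes the construction of the fundamental system unambiguous. Once this cancellation is in hand, the remainder of the argument is standard Volterra perturbation theory in the spirit of Proposition \ref{prop:volterraAinox}, which could in fact be repackaged to mechanize the derivative bounds after the correct weighted space has been chosen.
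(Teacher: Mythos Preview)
The paper does not prove this lemma: it is quoted verbatim from \cite{CDST} (Lemma~D.5 there) and only \emph{applied} in \S\ref{PFLem6} and \S\ref{Bcase}. So there is no in-paper argument to compare against; what you have written is a reconstruction of the standard proof, and it is essentially correct.

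Two comments on your execution. First, what you call ``the main technical obstacle''---the bound $\big|\int_{s}^{\zeta}\psi_{\pm}^{-2}\,d\sigma\big|\lesssim\nu^{-2/3}$---is in fact \emph{exact} and requires no asymptotics or integration by parts. Since $\psi_{\mp}=\overline{\psi_{\pm}}$ for real $\zeta$ and $W(\psi_{+},\psi_{-})=-2i\nu^{2/3}/\pi$, the Wronskian identity $(\psi_{\mp}/\psi_{\pm})'=\pm 2i\nu^{2/3}/(\pi\psi_{\pm}^{2})$ gives
\[
\int_{s}^{\zeta}\psi_{\pm}(\sigma)^{-2}\,d\sigma=\frac{\pi}{\pm 2i\nu^{2/3}}\Big[\frac{\psi_{\mp}}{\psi_{\pm}}\Big]_{s}^{\zeta},
\qquad \Big|\frac{\psi_{\mp}}{\psi_{\pm}}\Big|\equiv 1,
\]
so the inner integral is bounded by $\pi\nu^{-2/3}$ on the nose, with no boundary issues at $-\infty$. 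This also removes any worry about zeros of the integrand, since $|\psi_{\pm}|^{2}={\rm Ai}^{2}+{\rm Bi}^{2}>0$ everywhere. Second, the contraction constant for the linear part of your Volterra map is $O(\nu^{-1})$ (the $\nu$ in front and the $\nu^{-1}$ on $a_{\pm}$ cancel, and the kernel itself carries a $\nu^{-1}$ from the product $|\psi_{\pm}|^{2}\cdot\nu^{-2/3}$), so for $\nu$ near $1$ one should invoke the Volterra structure (convergence of the Neumann series regardless of operator norm) rather than a bare contraction; this is the role played by Proposition~\ref{prop:volterraAinox} and its analogues in \cite{CDST}. With these two clarifications your outline goes through.
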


\end{document}